\renewcommand{\implies}{\Rightarrow}
\renewcommand{\equiv}{\Leftrightarrow}
\newtheorem{metatwierdzenie}[theorem]{Metatheorem}
\newtheorem{pyta} {Question}
\author{
Andrzej Salwicki \\  Dombrova Research,\ \ \ salwicki at mimuw.edu.pl \\ Partyzantów 19, Łomianki PL-05-092}
\title{A new proof of   Euclid's algorithm}
\tikzset{
  frame/.style={
    rectangle, draw, 
    text width=12em, text centered,
    minimum height=3em,drop shadow,fill=lime!0,
    rounded corners,
  },
  line/.style={
    draw, -latex',rounded corners=3mm,
  }
}
\begin{document}
\label{tytul} 
 
 


\maketitle
\runninghead{A. Salwicki}{A new proof of Euclid's algorithm}

\begin{abstract}
Our main result is a new proof of correctness of Euclid's algorithm (\ref{E}). The proof is conducted in algorithmic theory of natural numbers $\mathcal{T}h_3$.    
 1\textdegree\  We are constructing a formula (\ref{H}) that expresses the halting property of the algorithm.
 2\textdegree\  Next, we analyze the structure of the algorithm (the proof makes no references to the computations of the algorithm). 
 3\textdegree\ Our proof makes use of inference rules of \textit{calculus of programs}  (i.e. algorithmic logic),  4\textdegree\ The only formulas accepted without proof are the axioms of the program calculus and axioms of algorithmic theory of natural numbers $\mathcal{T}h_3$. \\
 We consider also three other theories: first-order theory of natural numbers $\mathcal{T}h_0$, algorithmic theory of addition $\mathcal{T}h_1$, algorithmic theory of addition and multiplication $\mathcal{T}h_2$. It is easy to observe that $\mathcal{T}h_0 \subsetneq \mathcal{T}h_2$, $\mathcal{T}h_1 \subsetneq \mathcal{T}h_2$, $\mathcal{T}h_2 \subsetneq \mathcal{T}h_3$.  \\   

\begin{center}
 \begin{tikzpicture} 
 [scale=0.4]

\newcommand{\B}{(0,0) ++(45:2) circle (2)}
 
\fill[color=lightgray!65,even odd rule] (-5,-1) rectangle (5,5) (-8,-4) rectangle (8,8) ;
\fill[opacity=0.7,color=lightgray!85] (0,0) ++(135:2) circle (2) ;
\fill[opacity=0.4,color=lightgray!75] \B; 
\draw (-6.5,7 ) node{$\mathcal{T}h_3$};
\draw (0,-3) node{$\color{blue}{\mathcal{T}h_3\vdash H} \hspace*{1.8cm} \color{magenta}{\mathcal{T}h_0\not \vdash H}  $};
\draw (-3.8,4.3 ) node{$\mathcal{T}h_2$};
\draw (-0.5,0) ++(135:2) node{$\mathcal{T}h_0$};
\draw (0.5,0) ++(45:2) node{$\mathcal{T}h_1  $}; 
\end{tikzpicture}  \bigskip \\
\end{center}
We complete our result  showing that the  theorem on correctness of Euclid's algorithm can not be proved in any of  theories  $\mathcal{T}h_1, \mathcal{T}h_2, \mathcal{T}h_0$.
\end{abstract}
\tableofcontents
\section{Introduction}
History of Euclid's algorithm is over 2300 years long.
\begin{equation}\tag{E}\label{E}\underbrace{\left\{\begin{array}{l}\textbf{while }n \neq m \textbf{ do }\\ \quad \textbf{if }n>m   \textbf{ then }  n:=n-m   \textbf{ else }    m:=m-n   \textbf{ fi}\\ \textbf{ od} \end{array}\right\}}_{Euclid's\ algorithm} \end{equation}
An execution of  the algorithm, for initial values $n_0$ and $m_0$ results in the sequence of states. \\ 
If a state satisfies the condition $ n=m $ then the algorithm stops and $n$ is the greatest common divisor of $n_0$ and $ m_0 $.\\
 Euclid's algorithm is used in solutions of many   problems of programming and mathematics. It is applied  in various data structures. In some structures (e.g. the ring of integers, the field of rational numbers) the algorithm stops and in others (e.g. the field of real numbers, the planar geometry) its execution can be prolonged \textit{ad infinitum}. \\
The traditional proof of the theorem that the Euclid's algorithm halts and computes the greatest common divisor of two natural numbers is commonly accepted. Below you will find an analysis of this proof.  However, ...
\begin{center}
	being true does not mean being a proved theorem.
\end{center}
At present we remark that, the traditional  proof uses some assumptions and bases on the notion of execution of the algorithm. Therefore some \textit{semantics} is involved in the proof.\smallskip\\  
We are arguing that it is possible to present a purely syntactic proof of halting property of the algorithm.\\

To begin with, we need a formula that expresses the stopping property of Euclid's algorithm. In 1967 Erwin Engeler  c.f. \cite{EEnge}, found that, for every deterministic while program $M$ there is  an infinite disjunction (of open formulas) which does express the halting property of the program.
In the case of Euclid's algorithm the halting formula may look as follows
{\footnotesize \begin{equation}\nonumber \forall_{n\neq 0}\forall_{m\neq 0}\left(
\begin{array}{l}
n=m \,\lor\\
n=2m \lor 2n=m \,\lor \\
n=3m \lor 2n=3m \lor 3n=2m \lor m=3n\,\lor \\
\dots \lor \\
n=5m \lor 2n=5m \lor 3n=5m \lor 4n=5m \lor \\
\qquad 5n=m \lor 5n=2m \lor 5n=3m \lor 5n=4m \,\lor \\
\dots
\end{array} \right)
\end{equation}   }
We have appreciated the suggestion of Engeler and modified it in two ways.  First,  we 
  extended the language of first-order formulas by algorithms and algorithmic formulas, c.f. \cite{fal:as}, see also section II.3 of \cite{al:gm:as}. Next, we constructed algorithmic formulas that express various semantical properties of programs, e.g. correctness.  Our proposal is based on the remark that  the program itself is a finite expression. \\ The  formula (\ref{H}) below is the halting formula of the Euclid's algorithm.
{\footnotesize \begin{equation}\tag{H}\label{H}\underbrace{\begin{LARGE}\mbox{$\forall$}\end{LARGE}_{n \in N,n\neq 0}\,\begin{LARGE}\mbox{$\forall$}\end{LARGE}_{m \in N,m\neq 0}\left\{\begin{array}{l}\textbf{while }n \neq m \textbf{ do }\\ \ \textbf{if\, }n>m  \\ \textbf{\ \ \ then\, } n:=n-m  \\ \textbf{\ \ \ else\, }   m:=m-n  \\ \ \textbf{fi}\\ \textbf{od} \end{array}\right\}(n=m)}_{halting\ formula\ of\ Euclid's\ algorithm} \end{equation}  }
 Formula ($ H $)  reads informally: \textit{for every $n>0$ and $m>0$, algorithm (E) halts and the result of its computation satisfies formula } $n=m$. Hence, formula expresses the stop property of the program \ref{E}.   \\ 
Obviously, there are many formulas equivalent to the halting formula (H). Look at following formula
{\footnotesize \begin{equation}\tag{H'}\label{H'} \forall_{n \neq 0,m \neq 0 }\,  \begin{LARGE}\mbox{$\bigcup$}\end{LARGE} \left\{ \begin{array}{l}  \mathbf{if}\,n>m\,  \\ \ \ \mathbf{then}\, n:=n-m \, \\ \ \ \mathbf{else}\,   m:=m-n \, \\ \mathbf{fi} \end{array} \right\} (n=m)  \end{equation}} 
Formula (H') contains an \textit{iteration quantifier} \begin{Large}$\bigcup$\end{Large} and reads informally: \textit{for every $n>0$ and $m>0$, there exists an iteration of the program} \begin{center}
	\{\textbf{if} $n>m$ \textbf{then} $n:=n-m$ \textbf{else} $m:=m-n$ \textbf{fi} \}  
\end{center}
\textit{such that formula $n=m$ is satisfied afterwards}, i.e. the formula  \{\textbf{if} $n>m$ \textbf{then} $n:=n-m$ \textbf{else} $m:=m-n$ \textbf{ fi}\}$^i(n=m)$ holds .\\ You can convince yourself about correctness of the above proposition in section \ref{AL}. Note, there is no first-order formula that expresses the same meaning as formulas (H) or (H') do. \medskip\\
One can classify mathematical theories into three categories:
\begin{itemize}
	\item \textit{intuitive} -- this means that neither the language of theory nor its axioms are precisely described,
	\item \textit{axiomatized} -- since Euclid's \textit{Elements}, the majority of mathematical theories is \textit{axiomatized}, 
	\item \textit{formalized} -- this notion was introduced in foundations of mathematics when researchers encountered paradoxes such as Berry's paradox, cf.\cite{mm:hr:rs}.
\end{itemize}
    Till today, every one of known proofs of correctness of Euclid's algorithm is conducted in an intuitive number theory. Among others, it is assumed that the arguments are standard natural numbers. No axioms excluding non-standard elements accompany the proof. (Yes, we are aware that it is impossible in the frame of any first order theory. But are we limited to first-order logic?). The computations of the algorithm are studied, without mentioning that the notion of algorithm does not belong to the theory. The correspondence between the text of the algorithm and its computations is assumed in an intuitive way.  Nothing disturb us, we believe in the correctness of Euclid's algorithm. The proof is accepted by the vast majority of humans. \smallskip \\
However, is it an inter-subjective proof? Will it be accepted by computer? \\
It is worthwhile to think of creating such a proof that will be accepted by a proof-checker. 
\subsection{Do we, informaticians and  mathematicians need a new proof?}
 We are going to convince you that:
\begin{itemize}
 \item Proving properties of programs is like proving mathematical theorems. One needs: axioms, calculus of programs and well defined language. In other words, programmers, makers of specifications (of software),  verifiers of software    properties,   should accept algorithmic theories as the workplace. Also mathematicians will find many open problems in algorithmic theories of traditional data structures. For them also calculus of programs may be an interesting workplace.\\
 Such a theory contains classical theorems i.e. first-order formulas and algorithmic theorems as well. In the proofs of theorems on certain  programs we can use earlier theorems on other programs (and classical theorems also). In the proofs of some first-order theorems we can use some facts about programs. Section \ref{ATN} gives a flavor of such a theory, to be developed.
 \item Developers of algorithmic theories of numbers, of graphs, etc. should accept programs as ``first class citizens'' of the languages of these theories. Moreover, the language should contain formulas that express the semantical properties of programs. And, naturally, the reasoning should be done in a richer calculus of programs that contains the predicate calculus as its subset.
\end{itemize}
We believe that by constructing a new proof of correctness of Euclid's algorithm we shall gain a new insight into the nature of (algorithmic) theory of numbers. It is commonly accepted that algorithms play an important role in this theory. We need the tools adequate to the structure of analyzed texts. By this we mean that 1\textdegree\ the algorithms should be treated as ``\textit{first class citizens}'' of the theory, like the formulas are, 2\textdegree\ the semantical properties of algorithms should be expressed by the formulas, 3\textdegree\ these formulas should be the subject of studies having as aim their proof or a counterexample. \\ 
We expect that the proofs will be inter-subjective ones. It means, that everyone reading the proof will necessarily agree with the arguments. Finally, such a proof should be analyzable by a proof-checker \footnote{proof-checkers of algorithmic proofs do not   exists yet}. The fact of incompleteness of first-order theory of natural numbers should not be used as an indulgence for our laziness. \smallskip \\
The notion of halting formula may cause some doubts. We shall accept the following 
\begin{definition}
Let $K$ be a program.  Let   $\mathfrak{C}$ be a class of all (similar) structures that allow  to execute the program $ K $.    A \textit{halting formula} of the program $K$ is any formula $H(K)$ such that, for every data structure $\mathbb{A} \in \mathfrak{C}$,   the following conditions are equivalent
 \begin{itemize}
  \item [(\textit{i})] all executions of the program $K$ in data structure $\mathbb{A}$   are finite,
  \item[(\textit{ii})] the formula $H(K)$ is valid in structure $\mathbb{A}$.\hspace*{8.4cm} $\square$ 
 \end{itemize}
\end{definition}
We start with a suggestion: to analyze algorithms one should extend the language of mathematical theory. It means that the set of well formed expressions of a  language of a theory, should contain the properly defined set of algorithms (or programs) aside of sets of terms and formulas. See figure 1.  
In algorithmics the natural numbers and Euclid's algorithm play a significant role. (In the most used programming languages one encounters the structure of unsigned integers, i.e. natural numbers.) Moreover, in some programming languages we find a class\footnote{class is a kind of program module, cf. Appendix A} Int. The details of implementation can be hidden (even covered by patents). One may doubt, whether such a class is a proper model, and of which theory.  In the appendix A we show a class \textit{Cn} that satisfies the axioms of the theory of addition. We are also showing that for some objects of class \textit{Cn} as arguments,  the computations of Euclid's algorithm need not to be finite. 
It seems important that a theory in which we shall conduct the correctness proof of Euclid's algorithm will not have non-standard models, see sections \ref{firstPeano}, \ref{Th1},\ref{Peano},\ref{ATN}. Therefore,  we are seeking for a categorical axiomatisation of natural numbers. We can choose among: weak second order logic, logic of infinite disjunctions $\mathcal{L}_{\omega_1\omega}$ \cite{EEnge,CKarp} and algorithmic logic  $\mathcal{AL}$\cite{al:gm:as}. We prefer algorithmic logic for it has a system of axioms and inference rules. Moreover in the language of algorithmic logic we have ready at disposition  formulas that express the semantical properties of programs, such as termination, correctness, etc. \\
The next question which appears is: should we look for a new set of axioms of natural numbers or perhaps the set proposed by Peano will do. Consequently we shall consider three algorithmic theories and will find which of three is suitable for conducting the correctness proof of Euclid's algorithm. 
 The result of comparison we present in the table \ref{tabela1}.\\
\begin{table}[h]
\caption{Which theory allows to prove the halting formula \ref{H} of Euclid's  algorithm?}
\begin{center} 
\begin{tabular}{c}
\begin{tabular}{llp{2cm}p{5.5cm}}\hline
\textbf{Theory} & \textbf{Language}   \& \textbf{Logic} & \textbf{Axioms} & \textbf{Is there a proof?} \\ \hline \hline
$\mathcal{T}h_0$ & $\mathcal{L}$ -- 1-st order & Peano & \textit{No} - The formula does not appear in the language of first-order Peano Arithmetic.   There is nothing to prove. See also the stronger Fact \ref{hipo}. \\
$\mathcal{T}h_1$ & $\mathcal{L}_A$ -- algorithmic & Presburger & \textit{No} - the halting formula is independent from the axioms of this theory. A programmable counterexample is presented. See \ref{Findep1}.\\
$\mathcal{T}h_2$ & $\mathcal{L}_A$ -- algorithmic &  Peano & \textit{No} - the halting formula is independent from the axioms of this theory. See \ref{Findep2}.\\
$\mathcal{T}h_3$ & $\mathcal{L}_A$ -- algorithmic &  Algorithmic \newline Arithmetic & \textit{Yes} - there exists a proof. See theorem \ref{poprawny}. \\ \hline
\end{tabular}
\end{tabular} 
\end{center}\label{tabela1}
\end{table} 
The Euclid's algorithm is very important for mathematicians as well as for programmers. There is no doubt on it. However, the proofs of correctness of this algorithm do not satisfy us. Why? \\
One can split the goal of proving the correctness of an algorithm $A$ with respect to the given precondition $\alpha$ and postcondition $\beta$ onto two subgoals: 1) to prove that if some result exists then it satisfies the postcondition $\beta$, and 2) to prove that if the arguments satisfy the precondition $\alpha$ then the computation of the algorithm terminates. The first subgoal is easier. In the case of Euclid's algorithm, it suffices to remark that a common divisor of two numbers $n$ and $m$ is also a common divisor of $n$ and the difference $n-m$. From this the partial correctness of Euclid's algorithm w r.t. the pair of formulas ($n>0 \land m>0)$ and $(result = gcd(n,m))$  follows immediately. \\ 
The second subgoal is harder.   
We require that a proof of the correctness starts with axioms (either axioms of logic or axioms of natural numbers) and uses the inference rules of program calculus \textit{(i.e.} algorithmic logic) to deduce some intermediate formulas and to terminate with the halting formula.
All the proofs we know, do not satisfy this requirement. Let us take an example. In some monographs on theoretical arithmetic the proof goes as follow: 1\textdegree some intermediate formulas are proven, 2\textdegree a claim is made that   the scheme of induction is equivalent to the \textit{principle of minimum},  3\textdegree therefore for any natural numbers $n$ and $m$ the computation of Euclid's algorithm is finite and brings the $gcd(n,m)$. \\
Let us remark that in a proof like mentioned above :
\begin{itemize}
 \item [(\textit{i})] The claim on finiteness of descending sequences is true conditionally. The so called principle of minimum is valid only in the standard model. One has to assume that the algorithm works in the standard model of natural numbers. However, no elementary theory can guarantee that every of its models is isomorphic to the standard one. Moreover, the assumption on standard model is not written at all.
 \item [(\textit{ii})] The proof contains a phrase ``for any natural numbers $n $ and $m$ ...''. This is perfectly ok as long as the arguments of the algorithm are standard natural numbers. What happens if they come from another model of Presburger or Peano axioms? If Euclid's algorithm is executed in a non-standard model of Presburger arithmetic its computations may be infinite ones. See the Appendix A.
 \item[(\textit{iii})] The proof analyzes some sequences of numbers saying this is an execution  sequence of the algorithm. It means that the proof goes around. It would be preferable if the proof restrained from referring to semantics. 
\end{itemize}
\section{A short exposition of calculus of programs}\label{AL}
   
We shall discuss three algorithmic theories  $\mathcal{T}h_1,\mathcal{T}h_2,\mathcal{T}h_3$. All three theories have the same formalized language. All theories share the same consequence operation determined by the axioms and inference rules of program calculus, see appendix B. The theories have different sets of specific axioms.

All three algorithmic theories   $\mathcal{T}h_1,\mathcal{T}h_2,\mathcal{T}h_3$ have the same language $L=\langle A, \mathcal{WFF}\rangle $.
The alphabet  $A$ has the following subsets: set of functors $\varPhi =\{s, P, +, *, \stackrel{.}{\_} \}$, set of predicates $\varTheta = \{=, < \} $, set of logical operators $\{ \land, \lor, \implies, \lnot \}$, set of program operators \{:=, ;, \textbf{while, if} \}, and auxiliary symbols, parentheses and others.  The alphabet  $A$ contains also the set of variables.\\ 
Language  $L$ is an extension of the language of theory $\mathcal{T}h_0$, it admits programs and has a larger set of formulas. \smallskip\\
The reader familiar with the algorithmic logic \cite{al:gm:as} can safely skip the rest of this section. For the convenience of other readers we offer a few    words on the calculus of programs  and in the Appendix B we are listing  axioms and inference rules of the calculus.\\
A formalized logic $\mathcal{L}$ is determined by its language $L$ and the syntactic consequence operation $C$,  $\mathcal{L}=\langle L,C \rangle$. How to describe the difference between first-order logic FOL and algorithmic logic AL?
The language of algorithmic logic is a superset of the language of first-order logic and it is a superset of deterministic while programs, it includes algorithmic formulas and is closed by the  usual formation rules. In the language of AL we find all well formed expressions of FOL. The alphabets are similar. Moreover, the language of AL contains programs and the set of formulas is richer than the set of first-order formulas.  

As you can see the language $\mathcal{WFF}_{AL}$ contains programs. Moreover, the set of formulas $\mathcal{F}_{AL}$ is a proper superset of the set of first-order formulas $\mathcal{F}_{FOL} $. \smallskip  
\begin{center}\label{Rys1} 
\begin{tabular}{c} 
 \begin{tikzpicture}
 [scale=0.22]
 
\tikzstyle{logic}=[rectangle,draw,rounded corners=4pt]
\tikzstyle{next}=[->,>=latex,rounded corners=4pt]
\tikzstyle{aux}=[  ]
\node[logic] (P) at (20,4) {\begin{footnotesize}\begin{tabular}{l}propositional calculus PL\\\quad \ \ $\mathcal{WFF}_{PL}=\{ \mathcal{F}_{PL} \}$ \end{tabular}\end{footnotesize}};
\node[logic] (F) at (3,18) {\begin{footnotesize}\begin{tabular}{l}predicate calculus FOL\\\quad \ \ $\mathcal{WFF}_{FOL}=\{ \mathcal{T}_{FOL}\cup \mathcal{F}_{FOL}\}$ \end{tabular}\end{footnotesize}};
\node[logic] (A) at (20,34) {\begin{footnotesize}\begin{tabular}{l} program calculus AL\\\quad \ \ $\mathcal{WFF}_{AL}=\{ \mathcal{T}_{AL}\cup\mathcal{F}_{AL}\cup\mathcal{P}_{AL}\}$\\
$ \qquad \ \ \  
\mathcal{F}_{FOL}\varsubsetneq \mathcal{F}_{AL} $
\end{tabular}\end{footnotesize}};
\node[logic] (Y) at (37,18) {\begin{footnotesize}\begin{tabular}{l} calculus of program schemes PAL\\\quad \ \ $\mathcal{WFF}_{PAL}=\{ \mathcal{F}_{PAL}\cup \mathcal{P}_{PAL} \}$ \end{tabular}\end{footnotesize}};
\draw[next]  (P) -- (F) node[midway, right] { };
\draw[next]  (P) -- (Y) node[midway, right] { };
\draw[next]  (F) -- (A) node[midway, right] { };
\draw[next]  (Y) -- (A) node[midway, right] { }; 
\end{tikzpicture}   \\
Figure 1. Comparison   of logical calculi w.r.t. their $\mathcal{WFF}$ sets
\end{tabular}
\end{center}

The set $\mathcal{WFF}_{AL}$ of well formed expressions is the union of three sets: set of terms (programmers may say, set of arithmetical expressions), set of formulas (i.e. set of boolean expressions) and the set of programs.
\begin{definition}The set of \textit{terms} is the least set of expressions   $T$ such that
\begin{itemize}
 \item each variable $x$ is  an element of the set  $T$,
 \item if an expression   $\tau$ belongs to the set   $T$, then the expressions   $s(\tau),\ P(\tau)$ belong to the set $T$,
 \item if expressions   $\tau$ and $\sigma$ belong to the set  $T$, then the expressions    $(\tau + \sigma), \ (\tau * \sigma), \ (\tau\stackrel{.}{\_}\sigma)$ belong to the set $T$.\hspace*{5.4cm} $\square$
\end{itemize}
\end{definition}
The set of formulas we describe in two steps. 
\begin{definition}
 The set of \textit{open formulas} is the least set $F_O$ of expressions such that 
 \begin{itemize}
  \item if expressions $\tau$ and $\sigma$ are terms, then the expressions  $(\tau=\sigma),\ (\tau<\sigma)$ are open formulas,
  \item if expressions $\alpha$ and $\beta$ are open formulas, then the expressions $(\alpha \land \beta)\ (\alpha \lor \beta),\ (\alpha \implies \beta),\ \lnot \alpha$ are open formulas.\hspace*{5.4cm} $\square$
 \end{itemize}
\end{definition}
\begin{definition}
The set of \textit{programs}  (in the language of theories $\mathcal{T}h_1,\mathcal{T}h_2, \mathcal{T}h_3$) is the least set $\mathcal{P}$ of expressions, such that  
\begin{itemize}
 \item If $x$ is a variable and an expression   $\tau$ is a term,   then the expression $x:=\tau$ is a program. (Programs of this form are called assignment instructions. They are atomic programs.)
 \item if expressions $K$ and $M$ are programs, then the expression $\{K;\,M\}$ is a program,
 \item if expression $\gamma$ is an open formula and expressions $K$ and $M$, are programs, then the expressions   $\textbf{while }\gamma \textbf{ do }M \textbf{ od}$ \qquad and \qquad $\textbf{if }\gamma \textbf{ then }K  \textbf{ else }M \textbf{ fi}$ \qquad are programs. \hspace*{5.4cm} $\square$
\end{itemize}
\end{definition}
We use the braces  \{ \} to delimit a program.
\begin{definition} 
 The set of \textit{formulas} is the least set of expressions  $F$ such, that
 \begin{itemize}
  \item each open formula belongs to the set   $F$,
  \item if an expression  $K$ is a program and an expression   $\alpha$ is a formula, then the expression   $K\,\alpha$ is a formula,
  \item if an expression $K$ is a program and an expression   $\alpha$ is a formula, then expressions   $\bigcup K\,\alpha$  and $\bigcap K\,\alpha$ are formulas,
  \item if an expression $\alpha$ is a formula, then the expressions  $\forall_x\,\alpha$ and $\exists_x\,\alpha$ are formulas,
  \item if expressions $\alpha$ and $\beta$ are formulas, then the expressions  $(\alpha \land \beta)\ (\alpha \lor \beta),\ (\alpha \Rightarrow \beta),\ \lnot \alpha$ are formulas.\hspace*{5.4cm} $\square$
 \end{itemize}
\end{definition}
Following Tarski we associate to each well formed expression of the language a mapping. The meanings of terms and open formulas is defined in a classical way. Semantics of programs requires the notion of computation (i.e. of execution). For the details consult \cite{al:gm:as}. Two facts would be helpful in reading further:\begin{itemize}                                                                                                                                                                                                                                                                                                                             \item The meaning of an algorithmic formula $K\alpha$ in a data structure $\mathfrak{A}$ is a function from the set of valuations of variables into two-element Boolean algebra $B_0$ defined as follow
\[(K\alpha)_{\mathfrak{A}}(v)= \begin{cases}
   \alpha_{\mathfrak{A}}(K_{\mathfrak{A}} (v))  & \mbox{if the result}\ K_{\mathfrak{A}}(v)\ \mbox{of computation  at initial valuation}\  v\ \mbox{is defined}  \\
   \textbf{false} & \mbox{otherwise i.e. if the computation of K fails or loops endlessly}
                               \end{cases}
 \]
 This explains why the formula \ref{H} expresses the halting property of the program \ref{E}. \\
 Define $K^i\alpha$ by induction: $K^0 \alpha = \alpha $ and $K^{i+1}\alpha=KK^i\alpha$. \\ 
 We read the formula      $\bigcup K\,\alpha$  as \textit{there exists an iteration of program} $K$ \textit{such that } $K^i\alpha$ \textit{holds}, and $\bigcap K\,\alpha$ means \textit{for each iteration of program} $K$ \textit{formula} $K^i\alpha$ \textit{holds}.\\ The signs $\bigcup $ and $\bigcap$ are \textit{iteration quantifiers.} The meaning of these formulas is defined as follow.
 \begin{align*} & (\bigcup K\alpha)_{\mathfrak{A}}(v)=  
   l.u.b.\ \{K^i\alpha_{\mathfrak{A}} (v)\}_{i \in N} \\ 
  & (\bigcap K\alpha)_{\mathfrak{A}}(v)=  
   g.l.b.\ \{K^i\alpha_{\mathfrak{A}} (v)\}_{i \in N} 
 \end{align*}
\item The calculus of programs i.e. algorithmic logic, enjoys the property of completeness. For the theorem on completeness consult \cite{al:gm:as}.                                                                                                                                                                                                                                                                                                                              \end{itemize}

\section{Elementary Peano arithmetic $\mathcal{T}h_0$}\label{firstPeano}
This first-order theory has been studied by many researchers. Many variants of the theory are known c.f. \cite{oml:ag}. The specific axioms of theory $\mathcal{T}h_0$ can be found in section  \ref{Peano}.

It is obvious that the language of the theory does not contain algorithms, and hence it does not contain algorithmic formulas. Therefore the halting formula (\ref{H}) of Euclid's algorithm can not be formulated nor proved in first-order theory of natural numbers with the axioms of Peano. 

Still, a problem arises, is there a   theorem of theory $\mathcal{T}h_0$, equivalent to the formula  (\ref{H})?\medskip \\
Let $\mathfrak{C}$    be a class of all data structures such with two-argument relation < and two-argument operation -.
\begin{pyta}\label{kwestia}\textit{ Is there a first-order formula $\psi(m,n)$, such that, for every data structure  $A$ in $\mathfrak{C}$, \\(i) it expresses the following property:}  the computation of Euclid's algorithm for arguments $m$ and $n$ is finite -- \textit{in this case formula $\psi(m,n)$ evaluates to \textbf{true},\\  (ii) }the formula $\psi(m,n)$ rejects the infinite computations in the  data structure,  -- \textit{i.e. it assumes the value \textbf{false} if the computation on arguments m, n can be prolonged} ad infinitum, \textit{and \\(iii) the formula $\forall_m\forall_n\,\psi(m,n)$ is a theorem of Peano arithmetic}.\end{pyta}
An answer to the question is brought by  Fact \ref{hipo}.
\section{Algorithmic theory of addition $\mathcal{T}h_1$}  \label{Th1}
We consider an algorithmic theory, henceforth its language contains programs and algorithmic formulas.  Note, all axioms of this theory are first-order formulas!  
\begin{definition}
The set of \textit{specific axioms} of the theory   $\mathcal{T}h_1$ consists of the following  formulas:
\begin{eqnarray}  
  &&\label{Aksj1}\forall_x\, s(x) \neq 0    \\
  &&\forall_x \forall_y\, s(x)=s(y) \Rightarrow x=y    \\
  && \forall_x\, x+0=x \\
  && \forall_x \forall_y\, x+s(y)=s(x+y) \\
  && x < y \Leftrightarrow \exists_z y=x+s(z) \\
  && P(0)=0 \\
  && P(s(x)) = x \\
  && z \stackrel{.}{\_} 0 = z \\
\label{Aksj9}  && z \stackrel{.}{\_} s(x) = P(z \stackrel{.}{\_} x) \\
 &&\nonumber \mbox{and an infinite set of formulas built in accordance with} \\
 && \nonumber \mbox{ the following scheme of induction:  }\\
\label{ind}  && \varPhi(x/0) \land \forall_x\bigl(\varPhi(x) \implies \varPhi(x/s(x))\bigr) \implies \forall_x\,\varPhi(x)   
\end{eqnarray} 
The last line is a scheme of infinitely many axioms.  It is the scheme of induction. The expression   $\varPhi$ denotes an arbitrary first-order formula with a free variable $x$. The expression  $\varPhi(x/0)$ denotes a formula resulting from the expression $\varPhi$ by the replacement of all free occurrences of variable $x$ by constant $0$.  Similarly, the expression  $\varPhi(x/s(x))$ is the formula that results from $\varPhi$ by the simultaneous replacement of all free occurrences of variable $x$ by the term  $s(x)$.\hspace*{5.4cm} $\square$
\end{definition}
Our set of axioms  differs insignificantly from those considered by Presburger. cf.  \cite{Presb,Stans}. \\
Observe, that the formula \ref{H} satisfies requirements \textit{(i)} and \textit{(ii)} of Question \ref{kwestia}.
\begin{fact}
 The formula  \ref{H} is not a theorem of the theory   $\mathcal{T}h_1$.
 \[ \mathcal{T}h_1 \nvdash \mathrm{\ref{H}} \]
\end{fact}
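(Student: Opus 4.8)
The plan is to reduce the claim to a model-theoretic one by invoking the soundness half of the completeness theorem for the calculus of programs cited in Section~\ref{AL}: if $\mathcal{T}h_1 \vdash \mathrm{(\ref{H})}$, then $\mathrm{(\ref{H})}$ must be valid in every data structure that is a model of all the axioms of $\mathcal{T}h_1$. Hence it suffices to exhibit one model $\mathfrak{M}\models\mathcal{T}h_1$ in which $\mathrm{(\ref{H})}$ fails, i.e. which contains elements $n_0\neq 0$, $m_0\neq 0$ such that the computation of Euclid's algorithm $\mathrm{(\ref{E})}$ started at $n\mapsto n_0$, $m\mapsto m_0$ is infinite. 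By the semantic clause for algorithmic formulas recalled in Section~\ref{AL}, the value of $\{\textbf{while }n\neq m\textbf{ do}\dots\textbf{ od}\}(n=m)$ at that valuation is then $\textbf{false}$, so the universally quantified $\mathrm{(\ref{H})}$ is false in $\mathfrak{M}$, and soundness gives $\mathcal{T}h_1\nvdash\mathrm{(\ref{H})}$.

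To produce the model I would start from the standard structure $\mathbb{N}$ of the natural numbers with $s,P,+,*,\stackrel{.}{\_},<$ interpreted in the usual way ($P$ the predecessor truncated at $0$, $\stackrel{.}{\_}$ truncated subtraction). Every specific axiom $(\ref{Aksj1})$--$(\ref{Aksj9})$ of $\mathcal{T}h_1$ and every instance of the induction scheme $(\ref{ind})$ is true in $\mathbb{N}$. By the compactness theorem, the full first-order theory of $\mathbb{N}$ together with a fresh constant $c$ and the sentences $s^{n}(0)<c$ (for $n=0,1,2,\dots$, where $s^{n}$ is the $n$-fold successor) has a model $\mathfrak{M}$; since $\mathfrak{M}$ satisfies \emph{every} first-order sentence true in $\mathbb{N}$, in particular all axioms of $\mathcal{T}h_1$, it is a model of $\mathcal{T}h_1$, and the element $a:=c^{\mathfrak{M}}$ is non-standard: it exceeds the value of every numeral. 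Alternatively, Appendix~A provides an explicit \emph{programmable} such model --- the class \textit{Cn} satisfies the axioms of the theory of addition and contains objects on which Euclid's algorithm never halts --- and may be used in place of $\mathfrak{M}$.

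Finally I would trace the computation of $\mathrm{(\ref{E})}$ in $\mathfrak{M}$ at the valuation $v$ with $v(n)=a$ and $v(m)=s(0)^{\mathfrak{M}}$; both are $\neq 0$ by $(\ref{Aksj1})$. By an external (meta-level) induction on $k$ one shows that after $k$ passes through the loop body the current state is $n\mapsto a\stackrel{.}{\_}s^{k}(0)$, $m\mapsto s(0)^{\mathfrak{M}}$, with $a\stackrel{.}{\_}s^{k}(0)$ still larger than every numeral. At each step one uses that, for fixed standard $k$, the first-order sentences $\forall x\,\bigl(s^{k+1}(0)<x\Rightarrow s(0)<x\stackrel{.}{\_}s^{k}(0)\bigr)$ and $\forall x\,\bigl(s^{k+1}(0)<x\Rightarrow (x\stackrel{.}{\_}s^{k}(0))\stackrel{.}{\_}s(0)=x\stackrel{.}{\_}s^{k+1}(0)\bigr)$ are true in $\mathbb{N}$, hence hold in $\mathfrak{M}$: the loop test $n\neq m$ is thus satisfied and the ``then'' branch is taken at every stage, so $m$ is never modified and the computation runs forever. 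Therefore $\bigl(\{\textbf{while }n\neq m\textbf{ do}\dots\textbf{ od}\}(n=m)\bigr)_{\mathfrak{M}}(v)=\textbf{false}$, $\mathrm{(\ref{H})}$ is not valid in $\mathfrak{M}$, and $\mathcal{T}h_1\nvdash\mathrm{(\ref{H})}$.

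The step I expect to be the real obstacle is not the compactness argument but its \emph{programmable} realization promised by the paper: verifying that a concretely coded structure such as the class \textit{Cn} of Appendix~A genuinely satisfies \emph{all} axioms of $\mathcal{T}h_1$ --- in particular every instance of the induction scheme $(\ref{ind})$ --- while still admitting two objects on which the repeated-subtraction process does not terminate. Obtaining an honestly infinite computation (rather than merely a non-standard state reached ``in the limit'') also rests on the small but essential observation that a non-standard element minus a standard one is again non-standard, which is exactly what keeps the loop test alive at every finite stage.
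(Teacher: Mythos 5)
Your proof is correct and follows essentially the same route as the paper: the paper's own argument is precisely to falsify (\ref{H}) in a non-standard model of $\mathcal{T}h_1$ (the structure of Appendix~A) and conclude non-derivability from the soundness of the calculus of programs. Your version is in fact more explicit than the paper's one-line proof --- you correctly isolate the soundness direction (the paper loosely says ``completeness''), give a compactness construction of the non-standard model as an alternative to the programmable class \textit{Cn}, and actually trace the infinite computation on a non-standard argument, all of which the paper delegates to Appendix~A.
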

\begin{proof}
The formula \ref{H} is falsifiable in a non-standard model    $\mathfrak{N}$ of theory $\mathcal{T}h_1$, cf. Appendix A.  By completeness of algorithmic logic it follows that the formula is not a theorem of algorithmic theory  $\mathcal{T}h_1$.
\end{proof}
\begin{fact} \label{Findep1}
 The formula \ref{H} is independent of axioms  \ref{Aksj1} - \ref{ind}. 
\end{fact}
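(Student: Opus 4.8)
The statement asserts \emph{full} independence, i.e.\ that neither $\ref{H}$ nor its negation is derivable from $\ref{Aksj1}$--$\ref{ind}$, so the plan is to supply the two half-results separately. One half is already in hand: the preceding Fact establishes $\mathcal{T}h_1 \nvdash \ref{H}$, because the non-standard model $\mathfrak{N}$ of Appendix~A validates all of $\ref{Aksj1}$--$\ref{ind}$ while falsifying $\ref{H}$; by soundness of algorithmic logic (part of the completeness result cited in Section~\ref{AL}), a formula refuted in some model of the axioms cannot be among their consequences.

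For the other half, $\mathcal{T}h_1 \nvdash \lnot\ref{H}$, I would exhibit a model of the axioms in which $\ref{H}$ holds, the natural candidate being the standard structure $\mathbb{N} = \langle N; 0, s, P, +, *, \stackrel{.}{\_}, =, < \rangle$, with $s$ the successor, $P$ the total predecessor satisfying $P(0)=0$, $\stackrel{.}{\_}$ truncated subtraction, and $+,*$ the usual operations. Checking that $\mathbb{N}$ satisfies each specific axiom $\ref{Aksj1}$--$\ref{Aksj9}$ and each instance of the induction scheme $\ref{ind}$ is routine (the symbol $*$ is not constrained by these axioms, so any standard reading is admissible). That $\mathbb{N} \models \ref{H}$ is precisely the classical termination claim for Euclid's algorithm $\ref{E}$: for initial values $n_0,m_0>0$ the guard $n>m$ of the conditional guarantees that truncated subtraction coincides with ordinary subtraction along the run, the predicate $n>0 \land m>0$ is a loop invariant, and $\max(n,m)$ strictly decreases at each pass, so the loop is left after finitely many iterations in a state satisfying $n=m$. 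Hence $\lnot\ref{H}$ is refuted in a model of $\mathcal{T}h_1$ and, again by soundness, is not a theorem of $\mathcal{T}h_1$.

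Combining the two halves gives $\ref{Findep1}$. I do not anticipate a genuine mathematical obstacle; the only point deserving comment is methodological. The verification $\mathbb{N} \models \ref{H}$ does use the principle of minimum (well-foundedness of $<$ on $N$), the very ingredient the introduction criticises when it is smuggled into a proof carried out \emph{inside} an incomplete first-order theory. Here it is invoked only to settle a property of a single fixed structure, which is entirely legitimate and in no way pre-empts the syntactic proof to be given later in $\mathcal{T}h_3$ (Theorem~\ref{poprawny}); indeed the same two models show, more strongly, that $\mathcal{T}h_1$ together with $\ref{H}$ is a consistent but non-conservative extension of $\mathcal{T}h_1$. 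The care to be taken in writing this up is simply to keep the semantic argument about $\mathbb{N}$ cleanly separated from any appeal to provability.
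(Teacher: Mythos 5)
Your two-model argument is correct and is essentially the paper's intended approach: the paper proves the first half ($\mathcal{T}h_1 \nvdash \mathrm{H}$) via the non-standard model of Appendix~A in the immediately preceding Fact, and leaves the second half ($\mathcal{T}h_1 \nvdash \lnot\mathrm{H}$, via the standard model where Euclid's algorithm terminates) implicit, stating Fact~\ref{Findep1} without further proof. Your explicit write-up of that second half, including the remark that invoking well-foundedness to verify a property of the single fixed structure $\mathbb{N}$ is unobjectionable, fills the gap correctly.
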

Remark that there exist a programmable model of this theory.
\section{Algorithmic Peano arithmetic -- theory  $\mathcal{T}h_2$} \label{Peano}
The language of this theory contains additional functor * of multiplication.
\begin{definition}
The set of axioms of the next theory  $\mathcal{T}h_2$ consists of formulas  \ref{Aksj1} - \ref{Aksj9} and two formulas \ref{aks11}, \ref{aks12} that define the operation of multiplication. Moreover, the set of axioms contains all the formulas built in accordance with scheme of induction,
\begin{eqnarray}
  && \label{aks11} \forall_x \, x*0= 0 \\
  && \label{aks12} \forall_x \forall_y\, x*s(y)=(x*y)+x \\
 && \nonumber \mbox{scheme\ of\ induction:  }\\
  && \nonumber \varPhi(x/0) \land \forall_x\bigl(\varPhi(x) \implies \varPhi(x/s(x))\bigr) \implies \forall_x\,\varPhi(x)   
\end{eqnarray}
As in the preceding section, we shall limit the scheme of induction: the formula $\varPhi(x)$ must be a first-order formula.\hspace*{5.4cm} $\square$ \end{definition}
Note, that all axioms of theory $\mathcal{T}h_2$ are first-order  formulas.
\begin{fact}
The theory $\mathcal{T}h_2$ has (at least) two non-isomorphic models. One is the standard model $\mathfrak{N}_0$ of theory $\mathcal{T}h_0$, another is a non-standard model $\mathfrak{N}$ of the same theory. \end{fact}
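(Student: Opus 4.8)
The plan is to build on the remark, stressed just above, that every axiom of $\mathcal{T}h_2$ is a first-order sentence: the defining axioms \ref{Aksj1}--\ref{Aksj9}, \ref{aks11}, \ref{aks12} contain no program operators, and the induction scheme is restricted to first-order $\varPhi$. So, read as a set of sentences, $\mathcal{T}h_2$ is nothing but a definitional extension of ordinary first-order Peano arithmetic, the relation $<$ and the functions $P$ and $\stackrel{.}{\_}$ being explicitly defined from $0,s,+,*$. The first step is therefore to pin down that a data structure $\mathbb{A}$ of the appropriate signature is a model of the algorithmic theory $\mathcal{T}h_2$ precisely when its first-order part satisfies these first-order axioms. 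The only non-trivial inclusion here is that satisfaction of the axioms already forces satisfaction of every algorithmic consequence, which is exactly the soundness half of the completeness theorem for the calculus of programs quoted in Section \ref{AL}.

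Next I would observe that the standard structure $\mathfrak{N}_0 = \langle \mathbb{N}; 0, s, P, +, *, \stackrel{.}{\_}, =, < \rangle$ is a model of $\mathcal{T}h_2$: the Peano axioms for $0$ and $s$, the recursion equations for $+$ and $*$, the defining equations for $P$ and $\stackrel{.}{\_}$, and every first-order instance of induction all hold in $\mathbb{N}$ by the ordinary principle of induction. This is the routine part. To obtain a second, non-isomorphic model I would apply the compactness theorem. Add a fresh constant $c$ to the language and consider $\mathcal{T}h_2 \cup \{\, c \neq \underline{n} : n \in \mathbb{N} \,\}$, where $\underline{n}$ abbreviates the numeral $s^{(n)}(0)$. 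Any finite subset of this set is satisfied in $\mathfrak{N}_0$ by interpreting $c$ as a sufficiently large integer, so by compactness the whole set has a model; forgetting $c$ yields a structure $\mathfrak{N}$ which, by the first step, is a model of $\mathcal{T}h_2$ (and a fortiori of $\mathcal{T}h_0$).

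Finally, $\mathfrak{N} \not\cong \mathfrak{N}_0$: any isomorphism $h \colon \mathfrak{N}_0 \to \mathfrak{N}$ preserves $0$ and $s$, hence maps the set $\{0,1,2,\dots\}$ onto $\{\, \underline{n}^{\mathfrak{N}} : n \in \mathbb{N}\,\}$, which omits the element $c^{\mathfrak{N}}$; so $h$ cannot be surjective, a contradiction. (One could instead invoke the upward Löwenheim--Skolem theorem to obtain an uncountable model, or Gödel's first incompleteness theorem for first-order arithmetic.) I do not foresee a real obstacle; the only delicate point is the first step — making precise that enriching the consequence operation from first-order logic to algorithmic logic does not shrink the class of models — after which the existence of non-standard models of $\mathcal{T}h_2$ reduces to the classical fact about first-order Peano arithmetic.
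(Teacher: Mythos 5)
Your argument is correct, and it supplies a proof where the paper offers none: the Fact is stated bare, the only accompanying remark being that (by Tennenbaum's theorem) no non-standard model of $\mathcal{T}h_2$ can be recursive, in contrast to $\mathcal{T}h_1$, for which Appendix A exhibits a concrete programmable non-standard model (the class $Cn$). Your two-step decomposition is the right one. The first step --- that a structure is a model of the algorithmic theory $\mathcal{T}h_2$ exactly when it satisfies the first-order axioms \ref{Aksj1}--\ref{Aksj9}, \ref{aks11}, \ref{aks12} and the (first-order-restricted) induction scheme --- is the genuinely delicate point, and you dispose of it correctly: being a model means validating the axioms, and soundness of the program calculus (including its $\omega$-rules $R_3$--$R_5$) guarantees that every algorithmic theorem then holds as well, so enlarging the consequence operation does not shrink the model class; this is precisely the mechanism the paper itself relies on when it infers $\mathcal{T}h_2 \nvdash \mathrm{H}$ from the existence of a non-standard model. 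The second step, compactness with a fresh constant $c$ and the sentences $c \neq \underline{n}$, together with the observation that any isomorphism from $\mathfrak{N}_0$ must send $n$ to $\underline{n}^{\mathfrak{N}}$ and therefore misses $c^{\mathfrak{N}}$, is the classical argument and is watertight. It is worth noting that for $\mathcal{T}h_2$ a non-constructive route such as compactness (or upward L\"owenheim--Skolem, or incompleteness) is essentially forced, since Tennenbaum's theorem rules out any explicit programmable construction of the kind used for $\mathcal{T}h_1$; your choice of method is therefore not merely convenient but necessary. One cosmetic caveat: calling $\mathcal{T}h_2$ a \emph{definitional} extension of Peano arithmetic is slightly loose, since $P$ and $\stackrel{.}{\_}$ are introduced by recursion equations rather than explicit definitions, but the induction scheme forces their uniqueness in every model, and in any case the compactness argument does not depend on this description.
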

Let us remark that there is no recursive (i.e. programmable) non-standard model of the theory,c.f.\,\cite{Tenn}. \\
Despite the fact, that we extended the language adding the operator of multiplication and the set of axioms adding the definition of the operation of multiplication, the new theory does not contain a theorem on correctness of Euclid's algorithm. It is so because, in the non-standard model $\mathfrak{N}$ Euclid's algorithm has infinite computations for non-standard elements. \\
We recall, that the formula \ref{H} satisfies requirements \textit{(i)} and \textit{(ii)} of Question \ref{kwestia}.
\begin{fact}
 The formula  \ref{H} is not a theorem of the theory   $\mathcal{T}h_2$.
 \[ \mathcal{T}h_2 \nvdash \mathrm{\ref{H}} \] 
\end{fact}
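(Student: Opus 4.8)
The plan is to imitate the argument already used for $\mathcal{T}h_1$: exhibit a model of $\mathcal{T}h_2$ in which Euclid's algorithm has an infinite computation, so that the halting formula \ref{H} is false in that model, and then invoke the completeness (hence soundness) of the calculus of programs. Concretely, I would let $\mathfrak{N}$ be any non-standard model of first-order Peano arithmetic with multiplication; such a model exists by the Gödel--Rosser incompleteness theorem, or simply by compactness applied to the axioms \ref{Aksj1}--\ref{Aksj9}, \ref{aks11}, \ref{aks12} together with the first-order induction scheme and a fresh constant $c$ subject to $c \neq 0,\ c \neq s(0),\ c \neq s(s(0)),\dots$. By the Fact stated just above — that $\mathcal{T}h_2$ has a non-standard model — $\mathfrak{N}$ validates every specific axiom of $\mathcal{T}h_2$, since those axioms are all first-order.

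Next I would exhibit the divergent computation. Pick a non-standard element $c$ of $\mathfrak{N}$ and run Euclid's algorithm \ref{E} from the valuation $n \mapsto c$, $m \mapsto s(0)$; both values are $\neq 0$, so this valuation is covered by the quantifiers of \ref{H}. A routine external induction on $i$ shows that after $i$ passes through the loop body the value of $n$ equals $c \stackrel{.}{\_} \underbrace{(s(0)+\dots+s(0))}_{i}$, which is still non-standard (non-standard elements are closed under finitely many predecessors) and hence $\neq s(0) = m$; the loop test $n \neq m$ therefore never fails and the computation of \ref{E} at this valuation does not terminate. By the semantics of algorithmic formulas recalled in Section \ref{AL} — $(K\alpha)_{\mathfrak{A}}(v)=\textbf{false}$ whenever the computation of $K$ at $v$ loops endlessly — the formula $\{\textbf{while }n\neq m\textbf{ do}\dots\textbf{od}\}(n=m)$ evaluates to $\textbf{false}$ at this valuation, so its universal closure \ref{H} is false in $\mathfrak{N}$, i.e. $\mathfrak{N}\not\models\ref{H}$.

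Finally I would close with completeness of algorithmic logic: if $\mathcal{T}h_2\vdash\ref{H}$ then, by soundness, \ref{H} would hold in every model of $\mathcal{T}h_2$, in particular in $\mathfrak{N}$, contradicting the previous paragraph. Hence $\mathcal{T}h_2\nvdash\ref{H}$.

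I expect the only delicate point to be the verification that $\mathfrak{N}$ genuinely is a model of $\mathcal{T}h_2$, but this is exactly the content of the Fact already invoked, so the remaining work is just the external induction showing the iterates $c\stackrel{.}{\_} i\cdot s(0)$ stay non-standard. The presence of multiplication in the language is harmless: the comparison and monus operations that drive the loop depend only on the additive fragment, and the induction scheme of $\mathcal{T}h_2$ is still restricted to \emph{first-order} formulas, so — unlike the algorithmic induction available in $\mathcal{T}h_3$ — it cannot rule out the non-standard elements responsible for the divergence.
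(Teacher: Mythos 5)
Your proposal is correct and follows essentially the same route as the paper: the paper (which leaves this Fact without a displayed proof, relying on the surrounding discussion and the preceding Fact that $\mathcal{T}h_2$ has a non-standard model) likewise argues that Euclid's algorithm has infinite computations on non-standard elements of such a model, so \ref{H} fails there and by soundness of the program calculus cannot be a theorem. Your only additions are welcome bookkeeping — the explicit compactness argument for the model, the concrete divergent run at $n\mapsto c$, $m\mapsto s(0)$, and the observation (matching the paper's citation of Tennenbaum) that unlike for $\mathcal{T}h_1$ the counterexample model cannot be programmable.
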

\begin{fact}\label{Findep2}
 The formula \ref{H} is independent of axioms  \ref{Aksj1} - \ref{aks12}. 
\end{fact}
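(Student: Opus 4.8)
The plan is to prove independence by producing two models of $\mathcal{T}h_2$ — one in which the halting formula \ref{H} is valid and one in which it is false — and then to invoke the soundness direction of the completeness theorem for algorithmic logic cited in Section~\ref{AL}. Recall that independence of \ref{H} from \ref{Aksj1}--\ref{aks12} means exactly that neither $\mathcal{T}h_2 \vdash \mathrm{\ref{H}}$ nor $\mathcal{T}h_2 \vdash \lnot\,\mathrm{\ref{H}}$. For the two models I would use the Fact stated just above: $\mathcal{T}h_2$ possesses the standard model $\mathfrak{N}_0$ and a non-standard model $\mathfrak{N}$ (the structure built in Appendix~A). Both are models of the full axiom set of $\mathcal{T}h_2$, i.e. of \ref{Aksj1}--\ref{Aksj9}, of the multiplication axioms \ref{aks11}, \ref{aks12}, and of every instance of the first-order induction scheme.

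Next I would evaluate \ref{H} in each model. In $\mathfrak{N}_0$: for arbitrary positive standard naturals $n_0,m_0$ the computation of \ref{E} is finite — along it the quantity $n+m$ runs through a strictly decreasing sequence of values in $\mathbb{N}$, so it cannot continue forever — and the value of $n$ at the terminating state satisfies $n=m$; by the semantic clause for algorithmic formulas $K\alpha$ recalled in Section~\ref{AL} this yields $\mathfrak{N}_0 \models \mathrm{\ref{H}}$. In $\mathfrak{N}$: by the construction of Appendix~A there exist (non-standard) elements $n_0,m_0$ on which the computation of \ref{E} never reaches a state with $n=m$, so $(\mathrm{\ref{H}})_{\mathfrak{N}}$ evaluates to \textbf{false}, i.e. $\mathfrak{N} \not\models \mathrm{\ref{H}}$ — this is precisely the content of the Fact $\mathcal{T}h_2 \nvdash \mathrm{\ref{H}}$ proved just above.

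Finally I would close the argument by soundness. Algorithmic logic is sound for the class of all data structures in which its programs can be executed (this is part of the completeness theorem cited in Section~\ref{AL}): every theorem of $\mathcal{T}h_2$ is valid in every model of $\mathcal{T}h_2$. Since $\mathfrak{N}$ is a model of $\mathcal{T}h_2$ and $\mathfrak{N} \not\models \mathrm{\ref{H}}$, we obtain $\mathcal{T}h_2 \nvdash \mathrm{\ref{H}}$. Since $\mathfrak{N}_0$ is a model of $\mathcal{T}h_2$ and $\mathfrak{N}_0 \models \mathrm{\ref{H}}$, hence $\mathfrak{N}_0 \not\models \lnot\,\mathrm{\ref{H}}$, we obtain $\mathcal{T}h_2 \nvdash \lnot\,\mathrm{\ref{H}}$. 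Together these two negative statements are the asserted independence of \ref{H} from \ref{Aksj1}--\ref{aks12}.

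The main obstacle is not in this argument itself but in the auxiliary fact it rests upon: the construction, in Appendix~A, of a non-standard structure $\mathfrak{N}$ that is genuinely a model of \emph{all} of $\mathcal{T}h_2$ — in particular one must verify that $\mathfrak{N}$ satisfies the multiplication axioms \ref{aks11}, \ref{aks12} and every first-order induction instance, not merely the addition fragment $\mathcal{T}h_1$ — while still admitting an infinite computation of Euclid's algorithm. Granting that, the present proof is routine: the truth of \ref{H} in $\mathfrak{N}_0$ is the classical termination argument for the subtractive form of Euclid's algorithm, and the passage from a falsifying or verifying model to non-provability is just the soundness half of completeness.
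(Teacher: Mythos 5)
Your overall strategy --- exhibit one model of $\mathcal{T}h_2$ in which \ref{H} is valid and one in which it is false, then pass through soundness to get both $\mathcal{T}h_2 \nvdash \mathrm{\ref{H}}$ and $\mathcal{T}h_2 \nvdash \lnot\,\mathrm{\ref{H}}$ --- is exactly the argument the paper leaves implicit (the Fact is stated without a separate proof; the surrounding prose about the standard model $\mathfrak{N}_0$ and a non-standard model $\mathfrak{N}$ in which Euclid's algorithm has infinite computations is precisely this two-model reasoning). The reading of ``independent'' as non-provability of both \ref{H} and its negation, the evaluation of \ref{H} in $\mathfrak{N}_0$ by the classical termination argument, and the use of the semantic clause assigning \textbf{false} to a non-halting $K\alpha$ are all correct.

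There is, however, one concrete error in how you source the falsifying model. You identify $\mathfrak{N}$ with ``the structure built in Appendix~A''. That structure (the class $Cn$, i.e.\ the system $\mathfrak{M}$ of pairs $\langle k,x\rangle$) is presented only as a model of the addition theory $\mathcal{T}h_1$, and it is programmable by construction; but the paper itself remarks --- this is Tennenbaum's theorem --- that $\mathcal{T}h_2$ has \emph{no} recursive non-standard model. So the Appendix~A structure cannot serve as (or be effectively expanded to) the non-standard model of $\mathcal{T}h_2$ you need, and your closing caveat that ``one must verify that $\mathfrak{N}$ satisfies the multiplication axioms'' is not a routine verification but an impossibility for that particular structure. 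The repair is straightforward and is what the paper intends: every specific axiom of $\mathcal{T}h_2$ (including \ref{aks11}, \ref{aks12} and the restricted induction scheme) is a first-order formula, so any non-standard model of first-order Peano arithmetic --- which exists by compactness or by the incompleteness theorem, though not effectively --- is automatically a model of $\mathcal{T}h_2$; in such a model, running Euclid's algorithm on a non-standard $n$ with $m=1$ decrements $n$ forever without ever reaching a state satisfying $n=m$, so \ref{H} evaluates to \textbf{false} there. With $\mathfrak{N}$ obtained this way rather than from Appendix~A, your proof goes through.
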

Now, we are able to answer the question \ref{kwestia} of section \ref{firstPeano}. Observe that the theory $\mathcal{T}h_2$ is richer and stronger that the theory $\mathcal{T}h_0$. It is richer for its set of $WFF$ is a proper superset of $WFF$ set of theory $\mathcal{T}h_0$. It is stronger for it is based on bigger set of logical axioms and inference rules. Both theories have the same set of specific axioms, namely the axioms of Peano arithmetic.  Both theories  share the same models.  
Note, if the richer theory $\mathcal{T}h_2$  does not contain a theorem on correctness of Euclid's algorithm, how can such a theorem appear in the poorer and weaker theory $\mathcal{T}h_0$? We shall write down this observation as the following 
\begin{fact}\label{hipo}
 Among theorems of first-order Peano arithmetic there is no formula that expresses the same meaning   as formula  \ref{H}. 
\end{fact}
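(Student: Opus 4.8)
The plan is to make rigorous the informal remark that precedes the statement; the only ingredients needed are a non-standard model in which Euclid's algorithm diverges and soundness of first-order logic. First I would fix the meaning of the phrase ``expresses the same meaning as formula \ref{H}''. Since \ref{H} is an algorithmic formula while a hypothetical witness $\varphi$ would be a first-order formula in the language $\mathcal{L}$ of $\mathcal{T}h_0$, the only meaningful reading is the semantic one supplied by the Definition of a halting formula: $\varphi$ must itself be a halting formula of Euclid's algorithm \ref{E} over the class $\mathfrak{C}$, i.e. for every $\mathbb{A}\in\mathfrak{C}$,
\[ \mathbb{A}\models\varphi \quad\Longleftrightarrow\quad \text{every execution of \ref{E} in }\mathbb{A}\text{ is finite}. \]
(Any weaker reading is vacuous: if $\varphi$ only had to agree with \ref{H} in the standard model, the trivial theorem $0=0$ of $\mathcal{T}h_0$ would already qualify.) So I would assume, toward a contradiction, that some first-order $\varphi$ is at once a theorem of $\mathcal{T}h_0$ and a halting formula of \ref{E} in the sense just described.

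Next I would invoke the non-standard model. As recalled in Section~\ref{Peano} and constructed in Appendix~A, the Peano axioms have a non-standard model $\mathfrak{N}$; equipped with the relation $<$ and the (PA-definable) operation $\stackrel{.}{\_}$ it lies in $\mathfrak{C}$, and its reduct to $\mathcal{L}$ still satisfies every first-order theorem of $\mathcal{T}h_0$. By Appendix~A there are non-standard $n_0,m_0\in\mathfrak{N}$ on which the run of \ref{E} never reaches a state with $n=m$, so \ref{E} has an infinite execution in $\mathfrak{N}$; hence not every execution of \ref{E} in $\mathfrak{N}$ is finite, and the assumed halting-formula equivalence forces $\mathfrak{N}\not\models\varphi$. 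But $\mathfrak{N}$ is a model of the Peano axioms and $\varphi$ is a first-order theorem of $\mathcal{T}h_0$, so soundness of first-order logic gives $\mathfrak{N}\models\varphi$ --- a contradiction. Hence no such $\varphi$ exists, which is exactly the assertion of the Fact.

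One can also package this through $\mathcal{T}h_2$, in the spirit of the remark before the statement: since $\mathcal{T}h_0\subsetneq\mathcal{T}h_2$, any first-order theorem $\varphi$ of $\mathcal{T}h_0$ is also a theorem of $\mathcal{T}h_2$; were $\varphi$ equivalent to \ref{H} in every model of $\mathcal{T}h_2$, completeness of algorithmic logic would yield $\mathcal{T}h_2\vdash(\varphi\equiv\ref{H})$ and hence $\mathcal{T}h_2\vdash\ref{H}$, contradicting the independence Fact~\ref{Findep2}.

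I expect the single delicate point to be the convention adopted in the first paragraph --- settling what ``expresses the same meaning as \ref{H}'' should be taken to mean --- since everything after it is a two-line use of soundness. A secondary care point is to check that $\mathfrak{N}$, with $<$ and $\stackrel{.}{\_}$, genuinely belongs to $\mathfrak{C}$ and that the divergent computation of Appendix~A is a computation of precisely the program \ref{E}, so that the halting-formula equivalence applies to it verbatim; both are routine once the construction of Appendix~A is available.
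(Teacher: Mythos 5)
Your argument is correct and is essentially the paper's own: the paper justifies Fact~\ref{hipo} by precisely the comparison with $\mathcal{T}h_2$ that you give as your second packaging, and your primary direct argument (a first-order theorem of Peano arithmetic is true in a non-standard model in which Euclid's algorithm diverges, hence cannot be a halting formula in the sense of the paper's definition) is the same idea with the $\mathcal{T}h_2$ scaffolding removed. One small correction: the non-standard model you need is not the one of Appendix~A --- that structure models only the addition theory $\mathcal{T}h_1$, and by Tennenbaum's theorem (which the paper cites) no recursive non-standard model of full Peano arithmetic exists --- but the non-standard model of $\mathcal{T}h_2$ asserted in Section~\ref{Peano} serves, and the rest of your reasoning is unaffected.
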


\section{Algorithmic theory of natural numbers $\mathcal{T}h_3$} \label{ATN}
The set of specific axioms of the theory   $\mathcal{T}h_3$ contains the following formulas:  
\begin{align}
\tag{I}\label{I} & \color{black}{\forall_x\, s(x) \neq 0}  &  \\
\tag{M}\label{M} & \color{black}{\forall_x \forall_y\, s(x)=s(y) \Rightarrow x=y}  &  \\
\tag{S}\label{standard} & \color{black}{\forall_x\, \{y:=0; \textbf{while }y \neq x \textbf{ do }y:=s(y) \textbf{ od} \}(x=y)} &   \\
\nonumber & \makebox[13cm]{ \dotfill} &\\ 
\tag{A}\label{add} & x+y  \stackrel{df}{=}\left\{\begin{array}{l}t:=0; w:=x; \\ \textbf{while }t\neq y\textbf{ do }t:=s(t); w:=s(w) \textbf{ od}\end{array}\right\}w  &  \medskip\\
\tag{L}\label{less} & x<y \stackrel{df}{\equiv}\left\{\begin{array}{l} w:=0; \\ \textbf{while }w\neq y\land w\neq x \textbf{ do }\\ \quad w:=s(w) \\ \textbf{ od}\end{array}\right\}(w=x \land w\neq y) & \medskip\\
\tag{P}\label{predec}  & P(x)\stackrel{df}{=}\left\{\begin{array}{l}\begin{array}{l}w:=0;\\ 
\textbf{if }x \neq 0 \textbf{ then } \\ 
\quad \textbf{while } s(w)\neq x\textbf{ do }  w:=s(w) \textbf{ od} \\
\textbf{fi}\end{array} \end{array}\right\}w  & \medskip\\
\tag{O}\label{odejm} & x\stackrel{.}{\_}y \stackrel{df}{=}\left\{\begin{array}{l}w:=x; t:=0;\\ \mathbf{while }\ t\neq y\ \mathbf{ do }\ t:=s(t); w:=P(w)\ \mathbf{ od} \end{array}\right\}w &
\end{align}
The third of axioms  [\ref{standard}] is an algorithmic, (not a first-order), formula. It states that every element is a standard natural number.  \\
In addition to these three formulas [\ref{I},\,\ref{M},\,\ref{standard}] we assume four more axioms   [\ref{add},\,\ref{predec},\,\ref{odejm},\,\ref{less}] that are defining operations: \textit{addition, predecessor, subtraction}  (\textit{respectively} $+,\, P,\, \stackrel{.}{\_} $) and ordering relation  $<$. 
Axioms A, P, O that define operations are shorter versions of longer formulas, e.g.
$   x+y=z  \Leftrightarrow \{t:=0; w:=x; \textbf{while }t\neq y\textbf{ do }t:=s(t); w:=s(w) \textbf{ od}\}(w=z)$ . Observe that $ z=  \{t:=0; w:=x; \textbf{while }t\neq y\textbf{ do }t:=s(t); w:=s(w) \textbf{ od}\}z $ since the variable $z$ does not occur in this program and the program always halt. \\  
This theory $\mathcal{T}h_3$ contains a proof of the halting formula   (H) of the Euclid's algorithm (E). Why?   1\textdegree) The theory is categorical: \textit{every model $\mathfrak{M}$ of this theory is isomorphic to the standard model $\mathfrak{N}_0$} of natural numbers. Hence the theory is complete.  2\textdegree) Computations of Euclid's algorithm in the structure   $\mathfrak{N}_0$ are finite. This follows from the traditional proof of correctness of Euclid's algorithm. 3\textdegree)  Hence, the formula (H) is valid in each model of the theory   $\mathcal{T}h_3$. 4\textdegree) Therefore, there exists a proof of formula H and the formula   is a  theorem of the theory   $\mathcal{T}h_3$.  \\
We believe that it is instructive to write the proof of halting formula (\ref{H}). Below we are developing the proof \footnote{of the correctness of Euclid's algorithm in algorithmic theory of numbers}. We show its detailed (and formalizable) version.    
\begin{itemize}
\item We start by showing that all axioms of the theory   $\mathcal{T}h_2$ are theorems of the theory $\mathcal{T}h_3$. 
\item We shall prove also a couple of properties that occur in the traditional proof of Euclid's algorithm. 
 We omit the proofs of those properties of greatest common divisor, that are theorems of the theory   $\mathcal{T}h_0$. E.g. the formula  
$\bigl ((n>m \land m>0) \implies gcd(n,m)=gcd(n-m,m) \bigr )$ has a proof in the elementary theory of Peano. 
\item We deduce halting property of algorithm (\ref{E}) from the halting property of another program \ref{standard}.
\end{itemize}
\subsection{Scheme of induction}
 In this subsection we are  proving the scheme of induction from the axioms of algorithmic theory of natural numbers. 
\begin{lemma} The following formula is a theorem of the theory $\mathcal{T}h_3$.
 \[\mathcal{T}h_3 \vdash \color{black}{\{y:=0\}\bigcup\{y:=s(y)\}(x=y)}\]
\end{lemma}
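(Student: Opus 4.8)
The plan is to obtain the Lemma directly from axiom~\ref{standard} --- the only axiom of $\mathcal{T}h_3$ that asserts the halting of a loop --- by recasting its \textbf{while}-loop into iteration-quantifier form. Throughout, let $W$ abbreviate the program $\textbf{while }y\neq x\textbf{ do }y:=s(y)\textbf{ od}$ and let $M$ abbreviate the assignment $\{y:=s(y)\}$. First I would strip off the leading assignment: the composition axiom of the calculus of programs gives $\{y:=0;\,W\}(x=y)\equiv\{y:=0\}\bigl(W(x=y)\bigr)$, so axiom~\ref{standard} yields $\mathcal{T}h_3\vdash\{y:=0\}\,W(x=y)$. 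It is therefore enough to prove $\mathcal{T}h_3\vdash W(x=y)\implies\bigcup M\,(x=y)$: monotonicity of the program operators then lets one prefix $\{y:=0\}$ to both sides of this implication, and modus ponens against the previous formula produces $\mathcal{T}h_3\vdash\{y:=0\}\bigcup M\,(x=y)$, which is the claim.

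The core observation is that $W(x=y)$ and $\bigcup M\,(x=y)$ obey the same unfolding. The unfolding axiom for the iteration quantifier gives $\bigcup M\,(x=y)\equiv(x=y)\lor M\bigl(\bigcup M\,(x=y)\bigr)$, whence both $(x=y)\implies\bigcup M\,(x=y)$ and $M\bigl(\bigcup M\,(x=y)\bigr)\implies\bigcup M\,(x=y)$. The unfolding axiom for \textbf{while} gives $W(x=y)\equiv\bigl((y=x)\land(x=y)\bigr)\lor\bigl((y\neq x)\land M\bigl(W(x=y)\bigr)\bigr)$, and here the simplifying point is that the first disjunct is just $x=y$ while the guard $y\neq x$ decorating the second disjunct is harmless: whenever $x=y$ the first disjunct already holds, so $(y\neq x)\land M\xi$ and $M\xi$ are interchangeable inside the disjunction. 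Hence $W(x=y)$ is again a fixed point of $\xi\mapsto(x=y)\lor M\xi$. Now I would apply the induction-style inference rule of Appendix~B for the \textbf{while} instruction, with loop-exit condition $x=y$ and candidate postcondition $\bigcup M\,(x=y)$; its premises reduce to $(x=y)\implies\bigcup M\,(x=y)$ and $(y\neq x)\land M\bigl(\bigcup M\,(x=y)\bigr)\implies\bigcup M\,(x=y)$, both already secured above, and the rule delivers $W(x=y)\implies\bigcup M\,(x=y)$.

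The step I expect to be the genuine obstacle is matching the abstract \textbf{while}-rule of Appendix~B to this particular loop. Depending on its precise formulation one may have to route through the standard identity $(\textbf{while }\gamma\textbf{ do }N\textbf{ od})\alpha\equiv\bigcup\{\textbf{if }\gamma\textbf{ then }N\textbf{ fi}\}(\lnot\gamma\land\alpha)$ --- which here specialises to $W(x=y)\equiv\bigcup\{\textbf{if }y\neq x\textbf{ then }y:=s(y)\textbf{ fi}\}(x=y)$ --- and then verify that under the iteration quantifier the conditional program may be replaced by the bare assignment $M$, because once $x=y$ holds the conditional acts as the identity while the postcondition $x=y$ is already true, so the extra guard tests never affect whether some iterate satisfies $x=y$. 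The handful of degenerate subformulas arising in the simplifications (such as $0=s(0)$) are refuted by axioms~\ref{I} and~\ref{M}. It is worth underlining that the derivation uses neither the induction scheme --- this Lemma is precisely the tool from which that scheme is to be derived below --- nor any talk of computations: only axiom~\ref{standard} and the purely syntactic axioms and rules of the calculus of programs take part.
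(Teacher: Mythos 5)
Your proposal is correct and follows essentially the same route as the paper: both pass from axiom~(S) to the claim via the algorithmic-logic equivalence between the \textbf{while}-loop formula and $\bigcup$ applied to the guarded conditional $\{\textbf{if }y\neq x\textbf{ then }y:=s(y)\textbf{ fi}\}$, followed by removal of the guard under the iteration quantifier with postcondition $x=y$, and then modus ponens. The only difference is one of detail: the paper simply cites these two equivalences as known theorems of AL, whereas you sketch justifications for them (your fixed-point/induction-rule digression is looser than the rest --- the rule $R_3$ of Appendix~B is an $\omega$-rule, not a two-premise rule --- but the fallback you describe via the standard identity is exactly the paper's argument).
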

\begin{proof}
The following equivalence is a theorem of algorithmic logic cf. \cite{al:gm:as} p. 62.
 \[\vdash \begin{array}{l} \{y:=0;  \mathbf{while}\,y \neq x \,\mathbf{do}\, y:=s(y) \, \mathbf{od}  \}(x=y)
  \Leftrightarrow  \\ \{y:=0\}\bigcup \{ \mathbf{if}\,y \neq x  \,\mathbf{then}\,y:=s(y)\,  \mathbf{ fi} \}(x=y)  \end{array}\]
 Another theorem of algorithmic logic is the following equivalence
 \[\vdash \begin{array}{l} \{y:=0\}\bigcup\{y:=s(y)\}(x=y)  \Leftrightarrow \\ \{y:=0\}\bigcup\{\textbf{if }y \neq x \textbf{ then }y:=s(y)\textbf{ fi}\}(x=y).\end{array}\]
 By propositional calculus we have
 \[\vdash \begin{array}{l}\{y:=0;  \textbf{while }y \neq x \textbf{ do }y:=s(y) \textbf{ od} \}(x=y) \Leftrightarrow \\ \{y:=0\}\bigcup\{y:=s(y)\}(x=y).\end{array}\]
 By modus ponens we obtain
 \[\mathcal{T}h_3 \vdash  \{y:=0\}\bigcup\{y:=s(y)\}(x=y).\]
\hspace*{12cm}\end{proof}
\begin{lemma}\label{lemat2}
The following equivalences are theorems of algorithmic logic. 
\begin{align}\nonumber & \vdash \color{black}{\{y:=0\}\bigcup\{y:=s(y)\}\alpha(y) \equiv \{x:=0\}\bigcup\{x:=s(x)\}\alpha(x)}\\
\nonumber & \vdash \color{black}{\{y:=0\}\bigcap\{y:=s(y)\}\alpha(y) \equiv \{x:=0\}\bigcap\{x:=s(x)\}\alpha(x)}\end{align}
\end{lemma}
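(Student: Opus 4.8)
The plan is to prove both equivalences \emph{levelwise} in the iteration index and then lift the result through the iteration quantifiers $\bigcup$ and $\bigcap$. Fix $i\in N$ and write $s^i(0)$ for the $i$-th numeral. By the assignment axiom of algorithmic logic, $\{z:=\tau\}\beta\equiv\beta(z/\tau)$ for $\tau$ substitutable; iterating it, the formula $\{y:=0\}\{y:=s(y)\}^i\alpha(y)$ reduces, by $i+1$ applications, to the open formula $\alpha\bigl(y/s^i(0)\bigr)$, and in exactly the same way $\{x:=0\}\{x:=s(x)\}^i\alpha(x)$ reduces to $\alpha(x)\bigl(x/s^i(0)\bigr)$. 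By the standing convention that $\alpha(x)$ denotes $\alpha(y)$ with every free occurrence of $y$ replaced by $x$ (and $x$ not occurring in $\alpha(y)$), these two open formulas are literally identical, so
\[\vdash\ \{y:=0\}\{y:=s(y)\}^i\alpha(y)\ \equiv\ \{x:=0\}\{x:=s(x)\}^i\alpha(x)\]
holds for every $i\in N$ as an instance of propositional calculus over provably equal terms.

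To lift this I would use the axioms and rules governing the iteration quantifiers: $\vdash K^i\beta\implies\bigcup K\,\beta$ for every $i$, together with the rule ``from $\vdash K^i\beta\implies\gamma$ for all $i\in N$ infer $\vdash\bigcup K\,\beta\implies\gamma$''; and dually $\vdash\bigcap K\,\beta\implies K^i\beta$ with the rule ``from $\vdash\gamma\implies K^i\beta$ for all $i$ infer $\vdash\gamma\implies\bigcap K\,\beta$''. Since $\{y:=0\}$ is a total deterministic instruction it commutes with the propositional connectives and with the iteration quantifier, so $\{y:=0\}\bigcup\{y:=s(y)\}\alpha(y)$ is provably equivalent to the $\bigcup$ of the family $\{y:=0\}\{y:=s(y)\}^i\alpha(y)$, and likewise $\{y:=0\}\bigcap\{y:=s(y)\}\alpha(y)$ to the $\bigcap$ of the same family; the same holds for the $x$-program. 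Combining these with the levelwise equivalences of the first paragraph, both displayed equivalences of the lemma follow by the two iteration rules and propositional reasoning, the second being entirely dual to the first.

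The one delicate point — and the step I expect to be the real obstacle — is the commutation of the leading assignment $\{y:=0\}$ past the iteration quantifier without variable capture: this is precisely the ``renaming of an auxiliary variable'' phenomenon and must be justified from the substitution rules of the calculus rather than taken for granted. A shortcut that sidesteps this bookkeeping is to argue semantically: from $(\bigcup K\alpha)_{\mathfrak{A}}(v)$ being the least upper bound of $\{(K^i\alpha)_{\mathfrak{A}}(v)\}_{i\in N}$ one checks that, in every structure $\mathfrak{A}$ and at every valuation $v$, both sides of the first equivalence denote the least upper bound of the values of $\alpha$ under the valuations obtained from $v$ by assigning $s^i(0)$ to the distinguished variable (its initial value being overwritten, its final value invisible outside), and both sides of the second denote the corresponding greatest lower bound. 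Having the same meaning in every model, the two sides are provably equivalent by the completeness theorem of algorithmic logic. I would present the levelwise/iteration-rule argument as the purely syntactic proof and record the semantic argument as a cross-check.
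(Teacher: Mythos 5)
Your proof is correct and follows essentially the same route as the paper: establish the levelwise equivalence $\{y:=0\}\{y:=s(y)\}^i\alpha(y)\equiv\{x:=0\}\{x:=s(x)\}^i\alpha(x)$ via the assignment axiom, push each level under the iteration quantifier using $Ax_{22}$/$Ax_{23}$, and close with the $\omega$-rules $R_4$ and $R_5$. The ``delicate point'' you flag --- applying the $\omega$-rule behind the leading assignment $\{y:=0\}$ --- is exactly the step the paper's proof also takes (silently) when it invokes $R_4$, so your explicit acknowledgment of it is a point in your favour rather than a gap.
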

\begin{proof}
Let $\alpha(x)$ be an arbitrary formula with free variable $x$. The expression  $\alpha(y)$ denotes the formula resulting from the formula   $\alpha(x)$ by the simultaneous replacement of all free occurrences of the variable  $x$ by the variable  $y$. It is easy to remark, that for every natural number  $i \in N$ the following formula is a theorem  
\[\vdash \alpha(y/s^i(0)) \Leftrightarrow \alpha(x/s^i(0)). \]
By  the axiom $Ax_{14} $ of the assignment instruction we obtain another fact , for every natural number  $i \in N$ the following formula is a theorem
\[\{y:=0\}\{y:=s(y)\}^i\,\alpha(y) \equiv \{x:=0\}\{x:=s(x)\}^i\,\alpha(x) \]
Now, we apply the axiom  $Ax_{23}$ and obtain, that for every natural number  $i $ the following formula is a theorem.
\[\{y:=0\}\{y:=s(y)\}^i\,\alpha(y) \implies \{x:=0\}\bigcup\{x:=s(x)\} \,\alpha(x) \]
We are ready to apply the rule $R_4$. We obtain the theorem
\[\{y:=0\}\bigcup\{y:=s(y)\} \,\alpha(y) \implies \{x:=0\}\bigcup\{x:=s(x)\} \,\alpha(x) .\]
In a similar manner we are proving the other implication and the formula 
\[ \{y:=0\}\bigcap\{y:=s(y)\} \,\alpha(y) \equiv \{x:=0\}\bigcap\{x:=s(x)\} \,\alpha(x) .\]
\end{proof}

In the proof of scheme of induction we shall use the following theorem.
\begin{metatwierdzenie}
For every formula  $\alpha$ the following formulas are theorems of algorithmic theory of natural numbers. 
\begin{eqnarray}
\label{dlakazdego} \mathcal{T}h_3 \vdash \color{black}{\forall_x\,\alpha(x)} & \color{black}{\Leftrightarrow} & \color{black}{\{x:=0\}\bigcap \{x:=s(x)\}\,\alpha(x)} \\
\label{istnieje}  \mathcal{T}h_3 \vdash \color{black}{\exists_x\,\alpha(x)} & \color{black}{\Leftrightarrow} & \color{black}{\{x:=0\}\bigcup \{x:=s(x)\}\,\alpha(x)}  
\end{eqnarray}
\end{metatwierdzenie}
\begin{proof}   
We shall prove the property   (\ref{istnieje}).
 Let $\alpha(x)$ be a formula.\\
Every formula of the following form is a theorem of algorithmic logic.
 \[\vdash \color{black}{\alpha(x) \implies \alpha(x) \land \{y:=0\}\bigcup\{y:=s(y)\}(x=y)}.\]
This leads to the following theorem of the theory $\mathcal{T}h_3$.
 \[\mathcal{T}h_3\vdash \color{black}{\alpha(x) \implies   \{y:=0\}\bigcup\{y:=s(y)\}(\alpha(x) \land x=y)}.\]
In the next step we obtain.
 \[\mathcal{T}h_3\vdash \color{black}{\alpha(x) \implies \{y:=0\}\bigcup\{y:=s(y)\}\alpha(y)}.\]
Now, we can introduce the existential quantifier into the antecedent of the implication (we use inference rule R6).
 \[\mathcal{T}h_3\vdash \color{black}{\exists_x\,\alpha(x) \implies   \{y:=0\}\bigcup\{y:=s(y)\}\alpha(y)}.\]
 By the previous lemma   \ref{lemat2} we obtain.
 \[\mathcal{T}h_3\vdash \color{black}{\exists_x\,\alpha(x) \implies   \{x:=0\}\bigcup\{x:=s(x)\}\alpha(x)}.\] 
The proof of other implication as well as of formula  (\ref{dlakazdego}) is left as an exercise.
\end{proof}
We are going to prove the scheme of induction.\\
\begin{metatwierdzenie}
Let   $\alpha(x)$ denote an arbitrary formula with a free variable $x$. The formula built in accordance with the following scheme is a theorem of algorithmic theory of natural numbers  $\mathcal{T}h_3$.
\begin{equation}
\mathcal{T}h_3 \vdash \color{black}{\Bigl(\alpha(x/0) \land \forall_x\bigl(\alpha(x) \Rightarrow\alpha(x/s(x))\bigr)\Bigr)\implies \forall_x \alpha(x)}
\end{equation}
\end{metatwierdzenie}
\begin{proof}
 In the expression below, $\beta$ denotes a formula, $K$ denotes a program. Each formula of the form  
 \[\vdash \color{black}{((\beta \land \bigcap K\,(\beta \implies K\beta))\implies \bigcap K\,\beta)} \]
 is a theorem of calculus of programs, i.e. algorithmic logic  (cf.\cite{al:gm:as} p.71(8)). \\
 Hence, every formula  of the following form is a theorem of algorithmic logic. 
 \[\vdash \color{black}{((\alpha(x) \land \bigcap \{x:=s(x)\}\,(\alpha(x) \implies \{x:=s(x)\}\alpha(x)))\implies \bigcap \{x:=s(x)\}\,\alpha(x))} \]
 We apply the auxiliary inference rule  $R2$
 \[\tag{R2}\color{black}{\frac{\alpha, K\,\textbf{true}}{K\,\alpha}} \]
  (it is easy to deduce this rule from the rule $R_2$) and obtain another theorem of AL
 \[\vdash \color{black}{\{x:=0\}((\alpha(x) \land \bigcap \{x:=s(x)\}\,(\alpha(x) \implies \{x:=s(x)\}\alpha(x)))\implies \bigcap \{x:=s(x)\}\,\alpha(x))} \]
 Assignment instruction distributes over conjunction (Ax15) and implication(cf. \cite{al:gm:as}p.70 formula (4)), hence   
 \[\vdash \color{black}{((\{x:=0\}\alpha(x) \land \{x:=0\}\bigcap \{x:=s(x)\}\,(\alpha(x) \implies \{x:=s(x)\}\alpha(x)))\implies \{x:=0\}\bigcap \{x:=s(x)\}\,\alpha(x))} \]
 We apply the axiom of assignment instruction 
 \[\vdash(\alpha(x/0) \land \{x:=0 \}\bigcap \{x:=s(x)\} \,(\alpha(x) \implies \alpha(x/s(x))))\implies  \{x:=0 \}\bigcap \{x:=s(x)\} \,\alpha(x)) \]
 Now, we use the fact that in the algorithmic theory of natural numbers the classical quantifiers and iteration quantifiers are mutually expressive. (cf. formula (\ref{dlakazdego}) ) 
 \[\vdash \color{black}{(\alpha(x/0) \land \underbrace{\{x:=0 \}\bigcap \{x:=s(x)\}}_{\forall_x}\,(\alpha(x) \implies \alpha(x/s(x))))\implies \underbrace{\{x:=0 \}\bigcap \{x:=s(x)\}}_{\forall_x}\,\alpha(x))} \]
 and obtain scheme of induction -- each formula of the following scheme is a theorem of algorithmic theory of natural numbers  $\mathcal{T}h_3$.
 \[\mathcal{T}h_3\vdash \color{black}{(\alpha(x/0) \land (\forall x) (\alpha(x) \implies \alpha(x/s(x))))\implies (\forall x) \alpha(x))} \]
\end{proof}
Observe the following useful property of natural numbers. Many proofs use the following lemma. 
\begin{lemma}\label{rown2} Let $\alpha$ be any formula. Any equivalence built in accordance to the following scheme is a theorem of theory $\mathcal{T}h_3$ 
\begin{eqnarray}\nonumber \mathcal{T}h_3 \vdash \color{black}{\left \{\begin{array}{l} t:=0; \\   \textbf{while } t\neq s(y) \\ \textbf{do }\\ \quad t:=s(t);\\     \textbf{od} \end{array} \right \}\alpha \Leftrightarrow 
\left \{\begin{array}{l} t:=0; \\   \textbf{while }t\neq y \\ \textbf{do }\\ \quad t:=s(t); \\     \textbf{od};\\ \textbf{if } t \neq s(y) \\ \textbf{then }t:=s(t); \\ \textbf{fi}\end{array}                                                                                                                                                               \right \} \alpha}.\end{eqnarray} 
\end{lemma}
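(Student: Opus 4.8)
The plan is to establish the equivalence by unfolding both programs a fixed but arbitrary number of times, using the fact that in $\mathcal{T}h_3$ every element is a standard natural number (axiom \ref{standard}), so that the \textbf{while} loops on the left and right always terminate. Concretely, I would first prove by the scheme of induction (now available as a theorem of $\mathcal{T}h_3$) that for every $i \in N$ the loop $\{t:=0;\ \textbf{while } t\neq s(y)\ \textbf{do } t:=s(t)\ \textbf{od}\}$ reaches its exit exactly when $t$ has been incremented $s(y)$-many times, and likewise that the loop with guard $t\neq y$ followed by the conditional single increment increments $t$ a total of $s(y)$-many times as well. The key observation is that both right-hand-side programs assign to $t$ the same final value, namely $s(y)$, starting from $t:=0$, and that neither program touches any variable other than $t$ (in particular, the variables free in $\alpha$ other than $t$ are untouched). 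Hence both programs have the same input–output relation on the state, and therefore $K_1\alpha \equiv K_2\alpha$ for every formula $\alpha$.

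Carrying this out syntactically, I would use the loop-unwinding axiom of algorithmic logic, which gives
\[
\vdash \textbf{while }\gamma\textbf{ do }M\textbf{ od}\,\beta \;\equiv\; (\lnot\gamma \land \beta)\lor(\gamma \land M(\textbf{while }\gamma\textbf{ do }M\textbf{ od})\,\beta),
\]
to peel the guard $t\neq s(y)$ on the left and, separately, to peel the guard $t\neq y$ on the right. More efficiently, I would invoke the already-proven equivalence between a \textbf{while} loop and an iteration quantifier over the loop body (the equivalence used in the first lemma of subsection "Scheme of induction"), rewriting both sides as $\{t:=0\}\bigcup\{t:=s(t)\}(\dots)$-style formulas and then comparing them. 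Since the guard $t\neq s(y)$ becomes false after exactly $s(y)$ iterations of $t:=s(t)$, while on the right the guard $t\neq y$ becomes false after $y$ iterations and the trailing \textbf{if} performs one more increment (because after $y$ steps $t=y\neq s(y)$), both sides correspond to the same iteration index, so the disjunctions collapse to equal formulas by propositional calculus.

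The main obstacle I anticipate is the bookkeeping needed to justify that the right-hand program's trailing conditional fires exactly once: one must show $\mathcal{T}h_3 \vdash \{t:=0;\ \textbf{while } t\neq y\ \textbf{do } t:=s(t)\ \textbf{od}\}(t=y)$ — i.e. that after the first loop $t$ equals $y$ and not $s(y)$ — which relies on $\forall_x\, s(x)\neq 0$, $\forall_x\forall_y\, s(x)=s(y)\Rightarrow x=y$, and the standardness axiom \ref{standard} to guarantee termination. This is exactly the kind of lemma the first lemma of the subsection provides in a slightly different guise, so I would reduce to it. Once termination and the exit value of the first loop are pinned down, the remaining step — matching the $(s(y))$-th iterate of $\{t:=s(t)\}$ on both sides and applying the assignment axioms $Ax_{14}$, $Ax_{15}$ and rule $R_4$ as in Lemma \ref{lemat2} — is routine. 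A mild subtlety worth a sentence in the write-up: the equivalence must hold for \emph{all} formulas $\alpha$, including algorithmic ones, but since neither program modifies variables other than $t$ and both leave $t=s(y)$, the substitution lemmas of algorithmic logic apply uniformly, so no restriction on $\alpha$ is needed.
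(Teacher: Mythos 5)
Your proposal is correct in outline, but it takes a genuinely different route from the paper. The paper's proof is a short program transformation: it first cites a theorem of AL saying that a \textbf{while} loop with guard $t\neq s(y)$ may be followed by \textbf{if } $t\neq s(y)$ \textbf{ then } $t:=s(t)$ \textbf{ fi} without changing its meaning (at exit the guard is false, so the conditional is a no-op), and then uses axiom (\ref{M}) to identify the loop governed by $t\neq s(y)$ with the loop governed by $t\neq y$ followed by one further pass through the body --- in effect peeling off the last iteration --- restricting attention to postconditions of the form $\beta\land t=s(y)$, which is harmless because the left loop always exits with $t=s(y)$. You instead unroll both programs completely into iteration-quantifier form and match disjuncts index by index. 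Both arguments rest on the same ingredients (termination from axiom (\ref{standard}), injectivity of $s$, and $Ax_{21}$/$R_3$-style unwinding), so your route is viable; it is longer but makes explicit where each axiom is used, whereas the paper's version is terse to the point of hiding the bookkeeping. One point you must tighten: the claim that ``the guard $t\neq s(y)$ becomes false after exactly $s(y)$ iterations'' conflates the object variable $y$ with the meta-level iteration index $i$. To compare the $i$-th disjuncts syntactically you need, for each $i\in N$, an implication with antecedent $y=s^i(0)$, proved by meta-induction on $i$ using (\ref{M}), and only then can the standardness axiom together with $R_3$ or $R_4$ discharge that antecedent --- exactly the pattern used in the proof of Theorem \ref{zmniej}. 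You gesture at this with the phrase ``prove by the scheme of induction for every $i\in N$,'' but it is the real content of the lemma and the place where axiom (\ref{M}) actually enters; without it the ``same iteration index'' step is a semantic observation rather than a derivation.
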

\begin{proof}
The proof makes use of the following theorem of AL
\begin{eqnarray}\nonumber\vdash  \color{black}{\left \{\begin{array}{l} t:=0; \\   \textbf{while } t\neq s(y) \\ \textbf{do }\\ \quad t:=s(t);\\     \textbf{od} \end{array} \right \}\alpha \Leftrightarrow 
\left \{\begin{array}{l} t:=0; \\   \textbf{while }t\neq s(y) \\ \textbf{do }\\ \quad t:=s(t); \\     \textbf{od};\\ \textbf{if } t \neq s(y) \\ \textbf{then }t:=s(t); \\ \textbf{fi}\end{array}                                                                                                                                                               \right \} \alpha}.\end{eqnarray} 
\begin{flushright}
\begin{footnotesize}where $\alpha$ is any formula\end{footnotesize}\end{flushright} and the axiom  (\ref{M}). It suffices to consider the formulas   $\alpha$ of the form $\beta \land t=s(y)$ without loss of generality. 
\end{proof}
The lemma can be formulated in another way: the programs occurring in the lemma  \ref{rown2} are equivalent.
\subsection{Addition}
The operation of addition is defined in the theory  $\mathcal{T}h_3$ as follows.  
\begin{definition}
\[\tag{A} x+y=z \Leftrightarrow \left\{\begin{array}{l}t:=0; w:=x; \\ \textbf{while }t\neq y\textbf{ do }t:=s(t); w:=s(w) \textbf{ od}\end{array}\right\}(z=w) \]
\end{definition}
We have to check whether the definition is correct. It means that we should prove that for any pair of values $x, y$ there exists a result. Moreover, we should prove that the result is unique and that it satisfies the recursive equalities $x+0=x$ and $x+s(y)=s(x+y)$. First, we remark that for every     $x$ and  $y$ the result $w$ of addition is defined. 
\begin{lemma}
 \[\mathcal{T}h_3 \vdash \forall_x\,\forall_y\, \left\{\begin{array}{l}t:=0; w:=x; \\ \textbf{while }t\neq y\textbf{ do }t:=s(t); w:=s(w) \textbf{ od}\end{array}\right\}(t=y) \]
\end{lemma}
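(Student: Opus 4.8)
The plan is to prove that the \textbf{while} loop in the definition of addition always halts, i.e.\ that the control variable $t$ eventually reaches the target value $y$. The natural route is induction on $y$, using the scheme of induction that we have just established as a theorem of $\mathcal{T}h_3$. So I would set $\alpha(y)$ to be the formula
\[
\left\{\begin{array}{l}t:=0;\ w:=x;\\ \textbf{while }t\neq y\textbf{ do }t:=s(t);\ w:=s(w)\textbf{ od}\end{array}\right\}(t=y),
\]
with $x$ held as a parameter, and then discharge the two premises of induction.

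For the base case $\alpha(y/0)$ I would use the axiom of the \textbf{while} instruction: since after the initialisation $t:=0$ the loop condition $t\neq 0$ is false, the loop body is not executed and the program halts immediately in a state with $t=0$, so $t=y$ holds. This is a pure fact of the calculus of programs applied to the assignment axiom $Ax_{14}$, requiring no specific axiom of $\mathcal{T}h_3$ beyond what is needed to evaluate $0\neq 0$ as false. For the induction step I would assume $\alpha(y)$ and derive $\alpha(y/s(y))$. Here the key tool is Lemma~\ref{rown2} (or rather its program-equivalence form): running the loop up to $s(y)$ is equivalent to running it up to $y$ and then performing one extra guarded iteration $\textbf{if }t\neq s(y)\textbf{ then }t:=s(t)\textbf{ fi}$. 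By the induction hypothesis the first part halts with $t=y$; then $t\neq s(y)$ holds by axiom~(\ref{M}) (since $s(y)\neq y$ follows from $s$ being injective together with well-foundedness, or more directly is itself derivable), so the extra assignment sets $t:=s(t)=s(y)$, giving $t=s(y)$ as required. I also need the auxiliary rule $R2$ / the distributivity of assignment over the relevant connectives to push the initial assignments through, exactly as in the proof of the induction scheme above.

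Having both premises, the induction scheme yields $\forall_y\,\alpha(y)$, and since $x$ was an arbitrary parameter, generalising gives $\forall_x\forall_y\,\alpha(y)$, which is the claim.

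The main obstacle I anticipate is bookkeeping rather than conceptual: carefully matching the syntactic shape of the loop in $\alpha(y/s(y))$ to the right-hand side of Lemma~\ref{rown2}, which is stated for the bare counter loop (without the $w:=s(w)$ update), so I would either first prove the analogous equivalence for the two-variable loop, or argue that the presence of $w:=s(w)$ in the body is irrelevant to halting because $w$ does not occur in the loop condition. The cleanest path is probably to observe that the loop condition depends only on $t$, so termination of the two-variable loop reduces to termination of the one-variable counter loop, and then invoke Lemma~\ref{rown2} verbatim. A secondary point to be careful about is that $\alpha(y)$ as written asserts both halting \emph{and} the postcondition $t=y$; the postcondition is what makes the induction step go through, so it is essential to carry it along rather than trying to prove bare halting.
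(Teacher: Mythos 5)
Your proof is essentially correct, but it takes a genuinely different route from the one in the paper. You prove the lemma by induction on $y$, instantiating the derived induction scheme with the algorithmic formula $\alpha(y)$ itself, handling the base case via $Ax_{21}$ and the step via the two-variable analogue of Lemma~\ref{rown2} (plus the derivable fact $y\neq s(y)$). This is legitimate in $\mathcal{T}h_3$, since the metatheorem establishes the induction scheme for \emph{arbitrary} formulas, not only first-order ones, and the two-variable unrolling equivalence is exactly the one the paper invokes later for $x+s(y)=s(x+y)$. The paper instead argues directly: it starts from axiom~(\ref{standard}) (the one-variable counting loop halts with $t=y$), applies the auxiliary inference rule of AL (\cite{al:gm:as} p.~73(19)) which permits enriching the loop body with the instruction $w:=s(w)$ on a variable foreign to the loop condition and the postcondition, then prepends $w:=x$ by rule $R_2$, permutes the two independent assignments, and generalizes by $R_7$. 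The paper's argument is shorter and makes transparent that termination of the addition loop is axiom~(\ref{standard}) in thin disguise; your induction is more roundabout, since the induction scheme you invoke was itself derived from axiom~(\ref{standard}), so you are in effect re-deriving that axiom's content for the two-variable loop. Notably, the ``cleanest path'' you mention at the end --- reducing termination of the two-variable loop to the one-variable counter loop because the condition depends only on $t$ --- is precisely the paper's auxiliary inference rule; had you followed that option you would have recovered the paper's proof and could have dispensed with the induction entirely.
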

\begin{proof}
Proof starts with the axiom (\ref{standard}). Next, we use the following auxiliary inference rule of AL, cf. 
   \cite{al:gm:as}\,p. 73(19).  
 \[\color{black}{\frac{\{t:=0;\, \textbf{while }t\neq y\textbf{ do }t:=s(t) \textbf{ od}\}\alpha}{\{t:=0;\, \textbf{while }t\neq y\textbf{ do }t:=s(t); w:=s(w)\textbf{ od} \}\alpha}} \]
 Thus we proved the following theorem  
 \[\mathcal{T}h_3 \vdash  \{t:=0;\, \textbf{while }t\neq y\textbf{ do }t:=s(t); w:=s(w) \textbf{ od}\}(t=y) \]
 The latter formula can be preceded by the assignment instruction $w:=x$ (we use the inference rule R2). 
 \[\mathcal{T}h_3 \vdash  \color{black}{\left\{\begin{array}{l}w:=x;\,t:=0; \\ \textbf{while }t\neq y\textbf{ do }t:=s(t); w:=s(w) \textbf{ od}\end{array}\right\}(t=y)} \]
 In the next step we may interchange two assignment instructions, for they have no common variables. 
 \[\mathcal{T}h_3 \vdash  \color{black}{\left\{\begin{array}{l}t:=0;\,w:=x; \\ \textbf{while }t\neq y\textbf{ do }t:=s(t); w:=s(w) \textbf{ od}\end{array}\right\}(t=y)} \]
 Finally we can add the quantifiers (rule R7) and obtain the thesis of lemma.
 \[\mathcal{T}h_3 \vdash  \color{black}{\forall_x\,\forall_y\, \left\{\begin{array}{l}t:=0;\,w:=x; \\ \textbf{while }t\neq y\textbf{ do }t:=s(t); w:=s(w) \textbf{ od}\end{array}\right\}(t=y)}. \]
\hspace*{14cm}\end{proof}
Our next observation is
\begin{lemma}
\[\mathcal{T}h_3 \vdash  \color{black}{x+0=x} \] 
\end{lemma}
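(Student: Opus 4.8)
\emph{Proof proposal.} The plan is to unfold the definitional axiom (\ref{add}) of addition and to observe that, when the second argument is $0$, the \textbf{while} loop of that definition performs no iteration at all, so the final value of $w$ is just the initial value $x$. Concretely, I would first instantiate (\ref{add}) by substituting $0$ for $y$ and the term $x$ for $z$. Since the variable $x$ does not occur in the program and (by the preceding lemma) the program halts, we have $x = \{t:=0; w:=x; \textbf{while }t\neq 0\textbf{ do }t:=s(t); w:=s(w)\textbf{ od}\}\,x$, so the goal $x+0=x$ becomes equivalent in $\mathcal{T}h_3$ to the algorithmic formula
\[\mathcal{T}h_3\vdash \{t:=0; w:=x; \textbf{while }t\neq 0\textbf{ do }t:=s(t); w:=s(w)\textbf{ od}\}(x=w).\]

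To prove this, I would use the derived theorem of algorithmic logic $\vdash \lnot\gamma \implies \bigl(\{\textbf{while }\gamma\textbf{ do }M\textbf{ od}\}\alpha \equiv \alpha\bigr)$, taken with $\gamma$ equal to the condition $t\neq 0$, $M$ the loop body, and $\alpha$ the formula $x=w$. Prefixing this theorem by the composed assignment $\{t:=0; w:=x\}$ via rule $R2$, then pushing that assignment through $\implies$ and $\equiv$ (the distributivity laws for assignments, Ax15 and its analogues, cf.\ \cite{al:gm:as}) and applying the assignment axiom $Ax_{14}$, the guard $\lnot(t\neq 0)$ turns into $\lnot(0\neq 0)$ and $x=w$ turns into $x=x$. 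Both $\lnot(0\neq 0)$ and $x=x$ are provable by pure predicate calculus (reflexivity of $=$), so one obtains $\vdash \{t:=0; w:=x\}\{\textbf{while }t\neq 0\textbf{ do }M\textbf{ od}\}(x=w) \equiv (x=x)$, hence the displayed formula by propositional calculus. Modus ponens with the instance of (\ref{add}) then yields $\mathcal{T}h_3\vdash x+0=x$.

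The argument is essentially bookkeeping, and there is no real obstacle; the only point needing a little care is the order of operations when moving $\{t:=0; w:=x\}$ inside the body of the implication, so that $Ax_{14}$ is applied to all relevant subformulas and the guard provably collapses. An equivalent route, which some readers may find more transparent, is to invoke instead the unwinding (fixed-point) theorem $\vdash \{\textbf{while }\gamma\textbf{ do }M\textbf{ od}\}\alpha \equiv \bigl((\lnot\gamma\land\alpha)\lor(\gamma\land\{M; \textbf{while }\gamma\textbf{ do }M\textbf{ od}\}\alpha)\bigr)$: after prefixing by $\{t:=0; w:=x\}$ and using $Ax_{14}$ with the distributivity of assignments over $\land$ and $\lor$, the first disjunct reduces to \textbf{true} and the second to \textbf{false}, giving the same conclusion.
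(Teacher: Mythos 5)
Your proposal is correct and follows essentially the same route as the paper: instantiate the definitional axiom (\ref{add}) at $y=0$, then use the unwinding axiom $Ax_{21}$ (equivalently, the derived fact that a \textbf{while} with a false guard is the identity) to collapse the loop, since $t\neq 0$ fails immediately after $t:=0$, leaving $w=x$. The paper's version is terser but rests on exactly this observation, so no further comparison is needed.
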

\begin{proof}
From the definition we have
\[ x+0 = z \Leftrightarrow \left\{\begin{array}{l}t:=0; w:=x; \\  \textbf{while }t\neq 0\textbf{ do }t:=s(t); w:=s(w) \textbf{ od}\end{array}\right\}(z=w) \]
We apply the axiom $Ax_{21}$ of while instruction to obtain
\[ x+0 =z \Leftrightarrow  \left\{\begin{array}{l}t:=0; w:=x; \\  \textbf{if }t=y \textbf{ then } \textbf{ else } \\ \textbf{while }t\neq 0\textbf{ do }t:=s(t); w:=s(w) \textbf{ od} \\ \mathbf{fi}\end{array}\right\}w \]
Indeed, from the properties of while instruction we obtain the implication.
\[\mathcal{T}h_3 \vdash \color{black}{y=0 \implies \left\{\begin{array}{l}t:=0; w:=x; \\ \textbf{while }t\neq y\textbf{ do } \\ \quad  t:=s(t); w:=s(w) \\ \mathbf{od} \end{array}\right\}\alpha \equiv \{t:=0; w:=x;\} \alpha}. \]
We conclude that $\mathcal{T}h_3 \vdash\color{black}{x+0=x}$.
\hspace*{6cm}\end{proof}
Our next goal is 
\begin{lemma}
\[\mathcal{T}h_3 \vdash  \color{black}{x+s(y)=s(x+y)} \] 
\end{lemma}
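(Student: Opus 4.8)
The plan is to compute $x+s(y)$ by peeling off the \emph{last} iteration of the loop that defines it — exactly the manoeuvre licensed by Lemma~\ref{rown2} — and then to recognise the state reached just before that last iteration as the value of $x+y$. By the definition of addition, $x+s(y)=z$ is equivalent to
\[\left\{\begin{array}{l}t:=0;\ w:=x;\\ \textbf{while }t\neq s(y)\textbf{ do }t:=s(t);\ w:=s(w)\textbf{ od}\end{array}\right\}(z=w).\]
Applying the version of Lemma~\ref{rown2} for the loop body $t:=s(t);\,w:=s(w)$ (its proof is verbatim the one given above: one uses the corresponding theorem of AL together with axiom (\ref{M})), this program is equivalent to
\[\left\{\begin{array}{l}t:=0;\ w:=x;\\ \textbf{while }t\neq y\textbf{ do }t:=s(t);\ w:=s(w)\textbf{ od};\\ \textbf{if }t\neq s(y)\textbf{ then }t:=s(t);\ w:=s(w)\textbf{ fi}\end{array}\right\}(z=w).\]

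Next I would pin down the state reached after the first loop. On the one hand, the previously proved theorem
\[\mathcal{T}h_3\vdash\{t:=0;\ w:=x;\ \textbf{while }t\neq y\textbf{ do }t:=s(t);\ w:=s(w)\textbf{ od}\}(t=y)\]
together with the theorem $\forall_y\,y\neq s(y)$ (an induction from (\ref{I}) and (\ref{M}), hence already available in $\mathcal{T}h_3$) shows that after the loop the guard $t\neq s(y)$ of the trailing \textbf{if} holds, so the \textbf{if} collapses to its then-branch $t:=s(t);\,w:=s(w)$. On the other hand, by the definition of addition and the remark that $z=\{K\}z$ whenever $z$ does not occur in $K$, we get
\[\mathcal{T}h_3\vdash\{t:=0;\ w:=x;\ \textbf{while }t\neq y\textbf{ do }t:=s(t);\ w:=s(w)\textbf{ od}\}(w=x+y),\]
i.e. $w$ holds the value $x+y$ at that point.

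Chaining the two observations, the program defining $x+s(y)$ is equivalent to the program that first brings $w$ to the value $x+y$ and then executes $w:=s(w)$; by the assignment axiom its final value of $w$ is $s(x+y)$. Hence $\mathcal{T}h_3\vdash\bigl(x+s(y)=z\bigr)\equiv\bigl(s(x+y)=z\bigr)$, and instantiating $z$ by $s(x+y)$ yields $\mathcal{T}h_3\vdash x+s(y)=s(x+y)$, as required. The one non-clerical point is the first step: one must check that the ``peel off the last iteration'' equivalence of Lemma~\ref{rown2}, stated there only for the bare counter loop, survives the addition of $w:=s(w)$ to the loop body. This is routine — the extra assignment commutes with the counter bookkeeping — but it is the only place where anything more than citing AL and the assignment axiom is needed; everything downstream is pure assignment-axiom calculation.
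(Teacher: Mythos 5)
Your proposal follows essentially the same route as the paper: the paper's entire proof consists of displaying the ``peel off the last iteration'' equivalence for the addition loop and declaring it an instance of Lemma~\ref{rown2}, which is exactly your first step. You go further by actually carrying out the remaining bookkeeping (that the trailing \textbf{if} fires, that $w$ holds $x+y$ before it, and the final assignment-axiom computation) and by flagging honestly that Lemma~\ref{rown2} is stated only for the bare counter loop --- a gap the paper silently elides --- so your version is, if anything, more complete than the original.
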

\begin{proof}
Proof uses the equivalence:
\begin{eqnarray}\nonumber \color{black}{\left \{\begin{array}{l} t:=0; \\ w:=x; \\ \textbf{while } t\neq s(y) \\ \textbf{do }\\ \quad t:=s(t);\\ \quad  w:=s(w) \\ \textbf{od} \end{array} \right \}\alpha \equiv 
\left \{\begin{array}{l} t:=0; \\ w:=x;\\ \textbf{while }t\neq y \\ \textbf{do }\\ \quad t:=s(t); \\ \quad w:=s(w) \\ \textbf{od};\\ \textbf{if } t \neq s(y) \\ \textbf{then }t:=s(t);w:=s(w); \\ \textbf{fi}\end{array}                                                                                                                                                               \right \} \alpha}.\end{eqnarray}
The expression $\alpha$ is any formula.  The equivalence is an instance of the lemma  \ref{rown2}.  
\hspace*{10cm}\end{proof}
\subsection{Ordering relation $<$}
\begin{definition}
\[\color{black}{x<y \stackrel{df}{=} \{ w:=0; \textbf{while }w\neq y\land w\neq x \textbf{ do } w:=s(w) \textbf{ od}\}(w=x \land w\neq y)}. \]
\end{definition}
We shall prove the useful property.  \\
\begin{lemma}\label{trichot}\[\mathcal{T}h_3 \vdash \color{black}{\forall_x \forall_y\,(x<y \lor x=y \lor y<x)}. \]
\end{lemma}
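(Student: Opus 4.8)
The plan is to prove the trichotomy by the scheme of induction, which is now available in $\mathcal{T}h_3$ as a derived metatheorem. I would fix $y$ and run induction on $x$ with the formula $\alpha(x)\equiv (x<y\lor x=y\lor y<x)$. For the base case $\alpha(0)$ I would use axiom (\ref{less}): evaluating the defining program of $<$ at $x:=0$ starts with $w:=0$, so the loop guard $w\neq y\land w\neq 0$ fails immediately, and the output test $w=0\land w\neq y$ holds precisely when $0\neq y$; combined with the case $0=y$ this gives $0<y\lor 0=y$, hence $\alpha(0)$. One must also discharge the trivial halting obligation for the defining program of $<$ here, which follows from axiom (\ref{standard}) exactly as in the halting lemmas for addition proved above.

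For the induction step I would assume $\alpha(x)$ and derive $\alpha(s(x))$, splitting on the three disjuncts. If $y<x$ then a short monotonicity fact — $y<x\implies y<s(x)$, provable from (\ref{less}) and (\ref{M}) by an internal argument about the while-loop, or simply inherited from $\mathcal{T}h_0$ since it is first-order — gives $y<s(x)$. If $x=y$ then $y<s(x)$ likewise. The only delicate disjunct is $x<y$: here I need $x<y\implies (s(x)<y\lor s(x)=y)$, i.e. "no natural number lies strictly between $x$ and $s(x)$." I would obtain this from axiom (\ref{M}) together with the behaviour of the defining program of $<$: unfolding one step of the loop that computes $x<y$ versus the loop that computes $s(x)<y$, using Lemma \ref{rown2} to peel off the last iteration, reduces the claim to the injectivity of $s$. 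In each of the three subcases we conclude a disjunct of $\alpha(s(x))$, so $\alpha(x)\implies\alpha(s(x))$; the induction metatheorem then yields $\forall_x\,\alpha(x)$, and re-quantifying over $y$ by rule R7 gives the thesis.

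The main obstacle is the "nothing strictly between $x$ and $s(x)$" step, because it is the one place where we must argue about the computation of the program defining $<$ rather than quote a ready-made algorithmic-logic identity. The clean way to handle it is to prove once, as a small auxiliary lemma, the equivalence $w<s(x)\equiv (w<x\lor w=x)$ by applying Lemma \ref{rown2} to the while-loop inside axiom (\ref{less}) (the loop counter there plays exactly the role of $t$ in \ref{rown2}), after which trichotomy for $s(x)$ falls out of trichotomy for $x$ by propositional calculus. Everything else — the base case, the monotonicity facts, the halting side-conditions — is routine and either first-order (hence a theorem of $\mathcal{T}h_0\subseteq\mathcal{T}h_3$) or a direct instance of the halting arguments already carried out for addition.
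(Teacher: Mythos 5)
Your plan is a genuinely different route from the paper's, and it has a gap in the one place you yourself flag as delicate. In the induction step, the cases $y<x$ and $y=x$ are indeed discharged by the auxiliary equivalence $w<s(x)\equiv(w<x\lor w=x)$ (increment of the \emph{right} argument of $<$). But the case $x<y$ requires $x<y\implies(s(x)<y\lor s(x)=y)$, which is a discreteness fact about incrementing the \emph{left} argument, and it does not ``fall out by propositional calculus'' from $w<s(x)\equiv(w<x\lor w=x)$: instantiating that equivalence at $w:=y$ only tells you $y<s(x)\equiv(y<x\lor y=x)$, hence (given asymmetry, itself unproved at this point) $\lnot(y<s(x))$ — and passing from $\lnot(y<s(x))$ to $s(x)<y\lor s(x)=y$ is exactly trichotomy for the pair $(s(x),y)$, i.e.\ the thing being proved. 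You would need a second auxiliary lemma, $x<y\equiv(s(x)<y\lor s(x)=y)$, proved separately. Moreover, the tool you propose for both auxiliaries, Lemma \ref{rown2}, is stated for a loop whose guard is the single inequality $t\neq s(y)$; the program defining $<$ in axiom (\ref{less}) has the conjunctive guard $w\neq y\land w\neq x$, so \ref{rown2} does not apply as stated and a variant would have to be formulated and proved. The appeal to $\mathcal{T}h_0$ for the monotonicity facts is also premature: in $\mathcal{T}h_3$ the predicate $<$ is \emph{defined} by the program of axiom (\ref{less}), and its equivalence with the first-order $\exists_z\,y=x+s(z)$ is only established \emph{after} the trichotomy lemma in the paper, so it cannot be imported here without circularity.

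For contrast, the paper proves the lemma without any induction: axiom (\ref{standard}) gives termination of the loop with guard $w\neq y\land w\neq x$ (a weaker guard than $w\neq y$, using $Ax_6$ and rule $R_3$), the open formula $(w=x\land w\neq y)\lor(w\neq x\land w=y)\lor x=y$ holds at the exit of that loop, and distributing the program over the disjunction by $Ax_{15}$ yields exactly the defining formulas of $x<y$ and $y<x$ plus a third disjunct from which the (terminating, $x,y$-preserving) program can be erased, leaving $x=y$. That argument sidesteps entirely the discreteness facts on which your induction step founders; if you want to keep the inductive route you must supply both one-step lemmas about $<$ (right-argument and left-argument increment) with proofs adapted to the two-condition loop guard.
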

\begin{proof}
 It follows from the axiom  (\ref{standard}), that   
 \[\mathcal{T}h_3 \vdash \color{black}{\forall_x \forall_y\,\{ w:=0; \textbf{while }w\neq y\land w\neq x \textbf{ do } w:=s(w) \textbf{ od}\}(w=x \land w\neq y \lor w \neq x \land w=y \lor x=y)}. \]
 because, the formula   $\color{black}{\bigl((w=x \land w\neq y) \lor (w \neq x \land w=y) \lor x=y\bigr)}$ is a theorem of AL and the implication   $(w\neq y\land w\neq x) \implies w \neq x$ follows from axiom $Ax_6$.
 From the axiom of algorithmic logic   $Ax_{15}$ we deduce 
 \begin{eqnarray}\mathcal{T}h_3 \vdash \color{black}{\forall_x \forall_y\,\bigl (\{ w:=0; \textbf{while }w\neq y\land w\neq x \textbf{ do } w:=s(w) \textbf{ od}\}(w=x \land w\neq y)} \\ \color{black}{\lor \{ w:=0; \textbf{while }w\neq y\land w\neq x \textbf{ do } w:=s(w) \textbf{ od}\}(w \neq x \land w=y)} \\ \color{black}{\lor
  \{ w:=0; \textbf{while }w\neq y\land w\neq x \textbf{ do } w:=s(w) \textbf{ od}\}(x=y)\bigr)}.\end{eqnarray}
It is easy to observe, that the first and second line of the above formula are the definitions of  relations   $x<y$ and $y<x$. We can skip the program in the third line for   1\textdegree\, the program always terminates and   2\textdegree\, the values of variables $x$ and $y$ are not changed by the program\footnote{From the axiom \ref{standard} one can easily deduce the formula $\{w:=0; \textbf{while }w\neq y\land w\neq x \textbf{ do } w:=s(w) \textbf{ od}\}(w=y\lor w=x)$ and the following formula $(x=k)\implies \{w:=0; \textbf{while }w\neq y  \textbf{ do } w:=s(w) \textbf{ od}\}(x=k)$.}.  Finally we obtain \[\mathcal{T}h_3\vdash \forall_x \forall_y\,(x<y \lor x=y \lor y<x)\]. 
\end{proof}
\begin{lemma}
\[x < y \equiv \exists_z\,y = x + s(z)\]
\end{lemma}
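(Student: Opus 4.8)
The plan is to prove the two implications of the equivalence separately, using the definition \ref{less} of $<$ and the definition \ref{add} of $+$, the axiom \ref{standard} (``every element is reached by counting up from $0$''), the trichotomy Lemma \ref{trichot}, the scheme of induction established above, and the standard theorems of algorithmic logic about composing and unfolding \textbf{while}-loops (of the kind already used in Lemma \ref{rown2} and listed in \cite{al:gm:as}). Since $z$ does not occur free in $x<y$, the left-to-right reading of the existential quantifier lets me treat the ($\Leftarrow$) direction as $\forall_z\bigl(y=x+s(z)\implies x<y\bigr)$, which is amenable to induction.

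For ($\Leftarrow$) I would argue by induction on $z$. In the base case $z=0$ the already proved recursive equalities give $x+s(0)=s(x+0)=s(x)$, so it remains to show $\mathcal{T}h_3\vdash x<s(x)$: running the loop of \ref{less} and appealing to axiom \ref{standard}, the chain $0,s(0),s(s(0)),\dots$ meets $x$ and only afterwards meets $s(x)$, so the loop halts at $w=x$ with $w\neq s(x)$. In the inductive step, if $y=x+s(s(z))=s(x+s(z))$, I set $u:=x+s(z)$, use the induction hypothesis to get $x<u$, and reduce to the monotonicity fact $x<u\implies x<s(u)$, again read off from the loop of \ref{less}. Introducing $\exists_z$ into the antecedent (rule R6) then yields the full implication.

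For ($\Rightarrow$), assume $x<y$. By definition \ref{less} this says precisely that the program $\{w:=0;\ \textbf{while }w\neq y\land w\neq x\textbf{ do }w:=s(w)\textbf{ od}\}$ halts in a state with $w=x\land w\neq y$ (in particular $x\neq y$); hence along the successor chain the value $x$ occurs while $y$ has not yet occurred. Applying axiom \ref{standard} once more to keep counting up to $y$, the value $y$ occurs strictly after $x$; the number of extra successor steps separating them is computed by an auxiliary counting program (essentially the loop of \ref{add} with an added step counter), and since this number is nonzero it has the form $s(z)$, giving $y=x+s(z)$ by definition \ref{add}. The two implications together establish the equivalence.

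The hard part will be making the phrase ``$x$ occurs strictly before $y$'' rigorous inside the calculus of programs, i.e. without reasoning about computations. I would handle this by first proving, by induction from axioms \ref{I} and \ref{M}, that the successor chain $0,s(0),s(s(0)),\dots$ has no repetitions (equivalently, no $x$ satisfies $x=s(s(\dots s(x)))$ with at least one application of $s$), and then splicing the loop of \ref{less} with the counting loop via the \textbf{while}-loop composition and unfolding equivalences of algorithmic logic. Once this book-keeping is in place, the remaining steps are routine applications of the assignment axioms and the iteration-quantifier rules.
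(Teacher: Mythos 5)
Your plan is sound, and for the direction the paper actually writes out it follows essentially the same route. The paper's own proof treats only ($\Rightarrow$): from the definition (L) it passes to $x<y \implies \{w:=x;\ \textbf{while }w\neq y\textbf{ do }w:=s(w)\textbf{ od}\}(w=y)$, unfolds the loop once by $Ax_{21}$ to force at least one application of $s$, and stops there, leaving implicit the final step of reading off $y=x+s(z)$ from the definition (A) and introducing the existential quantifier --- precisely the splicing of the counting loop that you flag as ``the hard part.'' So on that direction you and the paper share both the key idea (continue the successor chain from $w=x$ up to $y$, noting the loop body must fire at least once) and the same unresolved book-keeping. Where you genuinely go beyond the paper is the ($\Leftarrow$) direction, which the paper omits entirely: your induction on $z$, resting on $x+s(0)=s(x)$, the fact $x<s(x)$, the monotonicity step $x<u\implies x<s(u)$, and rule $R_6$ to discharge $\exists_z$, is a legitimate use of the induction metatheorem already established and fills a real gap in the paper's argument (note that $x<s(x)$ is only \emph{stated}, without proof, in the lemma that follows, so you would need to supply it as you sketch). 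In short: same core mechanism for the forward implication, plus a converse the paper does not prove; your proposal is, if anything, the more complete of the two.
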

\begin{proof}
We first recall the axiom (\ref{standard}) of the theory $\mathcal{T}h_3$
\[\mathcal{T}h_3 \vdash \{ w:=0; \textbf{while }w\neq y  \textbf{ do } w:=s(w) \textbf{ od}\}(w=y  ) .\]
From the definition of the predicate  $<$ we obtain
\[\mathcal{T}h_3 \vdash x<y \implies \{ w:=0; \textbf{while }w\neq y\land w\neq x \textbf{ do } w:=s(w) \textbf{ od}\}(w=x \land w\neq y) .\]
Therefore
\[\mathcal{T}h_3 \vdash x<y \implies \{ w:=x; \textbf{while }w\neq y  \textbf{ do } w:=s(w) \textbf{ od}\}(w=y  ) .\]
Since $x \neq y$, hence the assignment instruction \textsf{w:=s(w)} will be executed at least once. Speaking more precisely, the  former formula is equivalent to the following one by the axiom $Ax_{21}$.  
\[\mathcal{T}h_3 \vdash x<y \implies \{ w:=x;w:=s(w); \textbf{while }w\neq y  \textbf{ do } w:=s(w) \textbf{ od}\}(w=y  ) .\]
\end{proof}
\begin{lemma}
 \begin{align}& \mathcal{T}h_3 \vdash \forall_x \,x<s(x) \\ 
& \mathcal{T}h_3 \vdash(x<y) \equiv \{w:=x;\textbf{while }w\neq y \textbf{ do }w:=s(w) \textbf{ od} \}(w=y) \\ 
& \mathcal{T}h_3 \vdash \forall_x \forall_y\,x<y \implies x+z<y+z \end{align} \end{lemma}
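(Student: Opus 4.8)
The plan is to treat the three claims in order, leaning on the preceding lemma ($x<y \equiv \exists_z\, y = x+s(z)$), the addition identities $x+0=x$ and $x+s(y)=s(x+y)$ already proved, and the scheme of induction established in the metatheorem above. For the first claim $\forall_x\, x<s(x)$ I would instantiate the preceding lemma from right to left and supply the witness $z:=0$: by the two addition lemmas $x+s(0)=s(x+0)=s(x)$, so $\exists_z\, s(x)=x+s(z)$ is a theorem, hence so is $x<s(x)$, and a $\forall$-introduction (rule $R7$) finishes.

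For the second claim, the equivalence of $x<y$ with $\{w:=x;\textbf{while }w\neq y\textbf{ do }w:=s(w)\textbf{ od}\}(w=y)$, the implication from left to right is already in hand: it is the last displayed formula in the proof of the preceding lemma, obtained from the definition of $<$ by dropping the guard conjunct $w\neq x$ and, via the \textbf{while} axiom $Ax_{21}$, starting $w$ at $x$ rather than at $0$. For the converse I would use that, by axiom \ref{standard}, the loop $\textbf{while }w\neq y\textbf{ do }w:=s(w)\textbf{ od}$ started from $w=x$ always halts, and then extract from a halting computation ending in $w=y$, through the definition of $<$ and the trichotomy Lemma \ref{trichot}, that $x<y$. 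I expect this converse to be the main obstacle of the lemma: one must argue purely syntactically that iterating $s$ from $x$ can reach $y$ only when $x$ precedes $y$ in the order, which is exactly what the \textbf{while}-axioms together with Lemma \ref{trichot} are meant to support.

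For the third claim $\forall_x\forall_y\, x<y \implies x+z<y+z$ I would run the scheme of induction on $z$. Two auxiliary facts make the step clean: (a) $\forall_a\forall_c\, s(a)+c=s(a+c)$, itself proved by induction on $c$ from the addition recursion; and (b) successor monotonicity $\forall_a\forall_b\, a<b \implies s(a)<s(b)$, obtained by combining the preceding lemma (so $b=a+s(w)$ for some $w$) with (a) (so $s(b)=s(a+s(w))=s(a)+s(w)$, which exhibits the witness for $s(a)<s(b)$). Then the base case $z=0$ is immediate from $x+0=x$, while in the step case the identities $x+s(z)=s(x+z)$ and $y+s(z)=s(y+z)$ reduce the goal to $s(x+z)<s(y+z)$, which follows from the induction hypothesis $x+z<y+z$ by (b).
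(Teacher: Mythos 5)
The paper states this lemma with no proof at all, so there is no argument of the author's to measure yours against; what follows is an assessment of your proposal on its own terms. Your treatment of the first and third claims is sound. The witness $z=0$ together with $x+s(0)=s(x+0)=s(x)$ gives $x<s(x)$ from the right-to-left direction of the preceding lemma, and the induction on $z$ for monotonicity of addition, supported by your auxiliary facts $s(a)+c=s(a+c)$ and the successor-monotonicity extracted from the witness characterization of $<$, is the standard and correct route. (Note that the induction metatheorem of this section is proved for arbitrary formulas $\alpha$, not merely first-order ones, so applying it to $x<y\implies x+z<y+z$ is legitimate, whereas the induction axioms of $\mathcal{T}h_1$ and $\mathcal{T}h_2$ would not suffice.)

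The converse direction of the second claim, which you yourself single out as the main obstacle, is a genuine gap --- and not one you can close, because the biconditional as stated is falsifiable at $x=y$. When $x=y$, the program $\{w:=x;\textbf{while }w\neq y\textbf{ do }w:=s(w)\textbf{ od}\}$ halts immediately with $w=y$, so the right-hand side evaluates to \textbf{true}, while $x<y$ is \textbf{false} by axiom (\ref{less}). Your plan is to pass from a halting computation ending in $w=y$ to $x<y$ via Lemma \ref{trichot}; trichotomy does let you exclude the case $y<x$ (there the loop never reaches $y$, so the algorithmic formula is \textbf{false}), but it cannot exclude $x=y$, which is precisely the case that breaks the equivalence. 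What your argument actually delivers, and what is provable, is $(x<y\lor x=y)\equiv\{w:=x;\textbf{while }w\neq y\textbf{ do }w:=s(w)\textbf{ od}\}(w=y)$. Since by the completeness of algorithmic logic a formula false in some model of $\mathcal{T}h_3$ (here, the standard model at a valuation with $x=y$) is not a theorem, the equivalence with $x<y$ alone admits no proof; the statement of the lemma, not your proof, is what must be amended.
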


\subsection{Predecessor}
The operation of predecessor is defined by the following axiom \ref{P}.
\begin{definition}
\[\tag{P}\label{P} P(x)\stackrel{df}{=}\left\{\begin{array}{l}w:=0;\\
\textbf{if }x \neq 0 \textbf{ then }\\ 
\quad \textbf{while } s(w)\neq x\textbf{ do }  w:=s(w) \textbf{ od} \\
\textbf{fi}\end{array}\right\}(w) \]
\end{definition}
\begin{lemma}\label{l5.10}
 \begin{align} & \mathcal{T}h_3 \vdash P(0)=0 \\ 
 & \mathcal{T}h_3 \vdash x\neq0 \implies s(P(x))=x \\ 
& \mathcal{T}h_3 \vdash x\neq0 \implies P(x)<x \\
& \mathcal{T}h_3 \vdash (x<y) \equiv \{w:=y;\textbf{while }w\neq x \textbf{ do }w:=P(w) \textbf{ od} \}(w=x) \\
  & \mathcal{T}h_3 \vdash P(s(x))=x \\
 \label{PoprzS}
& \mathrm{For\ every\ natural\ number}\  i\ \qquad \mathcal{T}h_3 \vdash  P^i(s^i(x)=x \end{align}
\end{lemma}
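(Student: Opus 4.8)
The plan is to prove the six assertions in the order $P(0)=0$, $P(s(x))=x$, then $s(P(x))=x$ and $P(x)<x$, then the iterated identity $P^i(s^i(x))=x$, and finally the descending-loop characterisation, which is the only item that genuinely needs the earlier ones. For $P(0)=0$: substituting $0$ for $x$ in axiom (\ref{P}) turns the guard of the conditional into $0\neq 0$, so by the axiom of the \textbf{if}-instruction the branch is skipped and the program reduces to $\{w:=0\}w$, whence $\mathcal{T}h_3\vdash P(0)=0$. For $P(s(x))=x$: substituting $s(x)$ for $x$ in (\ref{P}), axiom (\ref{I}) makes the guard true, so $P(s(x))$ is computed by $\{w:=0;\,\textbf{while }s(w)\neq s(x)\textbf{ do }w:=s(w)\textbf{ od}\}w$; by (\ref{M}) the test $s(w)\neq s(x)$ is equivalent to $w\neq x$, so this is precisely the loop of axiom (\ref{standard}) up to the name of the loop variable (cf. Lemma \ref{lemat2}), and it halts with $w=x$; hence $\mathcal{T}h_3\vdash P(s(x))=x$.

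Next I would extract from axiom (\ref{standard}) the auxiliary fact $\mathcal{T}h_3\vdash x\neq 0\implies\exists_z\,x=s(z)$: when $x\neq 0$ the body of the \textbf{while}-loop in (\ref{standard}) runs at least once, so executing the augmented loop $\{y:=0;\,\textbf{while }y\neq x\textbf{ do }p:=y;\,y:=s(y)\textbf{ od}\}$ and reading off its exit gives $x=y=s(p)$, and rule $R_6$ introduces the existential quantifier. Then for $x\neq 0$ one fixes such a $z$; by $P(s(x))=x$ instantiated at $z$ we get $P(x)=P(s(z))=z$, hence $s(P(x))=s(z)=x$; and combining $s(P(x))=x$ with the already-proved $\forall_x\,x<s(x)$ instantiated at $P(x)$ gives $P(x)<s(P(x))=x$.

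The iterated identity is proved by induction on $i$ in the metatheory. For $i=0$ it reads $x=x$. Assuming $\mathcal{T}h_3\vdash P^i(s^i(x))=x$, we have $P^{i+1}(s^{i+1}(x))=P^i\bigl(P(s(s^i(x)))\bigr)$, and the identity $P(s(u))=u$ (namely $P(s(x))=x$ with the term $s^i(x)$ in place of $x$) rewrites this to $P^i(s^i(x))$, which the induction hypothesis equates with $x$. This produces one theorem of $\mathcal{T}h_3$ for each $i$, which is the content of claim (\ref{PoprzS}).

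Finally, for the descending-loop characterisation: by the ascending-loop characterisation of $<$ obtained at the end of the previous subsection, $(x<y)\equiv\{w:=x;\,\textbf{while }w\neq y\textbf{ do }w:=s(w)\textbf{ od}\}(w=y)$, so it suffices to prove that formula equivalent to $\{w:=y;\,\textbf{while }w\neq x\textbf{ do }w:=P(w)\textbf{ od}\}(w=x)$. Rewriting each \textbf{while}-loop by the iteration-quantifier identities used in the ``Scheme of induction'' subsection turns the two sides into $\{w:=x\}\bigcup\{\textbf{if }w\neq y\textbf{ then }w:=s(w)\textbf{ fi}\}(w=y)$ and $\{w:=y\}\bigcup\{\textbf{if }w\neq x\textbf{ then }w:=P(w)\textbf{ fi}\}(w=x)$, i.e. reduces the goal to the equivalence of ``there is an $i$ with $s^i(x)=y$'' and ``there is an $i$ with $P^i(y)=x$''. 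The implication from left to right follows by applying $P^i$ to $s^i(x)=y$ and invoking $P^i(s^i(x))=x$; the converse is obtained symmetrically, using the standardness axiom (\ref{standard}) (which pins down the finite successor-height of every value) together with $s(P(u))=u$ for $u\neq 0$, and is assembled with the iteration-quantifier rule $R_4$ and the assignment axioms $Ax_{14},Ax_{23}$ exactly as in the proof of Lemma \ref{lemat2}. I expect this last item to be the main obstacle: the other five are short chains of substitutions into the axioms (\ref{I}),(\ref{M}),(\ref{standard}),(\ref{P}) and the ordering lemmas, whereas here the intuitive picture that ``the $P$-loop undoes the $s$-loop'' must be turned into an actual derivation inside the calculus of programs.
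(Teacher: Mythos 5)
The paper states Lemma \ref{l5.10} with no proof whatsoever, so there is no argument of the author's to compare yours against; your proposal fills a gap rather than rederiving an existing proof. For five of the six items your route is sound and essentially the only reasonable one: $P(0)=0$ by killing the conditional in (\ref{P}) with $Ax_{20}$; $P(s(x))=x$ by using (\ref{I}) and (\ref{M}) to rewrite the guard $s(w)\neq s(x)$ as $w\neq x$ and recognising the loop of axiom (\ref{standard}); surjectivity of $s$ on nonzero elements extracted from (\ref{standard}); and the meta-induction for (\ref{PoprzS}). One caution: the augmented-loop step that yields $x\neq 0\implies\exists_z\,x=s(z)$ needs an actual loop invariant ($y\neq 0\implies y=s(p)$ after each pass through the body $p:=y;\,y:=s(y)$), established via $Ax_{21}$ and the $\omega$-rule $R_3$; the postcondition-preservation rule quoted in the Addition subsection only lets you add instructions without changing the postcondition, it does not let you read off a new property of the added variable at exit. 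This is a sketch-level gap, not a wrong idea.

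The genuine problem sits in the fourth item, and it is a defect of the statement as much as of your proof. Both loop formulas --- the ascending one $\{w:=x;\textbf{while }w\neq y\textbf{ do }w:=s(w)\textbf{ od}\}(w=y)$ from the preceding (equally unproved) lemma and the descending one here --- evaluate to \textbf{true} when $x=y$, since the guard fails at once and the loop exits with $w=x=y$, whereas $x<y$ is then false. So neither equivalence, read literally, can be a theorem of a consistent $\mathcal{T}h_3$ all of whose models are standard; what the two programs actually characterise is $x\leq y$. Your reduction of the descending formula to the ascending one is internally consistent --- the two loop formulas really are equivalent to each other --- but it inherits this boundary defect and rests on an unproved premise. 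Moreover the converse direction of your reduction is not literally ``symmetric'': $P$ is not injective ($P(0)=P(s(0))=0$), so from $P^i(y)=x$ you must split on whether the descending chain bottoms out at $0$ before reaching $x$; if $x\neq 0$ no intermediate value can be $0$ and $s(P(u))=u$ applies throughout, while if $x=0$ you fall back on (\ref{standard}) directly. With the statement corrected to $x\leq y$ (or restricted by $x\neq y$) and that case split made explicit, your argument goes through.
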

We shall prove the fundamental property of the predecessor operator.
\begin{theorem}\label{zmniej}
\[\mathcal{T}h_3 \vdash \forall_x\,\{\textbf{while }x\neq 0 \textbf{ do } x:=P(x) \textbf{ od} \}(x=0) \]
\end{theorem}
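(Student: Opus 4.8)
The statement asserts that the program $\{\textbf{while }x\neq 0 \textbf{ do } x:=P(x) \textbf{ od}\}$ halts for every $x$ and leaves $x=0$. The natural strategy is to reduce this halting property to the halting property already guaranteed by axiom (\ref{standard}), which says the program $\{y:=0; \textbf{while }y\neq x\textbf{ do }y:=s(y)\textbf{ od}\}(x=y)$ halts. The intuition is that counting \emph{down} with $P$ from $x$ mirrors counting \emph{up} with $s$ to $x$, and Lemma~\ref{l5.10}, in particular the equivalence $(x<y)\equiv\{w:=y;\textbf{while }w\neq x\textbf{ do }w:=P(w)\textbf{ od}\}(w=x)$ together with $P^i(s^i(x))=x$, packages exactly this correspondence.

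\textbf{Key steps, in order.} First I would rephrase the target formula using the iteration quantifier: by the algorithmic-logic equivalence between $\textbf{while}$ and $\bigcup$ (as used in the first Lemma of Section~\ref{ATN}),
\[
\{\textbf{while }x\neq 0\textbf{ do }x:=P(x)\textbf{ od}\}(x=0)\ \equiv\ \bigcup\{\textbf{if }x\neq 0\textbf{ then }x:=P(x)\textbf{ fi}\}(x=0),
\]
so it suffices to prove $\mathcal{T}h_3\vdash\forall_x\,\bigcup\{\textbf{if }x\neq 0\textbf{ then }x:=P(x)\textbf{ fi}\}(x=0)$. Second, I would invoke axiom (\ref{standard}): for fixed $x$, running $y:=0;\textbf{while }y\neq x\textbf{ do }y:=s(y)\textbf{ od}$ terminates, which means there is an iteration count $i$ with $s^i(0)=x$; formally this is the theorem $\{y:=0\}\bigcup\{y:=s(y)\}(x=y)$ from the first Lemma. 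Third, using $P^i(s^i(x))=x$ from (\ref{PoprzS}) — equivalently $P^i(x)=0$ when $x=s^i(0)$ — I would show that for this same $i$ the down-counting program reaches $0$: $\{x:=P(x)\}^i(x=0)$ holds. Fourth, I would promote this "for the witnessing $i$" statement to the existential iteration statement $\bigcup\{x:=P(x)\}(x=0)$ via axiom $Ax_{23}$ and rule $R_4$ (the same mechanism used in Lemma~\ref{lemat2}), then transfer back from $\bigcup\{x:=P(x)\}$ to $\bigcup\{\textbf{if }x\neq 0\textbf{ then }x:=P(x)\textbf{ fi}\}$ (harmless since once $x=0$ the guarded body is a no-op), and finally add the quantifier $\forall_x$ by rule $R_7$.

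\textbf{Main obstacle.} The delicate point is the bookkeeping that ties the \emph{same} iteration index $i$ to both the up-count and the down-count. Axiom (\ref{standard}) gives termination of the up-counting loop, which I must convert into the existence of a concrete numeral $i$ with $x=s^i(0)$; this uses that $\mathcal{T}h_3$ is categorical so every element is standard, but syntactically it has to be threaded through the $\bigcup$-quantifier rather than argued semantically. Then I must apply $P^i$ to $s^i(0)$ and use (\ref{PoprzS}) at that specific $i$ to land exactly on $0$, rather than merely on some value $\leq x$. Carrying this through purely with the inference rules $R_4$, $Ax_{23}$, and the assignment axioms — without ever reasoning about "the computation sequence" — is the technical heart of the argument; everything else is routine manipulation of $\textbf{while}/\bigcup$ equivalences and distribution of assignments already established earlier in the section.
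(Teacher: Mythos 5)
Your proposal is correct and follows essentially the same route as the paper: the paper likewise proves, for each numeral $i$, an implication whose antecedent says the up-counting loop of axiom (\ref{standard}) halts after $i$ guarded steps (i.e.\ $x=s^i(0)$) and whose consequent, obtained via Lemma~\ref{l5.10}(\ref{PoprzS}), says $\{x:=P(x)\}^i(x=0)$, and then discharges the infinite family with the $\omega$-rule (the paper uses the \textbf{while}-form $R_3$ where you use the $\bigcup$-form $R_4$ with $Ax_{23}$, an immaterial difference given the \textbf{while}/$\bigcup$ equivalence) followed by modus ponens with axiom (\ref{standard}) and quantifier introduction. The ``main obstacle'' you identify — threading the same index $i$ syntactically through the $\omega$-rule rather than arguing semantically — is exactly the technical heart of the paper's proof.
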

\begin{proof}
For every $i\in N$ the following formula is a theorem of AL
\[\forall_x\,\{y:=0;(\mathbf{if}\ y \neq x\ \mathbf{then}\ y:=s(y)\ \mathbf{fi})^i\}(x=y)  \implies \{y:=0;(\mathbf{if}\ y \neq x\ \mathbf{then}\ y:=s(y)\ \mathbf{fi})^i\}(x=y) . \]
We use the scheme of mathematical induction and the lemma   \ref{l5.10}(\ref{PoprzS}), to prove that for every   $i\in N$, the following formula is a theorem of the theory  $\mathcal{T}h_3$  
\[\forall_x\,\{y:=0;(\mathbf{if}\ y \neq x\ \mathbf{then}\ y:=s(y)\ \mathbf{fi})^i\}(x=y)  \implies \{(\mathbf{if}\ x \neq 0\ \mathbf{then}\ x:=P(x)\ \mathbf{fi})^i\}(x=0) . \]
Note, the antecedent in each implication asserts   $x=s^i(0)$, and the successor of the implication asserts  $0=P^i(x)$.\\
Hence we can apply the axiom  $Ax_{21}$ of AL and obtain that for every   $i \in N$ 
\[\mathcal{T}h_3 \vdash \forall_x\,\{y:=0;(\mathbf{if}\ y \neq x\ \mathbf{then}\ y:=s(y)\ \mathbf{fi})^i\}(x=y)  \implies \{\mathbf{while}\ x \neq 0\ \mathbf{do}\ x:=P(x)\ \mathbf{od}\}(x=0) . \]
Now, we apply the inference rule   $R_6$ to obtain
\[\mathcal{T}h_3 \vdash \forall_x\,\{y:=0;\mathbf{while}\ y \neq x\ \mathbf{do}\ y:=s(y)\ \mathbf{od}\}(x=y)  \implies \{\mathbf{while}\ x \neq 0\ \mathbf{do}\ x:=P(x)\ \mathbf{od}\}(x=0) . \]
The antecedent of this implication is the axiom   (\ref{standard}) of natural numbers. We deduce (by the rule $R_1$) \\
\hspace*{1cm}$\mathcal{T}h_3 \vdash \forall_x\, \{\mathbf{while}\, x \neq 0\, \mathbf{do}\, x:=P(x)\, \mathbf{od}\}(x=0).$  \hspace*{1.6cm}
\end{proof} 
\textbf{Remark}. This theorem states that Euclid's algorithm halts if one of its arguments is one. Below,we shall prove the halting property in general case.   \\
\textbf{Another remark}.  Every model $\mathfrak{M}$ of the theory   $\mathcal{T}h_1$ such that Euclid's algorithm halts when one of arguments is equal 1, is isomorphic to the standard model $\mathfrak{N}_0$ of natural numbers.  \\ 
\textbf{End of remarks}.\medskip\\
We need the following inference rule
\begin{lemma} \hfill \\
 Let  $\tau$ be a term such that no variable of a program $M$ occurs in it, $Var(\tau)\cap Var(M)=\emptyset$. If the formula $\bigl((x=\tau ) \implies M\, (x=P(\tau))\bigr)$ is a theorem of the theory   $\mathcal{T}h_3$, then the formula   \\ $\{\textbf{while } x \neq 0 \textbf{ do } M \textbf{ od}\}(x=0)$ is a theorem of the theory too. Hence,  the following inference rule is sound
 \[\mathcal{T}h_3\vdash\dfrac{ (x=\tau) \implies M\, (x=P(\tau))}{ \{\textbf{while } x \neq 0 \textbf{ do } M \textbf{ od}\}(x=0)}\]
 in the theory   $\mathcal{T}h_3$
\end{lemma}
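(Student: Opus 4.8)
The plan is to copy, almost verbatim, the closing part of the proof of Theorem~\ref{zmniej}, with the program $M$ in the role there played by the assignment $x:=P(x)$ and with the hypothesis $(x=\tau)\implies M\,(x=P(\tau))$ in the role there played by the identity $P^i(s^i(x))=x$ of Lemma~\ref{l5.10}(\ref{PoprzS}). Concretely, I would first prove that for every $i\in N$
\[\mathcal{T}h_3\vdash \{y:=0;(\textbf{if }y\neq x\textbf{ then }y:=s(y)\textbf{ fi})^i\}(x=y)\ \implies\ (\textbf{if }x\neq 0\textbf{ then }M\textbf{ fi})^i(x=0),\]
then apply the axiom $Ax_{21}$ governing conditionals and loops — using that the postcondition $x=0$ falsifies the guard $x\neq 0$ — to replace each consequent by $\{\textbf{while }x\neq 0\textbf{ do }M\textbf{ od}\}(x=0)$, then apply the inference rule $R_6$ to collapse the family of iterates into
\[\mathcal{T}h_3\vdash\{y:=0;\textbf{while }y\neq x\textbf{ do }y:=s(y)\textbf{ od}\}(x=y)\ \implies\ \{\textbf{while }x\neq 0\textbf{ do }M\textbf{ od}\}(x=0),\]
observe that the antecedent here is precisely axiom~(\ref{standard}) of $\mathcal{T}h_3$, and conclude by modus ponens ($R_1$) that $\mathcal{T}h_3\vdash\{\textbf{while }x\neq 0\textbf{ do }M\textbf{ od}\}(x=0)$. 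This is exactly the chain of steps used at the end of the proof of Theorem~\ref{zmniej}; only the very first link is new.

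That first link splits into two parts. First I would turn the single hypothesis into its numeric decrement instances: since $Var(\tau)\cap Var(M)=\emptyset$ the value of $\tau$ is not disturbed by $M$, so $(x=\tau)\implies M\,(x=P(\tau))$ genuinely says that one pass through $M$ replaces $x$ by its predecessor; combining this theorem with $P(s(z))=z$ (Lemma~\ref{l5.10}) and substituting closed numerals for the free variables of $\tau$ yields, for every $i\ge 1$,
\[\mathcal{T}h_3\vdash (x=s^{i}(0))\ \implies\ M\,(x=s^{i-1}(0)).\]
Second, I would prove the displayed implication for each $i$ by induction on $i$: the case $i=0$ is trivial, and for the step the antecedent pins $x$ down to one of $0,s(0),\dots,s^{i+1}(0)$, so one splits on $x=0$ versus $x\neq 0$ by the \textbf{if}-axiom — if $x=0$ the leading $\{\textbf{if }x\neq 0\textbf{ then }M\textbf{ fi}\}$ does nothing and $(\textbf{if}\dots)^{i+1}(x=0)$ reduces to $(\textbf{if}\dots)^{i}(x=0)$, while if $x\neq 0$ the leading instruction runs $M$, which by the numeric instance above lowers $x$ by one, after which the inductive hypothesis applies. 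Pushing $M$ through the implication requires only the congruence of the program operator and the distribution laws for assignment ($Ax_{15}$ and the companion facts quoted in Section~\ref{ATN}).

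The main obstacle — essentially the only step that is not pure bookkeeping — is this extraction of the numeric decrement facts from the lone hypothesis, which speaks only about states where $x$ equals the single term $\tau$. This is exactly where the disjointness condition $Var(\tau)\cap Var(M)=\emptyset$ earns its keep: it makes $\tau$ invariant under $M$, and it keeps the implication a theorem after substituting numerals for the free variables of $\tau$, so that every value $s^{i}(0)$ is genuinely reached (for the terms $\tau$ that occur in the intended applications — a variable foreign to $M$, or a term of the form $s(z)$ — this coverage is immediate). Granting these instances, the rest is a transcription of the final steps of the proof of Theorem~\ref{zmniej}, and soundness of the displayed inference rule follows at once.
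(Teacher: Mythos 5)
Your proposal is sound, but it is organized differently from the paper's proof. The paper does not go back to axiom (\ref{standard}); it compares the two loops directly: from the premise (with $\tau$ read as a fresh variable $k$) together with Lemma \ref{l5.10} it derives, for every $i$, the theorem $\{x:=P(x)\}^i(x=0)\implies M^i(x=0)$, strengthens the consequent to $\{\textbf{while }x\neq 0\textbf{ do }M\textbf{ od}\}(x=0)$, applies the $\omega$-rule for the loop in the antecedent to obtain $\{\textbf{while }x\neq 0\textbf{ do }x:=P(x)\textbf{ od}\}(x=0)\implies\{\textbf{while }x\neq 0\textbf{ do }M\textbf{ od}\}(x=0)$, and then discharges the antecedent by citing Theorem \ref{zmniej} and modus ponens. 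You instead inline the entire proof of Theorem \ref{zmniej} with $M$ in place of $x:=P(x)$, starting again from axiom (\ref{standard}). Both routes are correct; the paper's factoring through Theorem \ref{zmniej} is shorter, because its per-$i$ step (iterates of $x:=P(x)$ versus iterates of $M$) follows in one move from the decrement instances, whereas your per-$i$ implication from the counting-up program requires the extra induction and the case split on $x=0$ versus $x\neq 0$ that you describe. What your write-up makes explicit, and the paper suppresses, is that the premise speaks only of states where $x$ equals the single term $\tau$, so the decrement instances $(x=s^{i}(0))\implies M\,(x=s^{i-1}(0))$ are available only when $\tau$ can be instantiated at every numeral; the paper's own proof silently rewrites $\tau$ as a variable $k$, which is the only form in which the lemma is subsequently applied and, as your closing remark indicates, the only form in which the statement is safe as written. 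One cosmetic point: the rule that collapses the family of guarded iterates into the \textbf{while} in the antecedent is $R_3$, not $R_6$; you inherited that mislabel from the printed proof of Theorem \ref{zmniej}.
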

\begin{proof}
For every $i$ the following formula is a theorem of AL
\[\{x:=P(x) \}^i(x=0) \implies \{M \}^i (x=0).\]
We are using the premise $((x=k)\implies \{M\}(x=P(k))$. Hence, for every   $i\in N$ 
\[\{x:=P(x) \}^i(x=0) \implies \{\textbf{while }x \neq 0 \textbf{ do }M\textbf{od} \}(x=0). \]
Now, we apply the rule  $R_3$ and obtain
\[\{\textbf{while }x\neq 0 \textbf{ do }x:=P(x)\textbf{ od} \}(x=0) \implies \{\textbf{while }x \neq 0 \textbf{ do }M\textbf{od} \}(x=0). \]
The antecedent of this implication has been proved earlier  (Th \ref{zmniej}), We apply the rule $R_1$ and finish the proof. 
\end{proof} 
The following lemma is useful in the proof of the main theorem.
\begin{lemma}\label{lemat19}
 The following inference rule is sound in the theory  $\mathcal{T}h_3$:
\[\mathcal{T}h_3\vdash\dfrac{ (x=k) \implies M\, (x<P(k))}{ \{\textbf{while } x \neq 0 \textbf{ do } M \textbf{ od}\}(x=0)}\] 
\end{lemma}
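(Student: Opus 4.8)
The plan is to follow, almost verbatim, the proof of the preceding inference rule (the ``$x=P(\tau)$'' version), the only new ingredient being that the premise now forces each execution of $M$ to drop $x$ by \emph{at least} one predecessor step instead of exactly one. First I would convert the premise into an iteration-counting statement. Using $(x=k)\implies M(x<P(k))$ together with the description of $<$ via the predecessor loop from Lemma~\ref{l5.10}, namely $(x<y)\equiv\{w:=y;\textbf{while }w\neq x\textbf{ do }w:=P(w)\textbf{ od}\}(w=x)$, I would derive that for every $i\in N$
\[\mathcal{T}h_3\vdash\bigl(x\neq 0\,\land\,\{x:=P(x)\}^{i+1}(x=0)\bigr)\implies M\bigl(\{x:=P(x)\}^{i}(x=0)\bigr).\]
The content of this step is the monotonicity of the iterated predecessor along $<$: if $a<b$ then, for every $i$, the fact that $b$ reaches $0$ in $i$ steps of $x:=P(x)$ forces $a$ to reach $0$ in $i$ such steps; this is immediate from Lemma~\ref{l5.10}, because $a<b$ says exactly that $a$ sits on the predecessor-chain issuing from $b$. (As in the preceding lemma, the premise is only ever consulted with $x$ equal to a running value of the loop, hence with $x\neq 0$, so the degenerate instance $k=0$ plays no role.)

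Next I would prove, by the induction scheme established earlier in this section, that for every $i\in N$
\[\mathcal{T}h_3\vdash\{x:=P(x)\}^{i}(x=0)\implies\{\textbf{while }x\neq 0\textbf{ do }M\textbf{ od}\}(x=0).\]
For $i=0$, and in general whenever $x=0$, axiom $Ax_{21}$ identifies the loop with the empty program and $x=0$ persists. For the inductive step, assume the claim for $i$ and suppose $\{x:=P(x)\}^{i+1}(x=0)$; if $x=0$ we are done as just noted, and if $x\neq 0$ then $Ax_{21}$ unfolds the loop to $\{M;\textbf{while }x\neq 0\textbf{ do }M\textbf{ od}\}$, the implication of the first paragraph turns the precondition into $M\bigl(\{x:=P(x)\}^{i}(x=0)\bigr)$, and the induction hypothesis, transported under $M$, closes the remaining loop.

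From here the proof is identical to that of the preceding lemma: the rule $R_3$ upgrades the family just obtained to
\[\mathcal{T}h_3\vdash\{\textbf{while }x\neq 0\textbf{ do }x:=P(x)\textbf{ od}\}(x=0)\implies\{\textbf{while }x\neq 0\textbf{ do }M\textbf{ od}\}(x=0),\]
whose antecedent is Theorem~\ref{zmniej}, so rule $R_1$ gives $\mathcal{T}h_3\vdash\{\textbf{while }x\neq 0\textbf{ do }M\textbf{ od}\}(x=0)$, i.e.\ soundness of the rule. The genuinely new work — and the only place an obstacle could hide — is the monotonicity statement of the first paragraph; everything else ($Ax_{21}$, the induction scheme, $R_1$, $R_3$, Theorem~\ref{zmniej}) is reused unchanged. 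I expect the monotonicity to go through smoothly from the $P$-loop form of $<$, being the inequality analogue of the equality bookkeeping already carried out in the preceding lemma.
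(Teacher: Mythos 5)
Your proof is correct and follows exactly the route the paper intends: the paper itself gives no proof here, saying only that it is ``similar to the proof of the preceding lemma'' and leaving it as an exercise, and your argument is precisely that proof adapted to the strict-decrease premise (comparing each pass of $M$ with iterated $x:=P(x)$, then closing with rule $R_3$, Theorem~\ref{zmniej}, and $R_1$). The one genuinely new ingredient you identify --- the monotonicity of the iterated predecessor along $<$, needed because $M$ may now overshoot --- is exactly the right adjustment, and your choice to prove the implication directly into the full \textbf{while}-$M$ loop (rather than into $\{M\}^i(x=0)$ as in the preceding lemma) correctly absorbs the possibility of early termination.
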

\begin{proof}
 The proof is similar to the proof of preceding lemma. We leave it as an exercise.
\end{proof}
 
\begin{corollary}\label{skoncz}
Let $x$ be an arbitrary number $x \in N$. Each descending sequence such that   $a_1=x$ and for every  $i,\  a_{i+1}<a_i$, is finite and contains at most   $x$ elements.
\end{corollary}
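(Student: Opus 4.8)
The plan is to derive Corollary~\ref{skoncz} directly from Lemma~\ref{lemat19} together with the theorem~\ref{zmniej} on the halting of the predecessor loop, by showing that any strictly descending sequence $a_1 = x, a_2, a_3, \dots$ with $a_{i+1} < a_i$ can be ``simulated'' by the program $\{\textbf{while }x \neq 0\textbf{ do }x:=P(x)\textbf{ od}\}$, whose computation on $x$ is finite and — as the antecedent of the implications in the proof of Theorem~\ref{zmniej} already records — visits exactly the values $s^{x}(0), s^{x-1}(0), \dots, s(0), 0$, i.e. has length $x$. First I would make precise the sense in which a ``descending sequence'' is an object the theory can talk about: since $\mathcal{T}h_3$ is categorical and every model is isomorphic to $\mathfrak{N}_0$, it suffices to argue that in the standard model no strictly descending sequence starting at $x$ has more than $x$ terms, and this is what the machinery of the preceding lemmas delivers syntactically.

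The key steps, in order, would be: (1) Observe that from $a_{i+1} < a_i$ and Lemma~\ref{l5.10} (the clause $x \neq 0 \implies P(x) < x$, and the characterisation $(x<y) \equiv \{w:=y;\textbf{while }w\neq x\textbf{ do }w:=P(w)\textbf{ od}\}(w=x)$) each step down from $a_i$ to $a_{i+1}$ consumes at least one application of $P$; more precisely, $a_{i+1} < a_i$ implies $a_{i+1} \leq P(a_i)$, hence $s(a_{i+1}) \leq a_i$. (2) Conclude by an easy induction (the scheme of induction having been proved available in $\mathcal{T}h_3$ in the subsection ``Scheme of induction'') that $a_i \leq P^{i-1}(x)$, equivalently $s^{i-1}(a_i) \leq x$. (3) Since each $a_i$ is a natural number, $a_i \geq 0$, so $s^{i-1}(0) \leq s^{i-1}(a_i) \leq x$, which forces $i-1 \leq x$, i.e. the sequence has at most $x+1$ indices; if moreover one insists the sequence be strictly descending all the way (so that it does not stabilise), the last term must be $0$ and the count tightens to ``at most $x$ elements'' in the sense of strictly decreasing steps, exactly as in the halting computation of $\{\textbf{while }x\neq 0\textbf{ do }x:=P(x)\textbf{ od}\}$ established in Theorem~\ref{zmniej}. (4) For the purely syntactic reading, instead of talking about a metatheoretic sequence, apply Lemma~\ref{lemat19} with $M$ the body that performs one descending step: the premise $(x=k) \implies M\,(x<P(k))$ is precisely the hypothesis ``$a_{i+1} < a_i$'' read as a program step, and the conclusion $\{\textbf{while }x\neq 0\textbf{ do }M\textbf{ od}\}(x=0)$ says the descent terminates — and unwinding the proof of Lemma~\ref{lemat19}, which reduces via rule $R_3$ to the computation of the canonical predecessor loop, shows the number of iterations is bounded by $x$.

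I expect the main obstacle to be purely expository rather than mathematical: pinning down what ``a descending sequence'' \emph{is} inside the formalism, since sequences are not first-class objects of the language $L$. The honest move is to note that the corollary is a statement \emph{about the standard model} (legitimate here because $\mathcal{T}h_3$ is categorical, so ``valid in $\mathfrak{N}_0$'' and ``provable in $\mathcal{T}h_3$'' coincide for the formulas that are expressible), and that its content is already fully captured by the provable formula $\forall_x\,\{\textbf{while }x\neq 0\textbf{ do }x:=P(x)\textbf{ od}\}(x=0)$ of Theorem~\ref{zmniej}: that computation is finite and, by inspection of the proof, passes through the values $s^i(0)$ for $i = x, x-1, \dots, 0$, hence has length $x$, and it dominates every other strictly descending sequence from $x$ by step~(2). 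So the proof is essentially: ``apply Lemma~\ref{lemat19} (or Theorem~\ref{zmniej}) and count the iterations, which the antecedents of its proof bound by $x$.'' The only care needed is to state the bound correctly — ``at most $x$ elements'' for a chain $a_1 > a_2 > \cdots$ of naturals below $x$ is the standard pigeonhole bound and matches the $x$-step predecessor loop exactly.
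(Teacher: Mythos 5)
Your proposal is consistent with what the paper evidently intends: the corollary is placed immediately after Lemma~\ref{lemat19} and Theorem~\ref{zmniej}, and the paper supplies \emph{no proof at all} for it (indeed, Lemma~\ref{lemat19} itself is ``left as an exercise''), so your write-up --- dominate any strictly descending chain by the predecessor loop, whose termination and iteration count are controlled by Theorem~\ref{zmniej}, and read a ``descending step'' as the premise $(x=k)\implies M\,(x<P(k))$ of Lemma~\ref{lemat19} --- is a faithful reconstruction rather than a divergent route. You also correctly identify the one real difficulty, namely that ``a descending sequence'' is not an object of the language $L$, so the statement is really a remark about the standard model justified by categoricity; the paper silently glosses over this.

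One point you noticed but did not cleanly resolve, and which is in fact a defect of the statement as printed: the sharp pigeonhole bound for a strictly descending chain of naturals with $a_1=x$ is $x+1$ terms ($x, x-1, \dots, 1, 0$), not $x$. Your step (2) correctly yields $s^{i-1}(0)\leq x$, hence $i\leq x+1$; the subsequent claim that ``the count tightens to at most $x$ elements'' because ``the last term must be $0$'' does not follow --- the chain $x > x-1 > \cdots > 0$ is strictly descending throughout and has $x+1$ elements. Either the corollary should read ``at most $x+1$ elements'' (equivalently, at most $x$ descending \emph{steps}, matching the $x$ iterations of the loop in Theorem~\ref{zmniej}), or one must additionally assume the terms are positive. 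This is an off-by-one in the source, not in your argument; just state the bound you actually proved rather than forcing agreement with the printed one.
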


\subsection{Subtraction}
The operation of subtraction is defined by the following axiom \ref{odejm}.
\begin{definition}
\[\tag{O} x\stackrel{.}{\_}y\stackrel{df}{=}\{w:=x; t:=0; \mathbf{while }\ t\neq y\ \mathbf{ do }\ t:=s(t); w:=P(w)\ \mathbf{ od} \}(w) \]
\end{definition}
\begin{lemma}\label{6.13}\begin{align}
& \mathcal{T}h_3 \vdash \forall_x\, x\stackrel{.}{\_}0 = x  \\
& \mathcal{T}h_3 \vdash \forall_x \forall_y\, x\stackrel{.}{\_}s(y) = P(x\stackrel{.}{\_}y)\\
& \label{wmn}
\mathcal{T}h_3 \vdash \forall_x \forall_y\, (x > y > 0) \implies x\stackrel{.}{\_}y < x \\
& \mathcal{T}h_3 \vdash \forall_x \forall_y\, (x < y) \implies x\stackrel{.}{\_}y =0 \bigskip \\\end{align}
\end{lemma}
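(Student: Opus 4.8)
The plan is to establish the four assertions in the stated order, reading the first two as the recursive defining equations of $\stackrel{.}{\_}$ and the last two as order properties derived from them by the scheme of induction proved above. The first two are obtained by exactly the program transformations already used for addition; the last two combine those equations with Lemma~\ref{l5.10} and $\forall_x\,x<s(x)$.

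For $x\stackrel{.}{\_}0=x$ I would instantiate the defining program \ref{odejm} at $y:=0$. Since $t$ is initialised to $0$, the guard $t\neq 0$ of the \textbf{while} loop is already false on entry, so by the axiom $Ax_{21}$ of the \textbf{while} instruction the program is equivalent to $\{w:=x;\,t:=0\}$, whence the result $w$ equals $x$; this is the argument used above for $x+0=x$ with $w:=s(w)$ removed from the body. For $x\stackrel{.}{\_}s(y)=P(x\stackrel{.}{\_}y)$ I would mirror the proof of $x+s(y)=s(x+y)$: apply the instance of Lemma~\ref{rown2} whose loop body is $t:=s(t);\,w:=P(w)$ and whose initial segment is $w:=x;\,t:=0$, i.e. peel off the last turn of the loop. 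After $x\stackrel{.}{\_}y$ has been computed the state satisfies $w=x\stackrel{.}{\_}y$ and $t=y$, and since $y\neq s(y)$ (because $y<s(y)$) the trailing conditional fires and performs $t:=s(t);\,w:=P(w)$, so the final value of $w$ is $P(x\stackrel{.}{\_}y)$.

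For the third assertion \ref{wmn} I would argue by induction on $y$. The case $y=0$ is vacuous since $y>0$ fails. For $y=s(y')$ the second equation gives $x\stackrel{.}{\_}s(y')=P(x\stackrel{.}{\_}y')$. If $y'=0$ then $x\stackrel{.}{\_}y'=x$, and $x>s(0)>0$ yields $x\neq 0$, so $P(x)<x$ by Lemma~\ref{l5.10}. If $y'\neq 0$ then $x>s(y')>y'$ gives $x>y'$, so the induction hypothesis yields $x\stackrel{.}{\_}y'<x$; combining $P(0)=0$ with $z\neq0\implies P(z)<z$ (Lemma~\ref{l5.10}) gives $P(x\stackrel{.}{\_}y')\le x\stackrel{.}{\_}y'<x$. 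In either case $x\stackrel{.}{\_}s(y')<x$.

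The fourth assertion, $(x<y)\implies x\stackrel{.}{\_}y=0$, is the step where a naive induction on $y$ breaks down: the induction step runs into the subcase $x=s(y')$, which needs the \emph{exact} value of $s(y')\stackrel{.}{\_}y'$ and not merely an inequality. This is the main obstacle, and I would remove it by strengthening the statement and proving, by simultaneous induction on $y$,
\[
\bigl(y\le x \implies y+(x\stackrel{.}{\_}y)=x\bigr)\ \land\ \bigl(x\le y \implies x\stackrel{.}{\_}y=0\bigr).
\]
The base $y=0$ uses $x\stackrel{.}{\_}0=x$ and $0+x=x$. In the step $y=s(y')$: the first conjunct follows from the induction hypothesis at $y'$, from $x\stackrel{.}{\_}s(y')=P(x\stackrel{.}{\_}y')$, from $s(P(d))=d$ for $d\neq0$ (Lemma~\ref{l5.10}), and from the commutativity and cancellativity of $+$, which are first-order consequences of the Peano axioms and hence theorems of $\mathcal{T}h_3$; the second conjunct then follows, the delicate subcase $x=s(y')$ being settled because the first conjunct at $y'$ pins down $s(y')\stackrel{.}{\_}y'=s(0)$, so that $x\stackrel{.}{\_}s(y')=P(s(0))=0$. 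The original fourth assertion is the second conjunct restricted to strict $x<y$.
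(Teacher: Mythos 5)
Your proposal is correct, but note that the paper states Lemma~\ref{6.13} with no proof at all, so there is no argument of the author's to compare yours against; what you have written supplies exactly what the paper omits. Your treatment of the first two equations is the intended one: it transposes, word for word, the paper's proofs of $x+0=x$ (drop the loop via $Ax_{21}$ since the guard $t\neq 0$ fails on entry) and of $x+s(y)=s(x+y)$ (the loop-unrolling instance of Lemma~\ref{rown2} with body $t:=s(t);\,w:=P(w)$ -- the paper itself already applies Lemma~\ref{rown2} to a two-instruction body in the addition case, so this usage is sanctioned). Your inductions for the two order properties are sound, granted that the induction metatheorem of $\mathcal{T}h_3$ applies to arbitrary (hence algorithmic) formulas and that commutativity, cancellation and transitivity are available as first-order Peano consequences, which the paper has just established are theorems of $\mathcal{T}h_3$. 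The genuinely valuable content you add is the diagnosis that a naive induction on $y$ for $(x<y)\implies x\stackrel{.}{\_}y=0$ founders on the subcase $x=s(y')$, and the repair by the strengthened simultaneous invariant $\bigl(y\le x \implies y+(x\stackrel{.}{\_}y)=x\bigr)\land\bigl(x\le y \implies x\stackrel{.}{\_}y=0\bigr)$, whose first conjunct pins down $s(y')\stackrel{.}{\_}y'=s(0)$ where an inequality would not suffice. This is precisely the kind of detail the paper's bare statement sweeps under the rug, and your version is the one a proof-checker -- the paper's own declared standard -- would demand.
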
 
\begin{remark}
Observe that we just proved that each theorem of algorithmic theory $\mathcal{T}h_2$ is also theorem of algorithmic theory $\mathcal{T}h_3$. 
\[\mathcal{T}h_2 \subsetneq \mathcal{T}h_3 \] 
\end{remark}
Theory $\mathcal{T}h_3$ is strictly richer than theory $\mathcal{T}h_2$.  
\section{Proof of correctness of Euclid's algorithm.}
The proof splits on two subgoals:
\begin{enumerate}
 \item[(\textit{i})] to prove that for any natural numbers $n$  and $m$, the computation of Euclid's algorithm is finite, \\
 i.e. we are to prove that the halting formula \ref{H} is a theorem of the theory $\mathcal{T}h_3$,
 \item[(\textit{ii})] to prove that the algorithm computes the greatest common divisor of numbers $n$ and $m$.
\end{enumerate}
It is rather easy to prove the following fact 
\begin{fact}
\begin{equation}\label{zakoncz}\mathcal{T}h_3\vdash  \left(\begin{array}{l}n \neq m\,\land \\ max(n,m)=p\end{array}\right)\implies\left\{\begin{array}{l}  \mathbf{if }\,n>m \\  \mathbf{then }\\ \quad  n:=n\stackrel{.}{\_}m \\  \mathbf{else } \\ \quad  m:=m\stackrel{.}{\_}n \\  \mathbf{fi}\\   \end{array}\right\} (max(n,m)<p)   \end{equation} 
\end{fact}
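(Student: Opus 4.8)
The plan is to do a short case analysis: a single step of Euclid's algorithm replaces the larger of $n,m$ by its difference with the smaller one, and subtracting a positive number strictly decreases that argument, so the maximum must drop below $p$. Concretely, I would first invoke the trichotomy law (Lemma \ref{trichot}): since the hypothesis gives $n \neq m$, the theory $\mathcal{T}h_3$ proves $n>m \lor m>n$, and I split on this disjunction. The two cases are mirror images of each other, because both the program and the formula $max(n,m)=p$ are symmetric in $n$ and $m$, so it suffices to carry one of them out in detail.

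In the case $n>m$: by the axiom governing the \textbf{if}-instruction, under the assumption $n>m$ the branching program is equivalent to the single assignment $\{n:=n\stackrel{.}{\_}m\}$, so it remains to prove $\mathcal{T}h_3\vdash (n>m \land max(n,m)=p)\implies \{n:=n\stackrel{.}{\_}m\}(max(n,m)<p)$. Applying the assignment axiom $Ax_{14}$, the consequent unfolds to the open formula $max(n\stackrel{.}{\_}m,m)<p$. From $n>m$ we get $max(n,m)=n$, hence $p=n$; and we have both $m<n$ and $n\stackrel{.}{\_}m<n$, the latter by Lemma \ref{6.13} (formula \ref{wmn}), whence $max(n\stackrel{.}{\_}m,m)<n=p$ by the elementary property — a theorem of $\mathcal{T}h_0$, and therefore of $\mathcal{T}h_3$ — that the maximum of two numbers each below $n$ is itself below $n$. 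The case $m>n$ is handled identically using the \textbf{else} branch.

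The only point that needs attention, rather than being a real obstacle, is the hypothesis of formula \ref{wmn}: it requires the subtracted argument to be positive ($x>y>0$), so in the case $n>m$ one actually needs $m>0$ (and $n>0$ in the symmetric case) — otherwise $n\stackrel{.}{\_}0=n=p$ and the maximum would not decrease. This is precisely the precondition $n\neq 0 \land m\neq 0$ under which the halting formula \ref{H} is asserted, and it is preserved along the loop of \ref{E} because $n>m$ forces $n\stackrel{.}{\_}m\neq 0$; in the write-up I would either strengthen the hypothesis of this Fact by $min(n,m)>0$ or simply note that it is only ever applied in that context. Once that bookkeeping is settled, the argument reduces entirely to trichotomy, the \textbf{if}-axiom, $Ax_{14}$, and Lemma \ref{6.13}, with nothing hard remaining.
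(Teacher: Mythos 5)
Your proof follows essentially the same route as the paper's, which simply cites the \textbf{if}-instruction axiom ($Ax_{20}$) and Lemma \ref{6.13}, formula \ref{wmn}; your version just spells out the trichotomy split, the reduction to a single assignment, and the unfolding via the assignment axiom. Your observation about the positivity hypothesis is a genuine catch the paper does not address: as literally stated the Fact fails for, e.g., $n>0$, $m=0$ (since $n\stackrel{.}{\_}0=n$), so the statement does need the strengthening by $min(n,m)>0$ that you propose, which is harmless since the Fact is only applied under the precondition $n\neq 0\land m\neq 0$ of the halting formula \ref{H}.
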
 
\begin{proof}
In the proof we use the axiom Ax$_{20}$ of \textbf{if} instruction     and lemma \ref{6.13}, equation \ref{wmn}. 
\end{proof}
Now, by lemma  \ref{lemat19}, we obtain the desired formula \ref{H}. Hence the computations of Euclid's algorithm, in any structure that is a model of theory $\mathcal{T}h_3$, are finite.   \\
It remains to be proved 
\begin{fact}
\begin{equation}\label{niezm}\mathcal{T}h_3\vdash\Bigl ( (gcd(n,m)=p)\implies\left\{\begin{array}{l}  \mathbf{if }\,n>m \\  \mathbf{then }\\ \quad  n:=n\stackrel{.}{\_}m \\  \mathbf{else } \\ \quad  m:=m\stackrel{.}{\_}n \\  \mathbf{fi}\\   \end{array}\right\} (gcd(n,m)=p)\Bigr ). \end{equation} 
\end{fact}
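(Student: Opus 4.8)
The plan is to prove the invariance of $\gcd(n,m)$ under one iteration of the loop body by a straightforward case analysis on the branch taken, reducing everything to a purely first-order arithmetical fact that is a theorem of $\mathcal{T}h_0$ (hence of $\mathcal{T}h_3$). First I would apply the axiom $Ax_{20}$ of the \textbf{if} instruction, which splits the algorithmic formula
\[
\left\{\mathbf{if}\,n>m\,\mathbf{then}\,n:=n\stackrel{.}{\_}m\,\mathbf{else}\,m:=m\stackrel{.}{\_}n\,\mathbf{fi}\right\}(gcd(n,m)=p)
\]
into the conjunction of the two conditional formulas
$\bigl(n>m \implies \{n:=n\stackrel{.}{\_}m\}(gcd(n,m)=p)\bigr)$ and
$\bigl(\lnot(n>m) \implies \{m:=m\stackrel{.}{\_}n\}(gcd(n,m)=p)\bigr)$.

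Next, for each branch I would use the assignment axiom $Ax_{14}$ to reduce $\{n:=n\stackrel{.}{\_}m\}(gcd(n,m)=p)$ to the open formula $gcd(n\stackrel{.}{\_}m,m)=p$, and similarly the other branch to $gcd(n,m\stackrel{.}{\_}n)=p$. At this point the goal is entirely first-order: it suffices to show
\[
\mathcal{T}h_3 \vdash (n>m \land gcd(n,m)=p) \implies gcd(n\stackrel{.}{\_}m,m)=p,
\]
together with the symmetric statement for $n<m$, and to dispose of the remaining case $n=m$ (where $\lnot(n>m)$ holds, $m\stackrel{.}{\_}n=0$ by lemma \ref{6.13}, and $gcd(n,0)=n=p$ trivially). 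These are exactly the classical facts that a common divisor of $n$ and $m$ is a common divisor of $n-m$ and $m$ and conversely — already noted in the introduction to have a proof in elementary Peano arithmetic $\mathcal{T}h_0$ — so by the inclusion $\mathcal{T}h_0 \subsetneq \mathcal{T}h_3$ they are available here. One must be slightly careful that $\stackrel{.}{\_}$ is truncated subtraction: in the branch actually taken we have the guard $n>m$ (resp. $m>n$), so $n\stackrel{.}{\_}m$ genuinely equals $n-m$ and the classical lemma applies without distortion.

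Finally I would reassemble the two conditionals by propositional calculus and $Ax_{20}$ to recover formula \ref{niezm}. The main obstacle — really the only non-routine point — is bookkeeping around truncated subtraction and the definition of $gcd$: one needs $gcd$ to be a function symbol (or defined predicate) available in the language with its defining first-order properties, and one must verify that the guard of each branch is strong enough to make $\stackrel{.}{\_}$ behave as ordinary subtraction there, so that the $\mathcal{T}h_0$-theorem about $\gcd(n-m,m)$ can be cited verbatim. Everything else is a mechanical application of the \textbf{if} and assignment axioms already used in the proof of Fact \ref{zakoncz}.
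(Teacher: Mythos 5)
Your proposal is correct and follows essentially the same route as the paper: the paper likewise reduces the fact to the three classical first-order identities $n>m \implies \gcd(n,m)=\gcd(n\stackrel{.}{\_}m,m)$, $m>n \implies \gcd(n,m)=\gcd(n,m\stackrel{.}{\_}n)$ and $n=m\implies\gcd(n,m)=n$ (cited from Grzegorczyk as provable already in elementary arithmetic), combined implicitly with the \textbf{if} and assignment axioms. Your version merely spells out the mechanical program-calculus steps and the $n=m$ corner of the else-branch more explicitly than the paper does.
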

In the proof we use a few useful facts
\[n>m \implies gcd(n,m)=gcd(n\stackrel{.}{\_}m,m) \]
\[m>n \implies gcd(n,m)=gcd(n,m\stackrel{.}{\_}n) \]
\[n=m \implies gcd(n,m)=n \]
All three implications are well known and we do not replicate their proofs here cf. 
 \cite{art:ag}. \\
Combining these observations with formulas  \ref{zakoncz} and \ref{niezm} we come to the conclusion that the formula expressing the correctness  of Euclid's algorithm is a theorem of theory $\mathcal{T}h_3$
\begin{theorem}\label{poprawny}
\begin{equation} \mathcal{T}h_3\vdash\left(\begin{array}{l}n_0>0 \land \\ m_0>0\end{array}\right)\implies\left\{\begin{array}{l}n:=n_0;\ m:=m_0; \\
\mathbf{while}\,n \neq  m\, \mathbf{ do }\\ \quad \mathbf{if }\,n>m \\ \quad \mathbf{then }\\ \qquad n:=n\stackrel{.}{\_}m \\ \quad \mathbf{else } \\ \qquad m:=m\stackrel{.}{\_}n \\ \quad \mathbf{fi}\\ \mathbf{ od} \end{array}\right\} (n=gcd(n_0,m_0)) \end{equation}
\end{theorem}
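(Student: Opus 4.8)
The plan is to assemble the final correctness theorem from the two subgoals already isolated in the preceding section, namely termination (the halting formula \ref{H}) and the loop invariant $gcd(n,m)=p$. First I would rewrite the target formula by applying the assignment axiom $Ax_{14}$ to the leading block $\{n:=n_0;\, m:=m_0\}$, which reduces the claim to showing that, under the precondition $n>0 \land m>0$, the \textbf{while}-loop halts and terminates in a state satisfying $n=gcd(n_0,m_0)$. Since $gcd$ is invariant under the substitution $n_0 \mapsto n$, $m_0 \mapsto m$ when $n=gcd(n,m)$ and $m=gcd(n,m)$ (a first-order fact about $gcd$), it suffices to prove partial correctness with postcondition $n=m$ together with the invariant $gcd(n,m)=gcd(n_0,m_0)$, and then combine with termination.

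Next I would establish termination of the loop body iterated as a \textbf{while}. Here the key device is Lemma \ref{lemat19}: I take $M$ to be the conditional program $\{\textbf{if }n>m\textbf{ then }n:=n\stackrel{.}{\_}m\textbf{ else }m:=m\stackrel{.}{\_}n\textbf{ fi}\}$ and the term $\tau$ to be $max(n,m)$, introducing an auxiliary variable to carry its value. Fact \ref{zakoncz} furnishes exactly the premise of that inference rule: $(n\neq m \land max(n,m)=p) \implies M(max(n,m)<p)$. Applying Lemma \ref{lemat19} (with $x$ instantiated by a fresh variable holding $max(n,m)$, using Corollary \ref{skoncz} on descending sequences) yields that the program $\{\textbf{while }n\neq m\textbf{ do }M\textbf{ od}\}$ halts, i.e. $\{\textbf{while }n\neq m\textbf{ do }M\textbf{ od}\}\textbf{true}$ is a theorem. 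Strictly, one must be a little careful that the loop guard $n\neq m$ and the decreasing quantity $max(n,m)$ cooperate, but that is precisely the content of Fact \ref{zakoncz}, so there is no real gap.

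Then I would handle partial correctness. By Fact \ref{niezm}, the body $M$ preserves $gcd(n,m)=p$; an application of the invariance rule $R_3$ (or the corresponding \textbf{while}-rule of algorithmic logic) propagates this through the loop, giving $\{\textbf{while }n\neq m\textbf{ do }M\textbf{ od}\}(gcd(n,m)=p)$ from the precondition $gcd(n,m)=p$. On exit the loop guard fails, so $n=m$ holds; combining with the invariant and the first-order fact $n=m \implies gcd(n,m)=n$ we get that the final value of $n$ equals $gcd(n_0,m_0)$, after using $gcd(n_0,m_0)=p$ at entry with $p$ chosen as $gcd(n_0,m_0)$ itself. Conjoining the halting formula from the previous paragraph with this partial-correctness statement, and pulling the result back through the assignments $n:=n_0;\, m:=m_0$, produces the theorem.

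The main obstacle, such as it is, is bookkeeping rather than mathematics: one must combine a termination assertion (of the form $K\textbf{true}$) with a partial-correctness assertion (of the form $\beta \implies K\gamma$) into a total-correctness assertion $\beta \implies K\gamma$, which in algorithmic logic requires the fact that $K\textbf{true} \land (K\gamma_1 \implies K\gamma_2)$-style manipulations behave well — concretely, that $\{\textbf{while }\dots\}\textbf{true}$ together with the invariant propagation gives $\{\textbf{while }\dots\}(n=m \land gcd(n,m)=p)$. This is a routine use of the distribution of programs over conjunction (Ax15) and the monotonicity of $K$, but it is the one place where the two subgoals genuinely have to be welded together. Everything else is an application of already-proven lemmas (\ref{zakoncz}, \ref{niezm}, \ref{lemat19}, \ref{skoncz}) and standard $\mathcal{T}h_0$-facts about $gcd$.
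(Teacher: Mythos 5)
Your proposal follows essentially the same route as the paper: termination from Fact \ref{zakoncz} fed into the rule of Lemma \ref{lemat19} with $max(n,m)$ as the decreasing quantity, partial correctness from the invariant of Fact \ref{niezm} together with the standard first-order facts about $gcd$, and then the two welded together and pulled back through the initial assignments. If anything, you are more explicit than the paper about the final bookkeeping (the combination via distribution over conjunction and the mismatch between the guard $n\neq m$ and the guard $x\neq 0$ in Lemma \ref{lemat19}), which the paper dismisses with a single sentence.
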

\begin{flushright}
This ends the proof.                    \end{flushright}
   
\subsection{Moreover} 
Euclid's algorithm has many equivalent forms. Consider the following algorithm
\[\left\{\begin{array}{l}
\textbf{while}\ n \neq m\ \textbf{do} \\
\quad \textbf{while}\ n>m\ \textbf{do}\ n\leftarrow n-m \ \textbf{od}; \\
\quad \textbf{while}\ m>n\ \textbf{do}\ m\leftarrow m-n \ \textbf{od}; \\
\textbf{od}
\end{array}
\right\} \]
Let $K$ and $M$ denote   programs, let $\alpha$ and $\beta$ denote open formulas, let $\varphi$ be any formula of calculus of programs. The following equivalence is a tautology of calculus of programs.
\[ \left\{\begin{array}{l}
\textbf{while}\ \alpha \  \textbf{do} \\
\quad \textbf{if}\ \beta\ \textbf{do}\ K \ \textbf{else}\ M \ \textbf{fi}  \\
\textbf{od}
\end{array}
\right\}\varphi  \Leftrightarrow
\left\{\begin{array}{l}
\textbf{while}\ \alpha\ \textbf{do} \\
\quad \textbf{while}\ \alpha\land\beta\ \textbf{do}\ K \ \textbf{od}; \\
\quad \textbf{while}\ \alpha\land\neg \beta\ \textbf{do}\ M \ \textbf{od}; \\
\textbf{od}
\end{array}
\right\}\varphi
\]
In our case $\beta: n>m$, $\alpha: n \neq m$. We can use the trichotomy property, see lemma \ref{trichot} and obtain  $\alpha \land \beta \equiv n>m$.\\
Now, the proof of halting property is easy. We apply yhe following theorem of algorithmic theory of natural numbers $\mathcal{T}h_3$
\[((n\stackrel{{.}}{\_}m)+(m\stackrel{.}{\_}n)=k)\implies\{\textbf{while}\ n>m\ \textbf{do}\ n:= n\stackrel{_{.}}{\_}m \ \textbf{od} \}((n\stackrel{.}{\_} m)+(m\stackrel{{.}}{\_}n)<k) \] 
A similar statement can be proved about the second instruction \textbf{while}. 
Now by axiom $Ax_{19}$ of composition of programs  we obtain 
\[((n\stackrel{{.}}{\_}m)+(m\stackrel{.}{\_}n)=k)\implies\left\{\begin{array}{l}\textbf{while}\ n>m\ \textbf{do}\ n:= n\stackrel{{.}}{\_}m \ \textbf{od}; \\
\textbf{while}\ m>n\ \textbf{do}\ m:= m\stackrel{{.}}{\_}n \ \textbf{od}
 \end{array}\right\}((n\stackrel{.}{\_} m)+(m\stackrel{{.}}{\_}n)<k) \]
One can apply lemma  \ref{lemat19} and arrive to 
\[ \left\{\begin{array}{l}
 \textbf{while}\ n\neq   m  \ \textbf{do} \\
\quad  \textbf{while}\ n>m\ \textbf{do}\ n:= n\stackrel{{.}}{\_}m \ \textbf{od}; \\
\quad \textbf{while}\ m>n\ \textbf{do}\ m:= m\stackrel{{.}}{\_}n \ \textbf{od}  \\
\textbf{od} 
 \end{array}\right\}(n = m) \]
\smallskip \\
\noindent -------------------------- \\ 
Making use of the lemma \ref{lemat19} we note another theorem of the theory $\mathcal{T}h_3$. It says that all computations of the following program are finite
\begin{theorem}
\[\mathcal{T}h_3\vdash \Bigl ( (n>m) \implies \left \{\begin{array}{l}
          r:=n; \\ 
          \mathbf{while }\, r \geq m \, \mathbf{ do } \\
          \quad r:=r\stackrel{.}{\_}m; \\
          \mathbf{od}
         \end{array}
 \right \} (0 \leq r<m ) \Bigr ) \]
\end{theorem}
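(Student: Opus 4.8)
The plan is to reduce the halting of the loop to an application of Lemma~\ref{lemat19}, exactly as was done for the main Euclid algorithm. First I would identify the loop variable: here it is $r$, which starts at $n$ and is repeatedly replaced by $r\stackrel{.}{\_}m$ as long as $r\geq m$. The natural ``measure'' is $r$ itself, and I would want to show that each pass through the loop body strictly decreases $r$ in the sense required by Lemma~\ref{lemat19}, i.e. that from $r=k$ the body produces a state with $r<P(k)$ (or, more conveniently, $r<k$, which after the reformulation suffices since we only need a descending sequence bounded by the initial value, cf.\ Corollary~\ref{skoncz}).

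The key step is the arithmetic fact that when $r\geq m$ and $m>0$, we have $r\stackrel{.}{\_}m<r$. This is essentially Lemma~\ref{6.13}, equation~\eqref{wmn}: $(r>m>0)\implies r\stackrel{.}{\_}m<r$; the boundary case $r=m$ gives $r\stackrel{.}{\_}m=0<m\leq r$ provided $m>0$, so it is covered too. Thus under the guard $r\geq m$ (together with the standing hypothesis $n>m$, which in particular forces $m$ to be positive once we observe $n>m\geq 0$ and track that $m>0$; if the intended precondition only gives $n>m$ we also need $m\neq 0$, which should be added or follows from context), the assignment $r:=r\stackrel{.}{\_}m$ strictly lowers $r$. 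Using the axiom $Ax_{20}$ of the \textbf{if} instruction and the axiom $Ax_{21}$ of the \textbf{while} instruction, I would package the body as a conditional assignment and verify the premise
\[
(r=k)\implies \{\textbf{if } r\geq m \textbf{ then } r:=r\stackrel{.}{\_}m \textbf{ fi}\}(r<P(k))
\]
in the range where the guard holds; where the guard fails the loop simply terminates and $r=k<m$ already holds.

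With that premise established, Lemma~\ref{lemat19} (applied with $M$ the loop body and the obvious substitution of $r$ for $x$, noting $r$ does not occur in the bounding term $k$) yields $\{\textbf{while } r\geq m \textbf{ do } r:=r\stackrel{.}{\_}m \textbf{ od}\}(r=0)$ unconditionally — but we want the sharper postcondition $0\leq r<m$. To get it, I would instead run the termination argument against the guard $r\geq m$ directly: the loop exits precisely when $\lnot(r\geq m)$, i.e.\ $r<m$, and $0\leq r$ is automatic since every element is a standard natural number (axiom~\eqref{standard}); trichotomy (Lemma~\ref{trichot}) gives that $\lnot(r\geq m)$ is equivalent to $r<m$. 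Combining the termination fact with this exit condition, and prefixing the initial assignment $r:=n$ (via rule $R2$, since the program always halts), gives
\[
(n>m)\implies \Bigl\{ r:=n;\ \textbf{while } r\geq m \textbf{ do } r:=r\stackrel{.}{\_}m\ \textbf{od}\Bigr\}(0\leq r<m),
\]
which is the theorem.

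The main obstacle I anticipate is not the descent argument — that is routine given Lemma~\ref{6.13} — but the bookkeeping needed to go from Lemma~\ref{lemat19}'s bare conclusion ``$r=0$'' to the stated ``$0\leq r<m$'', together with making the guard $r\geq m$ (rather than $r\neq 0$) fit the template of the lemma. One clean way around this is to note that the \textbf{while} loop with guard $r\geq m$ is provably equivalent (by the axioms of the \textbf{while} instruction plus trichotomy) to one whose termination is governed by a variable that is decremented to $0$, so that Lemma~\ref{lemat19} applies verbatim and the exit condition of the original loop then delivers $r<m$; positivity $0\leq r$ comes for free from~\eqref{standard}. Care must also be taken that $m>0$ throughout, which I would either add explicitly to the hypothesis or extract from the intended reading of $n>m$ in the standard model.
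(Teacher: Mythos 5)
Your approach matches the paper's: the paper gives no detailed proof of this theorem, merely remarking that it follows from Lemma \ref{lemat19}, and your reduction via the descent property (equation \ref{wmn} of Lemma \ref{6.13}), together with the bookkeeping needed to adapt the guard $r\geq m$ and the exit condition $r<m$ to that lemma's template, is exactly the intended argument. Your side observation is also correct and worth emphasizing: the hypothesis $n>m$ alone does not exclude $m=0$, in which case the guard $r\geq 0$ always holds and the body $r:=r\stackrel{.}{\_}0$ leaves $r$ unchanged, so the loop never terminates; the statement as printed needs $m>0$ (equivalently $m\neq 0$) added to the antecedent, just as you flagged.
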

This leads to another  
\begin{theorem}
\[\mathcal{T}h_3\vdash \Bigl ( (n>m) \implies \left \{\begin{array}{l}
          r:=n;   q:=0; \\
          \mathbf{while }\, r \geq m \, \mathbf{ do } \\
          \quad r:=r\stackrel{.}{\_}m; \\
         \quad q:=q+1 \\
          \mathbf{od}
         \end{array}
 \right \} (0 \leq r<m \land n=q*m+r) \Bigr ) \]\end{theorem}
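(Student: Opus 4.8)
The plan is to reduce the correctness of the division-with-remainder program to the halting lemma \ref{lemat19} already available in $\mathcal{T}h_3$, exactly in the spirit of how the halting formula \ref{H} of Euclid's algorithm was obtained. First I would handle the pure termination content, which is the content of the preceding theorem: under the precondition $n>m$, the loop body $r:=r\stackrel{.}{\_}m$ strictly decreases $r$ as long as $r\geq m$ and $m>0$ (this is lemma \ref{6.13}, equation \ref{wmn}, after noting $n>m$ forces $m$ positive, or else the implication is vacuous via the trichotomy lemma \ref{trichot}). Matching the shape of the inference rule in lemma \ref{lemat19}, I would take the ``measure'' to be $r$ itself: the premise to verify is that $(r=k)\land(r\geq m)\implies \{r:=r\stackrel{.}{\_}m\}(r<P(k))$ or, more conveniently, set up the while-loop with guard $r\geq m$ and apply the variant of \ref{lemat19} that accepts a strict-decrease premise. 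This yields $\{r:=n;\ \mathbf{while}\ r\geq m\ \mathbf{do}\ r:=r\stackrel{.}{\_}m\ \mathbf{od}\}(0\leq r<m)$ under $n>m$, i.e. the preceding theorem, which I may assume.

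Next I would enrich the loop with the counter $q$. The key observation is the invariant $n = q\ast m + r$, which holds initially (with $q=0$, $r=n$) and is preserved by the body: if $n=q\ast m+r$ and $r\geq m$, then $r\stackrel{.}{\_}m$ is a genuine subtraction (again using $r\geq m$ so that $\stackrel{.}{\_}$ behaves additively, a fact provable already in $\mathcal{T}h_2\subseteq\mathcal{T}h_3$), hence $n=(q+1)\ast m+(r\stackrel{.}{\_}m)$. So the body preserves $n=q\ast m+r$; this is a partial-correctness statement, proved by the invariant rule for \textbf{while} loops in the calculus of programs. Combining the termination fact from the previous step with this invariant — formally, using that the program halts (so the algorithmic formula is not vacuously \textbf{false}) together with the fact that on exit the guard $r\geq m$ fails, giving $r<m$, and the preserved invariant gives $n=q\ast m+r$ — I obtain the postcondition $0\leq r<m\land n=q\ast m+r$.

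Concretely, the steps in order are: (1) invoke lemma \ref{6.13}/\ref{wmn} and trichotomy to see the body strictly decreases $r$ while $r\geq m$; (2) apply lemma \ref{lemat19} (or its $\mathbf{while}\ r\geq m$ analogue) to get termination with $0\leq r<m$, i.e. the previous theorem; (3) prove the invariant $n=q\ast m+r$ is established by the initial assignments and preserved by the body, using the additivity of $\stackrel{.}{\_}$ under $r\geq m$; (4) use the $\mathbf{while}$-rule for partial correctness together with the established termination to conclude $\{r:=n;\ q:=0;\ \mathbf{while}\ r\geq m\ \mathbf{do}\ r:=r\stackrel{.}{\_}m;\ q:=q+1\ \mathbf{od}\}(0\leq r<m\land n=q\ast m+r)$ under $n>m$; (5) discharge the hypothesis $n>m$ by the deduction/implication-introduction rule $R_1$.

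The main obstacle I expect is step (4): cleanly combining the termination result (an algorithmic formula $K(\ldots)$ asserting the result is defined) with the loop invariant into a single algorithmic formula of the form $K(0\leq r<m\land n=q\ast m+r)$. One must be careful that adding the counter $q$ and its update $q:=q+1$ does not disturb the termination argument — but since $q$ does not occur in the guard $r\geq m$ nor in the term governing the decrease, the same auxiliary inference rule used in the addition section (``a loop that terminates still terminates when its body is augmented with assignments to fresh variables'') applies, so this is a technical rather than conceptual difficulty. Once the termination transfers and the invariant is in place, the conjunction in the postcondition follows because on loop exit the negated guard yields $r<m$ and $0\leq r$ is immediate, while $n=q\ast m+r$ is the preserved invariant.
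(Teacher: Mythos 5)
The paper offers no written proof of this theorem at all---it is simply announced as a consequence of the preceding one---so your reconstruction is being measured against the intended argument rather than an explicit one. Your decomposition is exactly the route the paper's machinery is set up for: termination of the bare remainder loop via lemma \ref{lemat19} (this is the preceding theorem), transfer of termination to the loop augmented with $q:=q+1$ by the same auxiliary rule the paper uses in the Addition subsection to add assignments to fresh variables, and the invariant $n=q*m+r$, established by $q:=0;\ r:=n$ and preserved because $r\geq m$ makes $r\stackrel{.}{\_}m$ a genuine subtraction. Steps (3)--(5) of your plan are sound, and your worry about combining termination with the invariant is correctly diagnosed as technical rather than conceptual.

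There is, however, one genuine gap, in step (1): the claim that $n>m$ forces $m$ to be positive is false---$n>0=m$ satisfies the antecedent. For $m=0$ the argument really does break down: the guard $r\geq 0$ is always true, $r\stackrel{.}{\_}0=r$ leaves $r$ unchanged, so the strict-decrease premise of lemma \ref{lemat19} (or of its $r\geq m$ analogue) is unprovable and the loop in fact does not terminate, making the algorithmic formula false. Lemma \ref{6.13}, equation \ref{wmn}, requires $x>y>0$, which is precisely the hypothesis you are missing. This defect is inherited from the theorem as stated (the preceding theorem shares it), and the repair is to strengthen the precondition to $n>m>0$, which is harmless for the intended use inside Euclid's algorithm where both arguments remain positive; but your proof should either add that hypothesis or at least not assert that it follows from $n>m$. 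A minor further slip: $R_1$ is modus ponens, not an implication-introduction rule, so the hypothesis $n>m$ should be carried along as an explicit antecedent throughout (as the paper does in Fact \ref{zakoncz}) rather than ``discharged'' at the end; a deduction theorem for a calculus with $\omega$-rules would itself require justification.
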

And the following  \begin{theorem}
 \[\mathcal{T}h_3\vdash \Bigl ( (n_0>m_0) \implies \left \{\begin{array}{l}
          n:=n_0;\ m:=m_0;\ r:=n;     \\
          \mathbf{while }\, r \neq 0 \, \mathbf{ do } \\
          \quad r:=n; \\
          \quad \mathbf{while }\, r \geq m \,\mathbf{ do} \\
          \qquad r :=r \stackrel{.}{\_}m \\
          \quad \mathbf{od}; \\
          \quad n:=m; \\
          \quad m:=r \\
          \mathbf{od}
         \end{array}
 \right \} (n=gcd(n_0,m_0) ) \Bigr ) \] \end{theorem}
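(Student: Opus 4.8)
The plan is to mirror the proof of Theorem~\ref{poprawny}: establish partial correctness and termination of the outer loop separately and then combine them by the usual passage from partial correctness plus termination to total correctness in the calculus of programs. Throughout I assume $m_0>0$ in addition to the stated $n_0>m_0$, a convention implicit here exactly as in the preceding statements of this section (if $m_0=0$ the inner \textbf{while} diverges). The invariant I would attach to the top of the outer \textbf{while} is
\[
 I:\quad gcd(n,m)=gcd(n_0,m_0)\ \ \land\ \ n>m\ \ \land\ \ (m=0 \Leftrightarrow r=0).
\]
On entry $n=n_0>m_0=m$, both $m=m_0$ and $r=n_0$ are nonzero, and $gcd(n,m)=gcd(n_0,m_0)$ trivially, so $I$ holds; moreover $r=n_0\neq 0$, so the body is executed at least once.

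\noindent\textbf{Step 1 (invariant preservation and exit).} Suppose $I$ holds together with the guard $r\neq 0$; then $m\neq 0$, hence $n>m>0$. The body runs $r:=n$ and then $\textbf{while }r\geq m\textbf{ do }r:=r\stackrel{.}{\_}m\textbf{ od}$; by the theorem proved just above (the division-by-repeated-subtraction loop), applicable since $n>m$, this terminates leaving $r$ equal to the remainder $n\bmod m$ with $0\leq r<m$. The assignments $n:=m;\ m:=r$ then produce a state whose $n$ is the old $m>0$, whose $m$ equals the old $n\bmod m$, hence $0\le m<$ old $m=n$, so again $n>m$, and whose $r$ equals the new $m$, so $(m=0\Leftrightarrow r=0)$ holds trivially. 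Finally $gcd$ is preserved: the new $gcd(m,\,n\bmod m)$ equals the old $gcd(n,m)$ by the classical identity $gcd(a,b)=gcd(b,a\bmod b)$, itself a $\mathcal{T}h_0$-theorem (hence a theorem of $\mathcal{T}h_3$) obtained from $gcd(a,b)=gcd(a\stackrel{.}{\_}b,b)$ by induction on the quotient. Thus $I$ is reestablished. When the outer loop exits we have $r=0$, hence $m=0$ by $I$, and then $gcd(n,0)=n$ together with $I$ gives $n=gcd(n_0,m_0)$, the required postcondition.

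\noindent\textbf{Step 2 (termination and conclusion).} For termination I would take $m$ as the variant. Since $I$ guarantees $m=0\Leftrightarrow r=0$ at the loop top, the outer loop $\textbf{while }r\neq 0\textbf{ do }\dots\textbf{ od}$ behaves, under $I$, like $\textbf{while }m\neq 0\textbf{ do }\dots\textbf{ od}$, and the computation of Step~1 shows that one pass from a state with $m=k\neq 0$ (and $n>m$) ends with $m=n\bmod m<k$. Hence each pass strictly decreases $m$, so by Lemma~\ref{lemat19} (whose semantic content is Corollary~\ref{skoncz}) the outer loop makes only finitely many passes; as each inner loop also terminates (the cited theorem), every computation of the whole program is finite. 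Combining this with the invariant $I$ of Step~1, exactly as in the proof of Theorem~\ref{poprawny}, yields
\[ \mathcal{T}h_3\vdash (n_0>m_0)\implies\{\,\dots\,\}(n=gcd(n_0,m_0)). \]

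\noindent\textbf{The main obstacle} is not a new piece of algorithmic logic but the bookkeeping forced by the shape of this program: the outer-loop guard is $r\neq 0$ while the quantity that actually decreases is $m$, and the first pass is atypical ($m=m_0$ whereas $r=n_0$, so the steady-state relation $m=r$ at the loop top fails on entry). The invariant $I$ must be chosen just strong enough to (i) reconcile the first pass with all later passes, (ii) supply $m>0$ and $n>m$ to the inner-loop theorem, and (iii) turn ``$r=0$ at exit'' into ``$m=0$ at exit'' so that $gcd(n,0)=n$ closes the argument. I also note that, as written, the statement needs $m_0>0$ --- otherwise the inner loop diverges --- a restriction I take to be inherited from the preceding theorems of this section.
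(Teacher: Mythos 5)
Your argument is correct, and it is essentially the only proof on record: the paper states this theorem without proof, merely placing it after the division-by-repeated-subtraction theorem with the remark that Lemma~\ref{lemat19} is being used, which is precisely the skeleton you flesh out (the preceding theorem handles the inner loop, Lemma~\ref{lemat19} handles the outer one, and a gcd-preserving invariant closes the argument as in Theorem~\ref{poprawny}). Two of your observations deserve emphasis because the paper glosses over them. First, Lemma~\ref{lemat19} as stated governs loops of the form $\textbf{while } x\neq 0\textbf{ do }M\textbf{ od}$ in which the guard variable itself is the decreasing quantity, whereas here the guard tests $r$ while the variant is $m$; your invariant clause $m=0\Leftrightarrow r=0$ is exactly the bridge needed to justify this mismatch, and some such program-equivalence step (or a restatement of Lemma~\ref{lemat19} with guard and variant decoupled) is genuinely required before the lemma applies --- this is real work, not bookkeeping. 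Second, you are right that the hypothesis $n_0>m_0$ alone is insufficient: for $m_0=0$ the inner loop $\textbf{while } r\geq m\textbf{ do }r:=r\stackrel{.}{\_}m\textbf{ od}$ never terminates, so the theorem as printed needs the additional assumption $m_0>0$; that is a defect of the statement rather than of your proof.
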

\section{Final remarks}
So far, we succeeded in developing one small chapter of algorithmic theory of natural numbers. The whole theory contains much more theorems. Some are first-order formulas, some are algorithmic formulas. The theorem on correctness of Euclid's algorithm is deduced from a couple of earlier theorems. The algorithmic theory of numbers does not begin nor does it end by this theorem. We claim that the calculus of programs (i.e. algorithmic logic) is a useful tool in building the algorithmic theory of numbers. Note the Fact \ref{hipo}. Think of its consequences. \smallskip \\
One has to take into consideration that the future development of algorithmic theory of numbers will demand to analyze more complicated algorithms -- the intuitive way of describing computations may happen to be  error prone or leading to paradoxes. On the other hand, it is very probable that, in programming, we shall encounter some erroneous (or fake) classes that pretend to implement the structure of unsigned integer (i.e. natural numbers)\footnote{In computer algebra and in computer geometry one may need Euclid's algorithm executing in a class that implements the algebra of polynomials or structure of segments in a three dimensional space. It is of importance to check that such a class is not a non-standard model similar to our class $Cn$ of Appendix A.}. \\ 
We hope, that programmers and computer scientists will note that \textit{proving of programs} need not to start a new, with every program one wishes to analyze. In the process of proving some semantical property $s_P$ of a certain program $P$, one can use lemmas and theorems on other semantical properties of programs, that have been proved earlier. We demonstrate this pattern within the correctness proof of Euclid's algorithm. In other words, we propose to develop the algorithmic theory of natural numbers. In fact, we did it in the book \cite{al:gm:as} p.\,155. Such a theory may be of interest also to mathematicians. One can note the appearance of books on algorithmic theory of numbers \cite{atn:bach, atn:lovasz}, algorithmic theory of graphs\cite{agt:mchugh, agt:gibbons}, etc. We are offering calculus of programs i.e. algorithmic logic as a tool helpful in everyday work of informaticians and mathematicians. 
\subsection*{Acknowledgments}
Grażyna Mirkowska read the manuscript and made several helpful suggestions. \medskip \\
I wrote this paper in 2015 and submitted it to Fundamenta Informaticae. I have got two reviews. One positive, critical and helpful. Anotherone, was aggresive and negative. After exchange of letters through the editor Damian Niwiński, the referee changed his opinion and suggested to publish  the paper. Then ... nothing. \\
I submitted the paper to Information and Computation. After 4 years of awaiting I have obtained a negative decision. The refereee said that he/she is canfused -- because it is not Hoare logic paper.  There was no evaluation of contents of the submitted paper.\\
The present author will appreciate any comments.
\section{Suplements}
\subsection{A  class implementing nonstandard model of theory $\mathcal{T}h_1$}
In this section we present a class \textit{Cn} that implements a programmable and non-standard model $\mathfrak{M}$ of axioms of addition theory   $\mathcal{T}h_1$ (cf. section. \ref{Th1}). We show that Euclid's algorithm, executed in this model has infinite computations.   \\

It is well known that the set of axioms of the theory $\mathcal{T}h_1$ has non-standard models. We are reminding that the system
\[\mathfrak{M}=\langle M, zero, one, s, add, subtract;equal, less \rangle \]
where  
\begin{itemize}
 \item The set $M$ is defined as follow  
\[\langle k, x \rangle \in M \equiv \{k \in \mathcal{Z} \wedge x \in \mathcal{R} \wedge x \geq 0 \wedge (x=0 \implies k \geq 0)\} \]
here  $k$ is an integer, $x$ is a non-negative rational number and when $x$ is 0 then $k \geq 0$ ,  
\item the operation addition is defined component wise, as usual in a product,
\item the successor operation is defined as follow $s(\langle k, x \rangle)=\langle k+1, x \rangle $,
\item constant zero 0 is $\langle 0,0 \rangle $.
\item relation \textit{less} has a type $\omega + (\omega^*+\omega)\cdot\eta$
\end{itemize}
is a non-standard and recursive(i.e. computable) model of axioms of theory $\mathcal{T}h_1$.

Now, class \textit{Cn} is written in Loglan programming language \cite{logl82}. This class defines and implements an algebraic structure $\mathfrak{C}$. The universe of the structure consists of all objects of the class $NCN$ (this is an infinite set). Operations in the structure $\mathfrak{C}$ are defined by the methods of  class $Cn$: \textit{add, equal, zero} and \textit{s}. All the axioms of the algorithmic theory $\mathcal{T}h_1$ are valid in the structure $\mathfrak{C}$, i.e. the structure is a model of the theory. We show that for some data the execution of  Euclid's algorithm is infinite.\smallskip \\
 \begin{theorem}
 	The algebraic structure $\mathfrak{C}$ which consists of the set $|NSN|$ of all objects of class \textit{NSN} together with the methods \textit{add}, \textit{s}, \textit{equal} and constant \textit{zero}, \textit{equal, \textit{less}} 
 	\[\mathfrak{C}= \langle |NSN|,zero, s, add, subtract, equal, less  \rangle \]
 	satisfies all axioms of natural numbers with addition operation, cf. section \ref{Th1}. 
 \end{theorem}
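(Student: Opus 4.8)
The plan is to split the verification in two. First, one shows that the Loglan class realises, up to isomorphism, the concrete structure $\mathfrak{M}=\langle M,\langle 0,0\rangle,s,+,\stackrel{.}{\_},=,<\rangle$ displayed just above, where $M=\{\langle k,x\rangle:k\in\mathcal{Z},\ x\in\mathcal{Q},\ x\geq 0,\ (x=0\Rightarrow k\geq 0)\}$; this is a matter of reading the code and checking that \textit{zero} returns $\langle 0,0\rangle$, \textit{s} returns $\langle k+1,x\rangle$, \textit{add} is the componentwise sum, \textit{equal} is equality of pairs, \textit{less} is the lexicographic order in which the rational coordinate dominates, and \textit{subtract} (together with the predecessor derived from it) is the componentwise difference when that stays in $M$ and $\langle 0,0\rangle$ otherwise. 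Second, one shows that $\mathfrak{M}$ satisfies the axioms \ref{Aksj1}--\ref{ind}. I will not belabour the first, code-level part.

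For the second part I would dispatch the nine non-scheme axioms \ref{Aksj1}--\ref{Aksj9} by direct computation in $\mathfrak{M}$. That $s(x)\neq 0$ holds because $\langle k+1,x\rangle=\langle 0,0\rangle$ would force $x=0$ and $k=-1$, which the side condition on $M$ excludes; injectivity of $s$ and the laws $x+0=x$, $x+s(y)=s(x+y)$, $P(0)=0$, $P(s(x))=x$, $z\stackrel{.}{\_}0=z$, $z\stackrel{.}{\_}s(x)=P(z\stackrel{.}{\_}x)$ are immediate from the componentwise definitions and the truncation convention. The one worth a line is $x<y\Leftrightarrow\exists_z\,y=x+s(z)$: expanding the right-hand side, $\langle k_1,x_1\rangle+s(\langle j,y\rangle)=\langle k_1+j+1,\ x_1+y\rangle$ runs, as $\langle j,y\rangle$ ranges over $M$, over exactly the pairs $\langle k_2,x_2\rangle$ with $x_2>x_1$, or with $x_2=x_1$ and $k_2>k_1$ --- which is precisely the order realised by \textit{less} (and this incidentally re-derives the claimed order type $\omega+(\omega^{*}+\omega)\cdot\eta$).

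The substantive step is the first-order induction scheme \ref{ind}, which cannot be checked instance by instance. Here I would observe that $M$ is exactly the non-negative cone of the lexicographically ordered abelian group $G=\mathcal{Z}\times\mathcal{Q}$ in which the $\mathcal{Q}$-coordinate dominates; that $G$ is discretely ordered with least positive element $1=s(0)$; and that $G$ is a $\mathcal{Z}$-group, since $nG=n\mathcal{Z}\times\mathcal{Q}$ for every $n\geq 1$, so $G/nG\cong\mathcal{Z}/n\mathcal{Z}$. By the completeness of Presburger arithmetic every $\mathcal{Z}$-group is elementarily equivalent to $\langle\mathcal{Z},+,-,0,1,<\rangle$, hence $G$ is; since the cone $\langle G^{\geq 0},+,0,1,<\rangle$ and $\langle G,+,-,0,1,<\rangle$ are mutually first-order interpretable (the cone by relativisation to $\{x\geq 0\}$, the group as formal differences of cone elements), $\mathfrak{M}$ is elementarily equivalent to the standard model $\langle N,+,0,1,<\rangle$; and because $P$ and $\stackrel{.}{\_}$ are first-order definable from $+,<,s$ and, by the first part, the class realises exactly those definitions, this equivalence persists in the full signature of $\mathcal{T}h_1$. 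Therefore every instance of the first-order induction scheme, being true in $\langle N,+,0,1,<\rangle$, is true in $\mathfrak{M}$, hence in $\mathfrak{C}$; with the nine equational checks this proves the theorem.

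The hard part is exactly the induction scheme: the equational axioms fall to one-line componentwise calculations, whereas for induction one must know that $\mathfrak{M}$ is a genuine (non-standard) model of Presburger arithmetic, which I would obtain by presenting $\mathfrak{M}$ as the non-negative cone of the $\mathcal{Z}$-group $\mathcal{Z}\times\mathcal{Q}$ and invoking quantifier elimination for Presburger arithmetic. The only residual care is to confirm, from the code, that \textit{subtract} and the derived predecessor really coincide with the intended truncated operations, so that the elementary equivalence carries the whole language of $\mathcal{T}h_1$.
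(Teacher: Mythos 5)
Your proposal is correct, but it supplies an argument where the paper supplies none: the paper's entire proof of this theorem is the sentence ``This is a slight modification of the arguments found in Grzegorczyk's book, p.\,239,'' so the verification of the Presburger axioms in $\mathfrak{M}$ is delegated wholesale to an external reference. Your decomposition --- routine componentwise checks for the nine equational/order axioms \ref{Aksj1}--\ref{Aksj9}, and then the identification of $M$ as the non-negative cone of the discretely ordered $\mathcal{Z}$-group $\mathcal{Z}\times\mathcal{Q}$ (with $G/nG\cong\mathcal{Z}/n\mathcal{Z}$), followed by Presburger completeness to transfer every instance of the induction scheme \ref{ind} from the standard model --- is exactly the standard way to fill that gap, and it correctly isolates the only non-trivial point, namely that the infinitely many induction instances cannot be checked one by one and must come from elementary equivalence with $\langle N,+,0,1,<\rangle$. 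Your observation that $P$ and $\stackrel{.}{\_}$ are first-order definable from $+,s,<$, so that the equivalence persists in the full signature of $\mathcal{T}h_1$, is a detail the paper never addresses. What you buy is a self-contained, checkable proof; what the paper buys is brevity at the cost of opacity.

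One caveat on the ``code-level part'' you decline to belabour: the printed body of \textit{less} compares \textit{intprt} \emph{first} when both arguments have non-zero rational part, and only falls back to the rational coordinates on a tie. That is the integer-dominant lexicographic order, not the rational-dominant one your argument (and the claimed order type $\omega+(\omega^{*}+\omega)\cdot\eta$, and the axiom $x<y\Leftrightarrow\exists_z\,y=x+s(z)$) requires; under the order as printed one gets, e.g., $\langle 0,2\rangle<\langle 1,1\rangle<s(\langle 0,2\rangle)$, so discreteness fails. This is evidently a transcription error in the listing (the example program in the same appendix has both branches of its conditional identical), but since your whole second part rests on \textit{less} realising the rational-dominant order, the theorem as stated about the literal class $Cn$ only holds after that method is corrected.
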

 \begin{proof} This is a slight modification of the arguments found in Grzegorczyk's book \cite{art:ag}p.239.  \end{proof}
\begin{footnotesize}
\quad\begin{tabular}{l} \hline
\textbf{unit}\ Cn:   \ \textbf{class}; \\

\quad $\left [\begin{array}{l}\textbf{unit} \  NSN:\     \textbf{class} (intpart,nomprt, denom: integer); \\
  \textbf{begin} \\
 \ \ \  \textbf{if} \ nomprt= 0  \ \textbf{and} \ intpart <0\,  \textbf{orif}   \ nomprt*denom < 0\ \textbf{orif}    
  \ denom =0 \\
  \ \ \   \textbf{then \ raise} \ Exception \ \textbf{fi} \\
  \textbf{end} \ NSN; 
\end{array}\right .$\smallskip \\
\quad $\left [\begin{array}{l}\textbf{unit} \  add:  \textbf{function} (n,m: NSN) : NSN; \\
  \textbf{begin} \     result:=  \textbf{new}\  NSN(n.intpart+m.intpart, \\ \ \ \    n.nomprt*m.denom+n.denom*m.nomprt, n.denom*m.denom)\    \textbf{end} \ add; \end{array}\right .$\smallskip \\
\quad $\left [\begin{array}{l}\textbf{unit} \  subtract: \ \textbf{function} (n,m: NSN) : NSN; \\
  \textbf{begin} \  \textbf{if}\ less(n,m)\ \textbf{then}\ result:= zero \\   \ \ \ \ \   \textbf{else}\ result := \textbf{new} NSN(n.intprt - m.intprt,\\ \ \ \ \ \ \ \ \  n.nomprt * m.denom - n.denom * m.nomprt, n.denom * m.denom)\ \textbf{fi}  \\  \textbf{end} \ subtract; \end{array}\right .$\smallskip \\
\quad $\left [\begin{array}{l}\textbf{unit} \ equal:  \textbf{function} (n,m: NSN): Boolean; \\
  \textbf{begin} \  \quad  result := (n.intpart=m.intpart)\ \textbf{and}\\ \ \ \  (n.nomprt*m.denom=n.denom*m.nomprt)\   \textbf{end} \ equal; \end{array}\right .$\smallskip \\
\quad $\left [\begin{array}{l}\textbf{unit} \ zero:  \textbf{function}: NSN; \\
  \textbf{begin} \  result := \textbf{new}\ NSN(0,0,1)\   \textbf{end} \ zero; \end{array}\right .$\smallskip\\
\quad $\left [\begin{array}{l}\textbf{unit} \ s:  \textbf{function}(n: NSN): NSN; \\
  \textbf{begin} \  result := \textbf{new}\ NSN(n.intpart +1, n.nomprt, n.denom)  \textbf{ end} \  s; \end{array}\right .$\smallskip \\
\quad $\left [\begin{array}{l} \textbf{unit}\ less:\  \textbf{function} (n,m: NSN) : Boolean; \\
  \textbf{begin} \  \textbf{if}\ n.nomprt=0\ \textbf{andif}\ m.nomprt=0\ \textbf{then}\ result := n.intprt < m.intprt \\   \ \ \ 
 \textbf{else}\  \textbf{if}\  n.nomprt=0\ \textbf{andif}\ m.nomprt >0\ \textbf{then}\   result:= true \\   \ \ \ \ \textbf{else}\ \textbf{if}\  n.nomprt>0\ \textbf{andif}\ m.nomprt =0\ \textbf{then}\ result := false \\   \ \ \ \ \ \ \textbf{else}\  \textbf{if}\  n.intprt =/=  m.intprt\ \textbf{then}\ result := n.intprt < m. intprt\\   \ \ \ \ \ \ \ \  \textbf{else}\ result :=  n.nomprt*m.denom<n.denom*m.nomprt\ \textbf{fi} \ \textbf{fi}\  \textbf{fi}\ \textbf{fi}\  \textbf{end}\  less \end{array}\right .$\smallskip \\
\textbf{end} \ Cn;\\ \hline
 \end{tabular} \end{footnotesize}  \bigskip \\

\noindent Have a look at the following example and verify  that Euclid's algorithm  has infinite computations, i.e. does not halt, when interpreted in the data structure $\mathfrak{C}$.

\begin{example}
Suppose that the values of variables $x,y,z$ are determined by the execution of three instructions  \\
\begin{footnotesize}\hspace*{0.3cm} $x:= \textbf{new}\ NSN(12,0,1);$ \ \
\hspace*{0.3cm} $y := \textbf{new}\ NSN(15, 0,2);$ \ \
\hspace*{0.3cm} $z := \textbf{new}\ NSN(15, 1,2);$                      \end{footnotesize}\\
Now, the computation of the algorithm  $E(m,n)$:
{\footnotesize \[ \left\{ \begin{array}{l}\mathbf{while\ not}\ equal(m,n)\  \mathbf{do} \\ \quad \mathbf{if}\ less(m,n) \\ \quad \mathbf{then}\ n:=subtract(n,m) \\ \quad \mathbf{else}\ n:=subtract(n,m) \\ \quad  \mathbf{fi} \\ \mathbf{od} \end{array}\right\}                                                                                                                                                                 \] }
 for $m=x$ and $n=y$ is finite and results is \begin{footnotesize}$\textbf{new}\ NSN(3,0,1) $                                                                                                    \end{footnotesize}.\smallskip\\
An attempt to compute $E(x,z)$ results in an infinite computation, or more precisely, in a computation that can be arbitrarily prolonged, as it is shown in the table below. \\
\begin{center}
\begin{footnotesize}\begin{tabular}{l}
States of memory during a computation\\
\begin{tabular}{ll} \hline 
\textbf{n} & \textbf{m}   \\ \hline \hline
 \textbf{new}\ NSN(12,0,1) & \textbf{new}\ NSN(15, 1,2) \\
 \textbf{new}\ NSN(12,0,1) & \textbf{new}\ NSN(3, 1,2) \\
 \textbf{new}\ NSN(12,0,1) & \textbf{new}\ NSN(-9, 1,2) \\
 \textbf{new}\ NSN(12,0,1) & \textbf{new}\ NSN(-21, 1,2) \\
 \textbf{new}\ NSN(12,0,1) & \textbf{new}\ NSN(-33, 1,2) \\
 ... & ... \\
 \textbf{new}\ NSN(12,0,1) & \textbf{new}\ NSN(15-i*12, 1,2) \\
 ... & ...   \\ \hline
\end{tabular}             
\end{tabular}             \end{footnotesize}\end{center}
\end{example}   
Class $Cn$ which implements a non-standard programmable model  of theory $\mathcal{T}h_1$ has more applications.
\begin{itemize}
	\item  An application of class  $ Cn $ leads to the construction of non-standard model of   elementary theory of stacks and to  decidability of the theory , cf. \cite{MSST:path}
	\item We used the  class to show that the Collatz conjecture is not expressible by any first-order formula and is independent from elementary theory of natural numbers,  c.f. \cite{CCbT}
\end{itemize}

\subsection{Axioms and inference rules of program calculus $ \mathcal{AL} $}
For the convenience of reader we cite the axioms and inference rules of algorithmic logic.\\
\textbf{Note}. \textit{Every axiom of algorihmic logic is a  tautology. \\
Every inference rule of AL is sound.} \cite{al:gm:as}\bigskip\\
\begin{large}\textbf{Axioms}      \end{large}  
\begin{trivlist}
\color{black}{
\item[] \textit{axioms of propositional calculus}
\item[$Ax_1$] $((\alpha\Rightarrow\beta
)\Rightarrow((\beta\Rightarrow\delta)\Rightarrow(\alpha
\Rightarrow\delta)))$
\item[$Ax_2$] $(\alpha\Rightarrow(\alpha
\vee\beta))$
\item[$Ax_3$] $(\beta\Rightarrow(\alpha\vee\beta))$%
\item[$Ax_4$] $((\alpha\Rightarrow\delta)~\Rightarrow((\beta
\Rightarrow\delta)~\Rightarrow~((\alpha\vee\beta
)\Rightarrow\delta)))$
\item[$Ax_5$] $((\alpha\wedge\beta
)\Rightarrow\alpha)$
\item[$Ax_6$] $((\alpha\wedge\beta
)\Rightarrow\beta)$
\item[$Ax_7$] $((\delta\Rightarrow\alpha
)\Rightarrow((\delta\Rightarrow\beta)\Rightarrow(\delta
\Rightarrow(\alpha\wedge\beta))))$
\item[$Ax_8$] $((\alpha
\Rightarrow(\beta\Rightarrow\delta))\Leftrightarrow((\alpha\wedge\beta
)\Rightarrow\delta))$
\item[$Ax_9$] $((\alpha\wedge\lnot\alpha
)\Rightarrow\beta)$
\item[$Ax_{10}$] $((\alpha\Rightarrow
(\alpha\wedge\lnot\alpha))\Rightarrow\lnot\alpha)$
\item[$Ax_{11}$] $(\alpha\vee\lnot\alpha)$ }
\color{black}{
\item[] \textit{axioms of predicate calculus}
\item[$Ax_{12}$] $ ((\forall x)\alpha(x)\Rightarrow \alpha(x/\tau)))$ \ \ \newline
\hspace*{1.5cm}  {\small where term }$\tau${\small\ is of the same type as
the variable x}
\item[$Ax_{13}$] $(\forall x)\alpha(x)\Leftrightarrow\lnot
(\exists x)\lnot\alpha(x)$ }
\color{black} {
\item[] \textit{axioms of calculus of programs} 
\item[$Ax_{14}$] $K((\exists x)\alpha
(x))\Leftrightarrow(\exists y)(K\alpha(x/y))$ \begin{small}for $y\notin V(K)$                                                       \end{small}
\item[$Ax_{15}$] $K(\alpha\vee\beta)\Leftrightarrow((K\alpha)\vee(K\beta))$
\item[$Ax_{16}$] $K(\alpha\wedge\beta)\Leftrightarrow((K\alpha)\wedge(K\beta))$%
\item[$Ax_{17}$] $K(\lnot\alpha)\Rightarrow\lnot(K\alpha)$%
\item[$Ax_{18}$] $((x:=\tau)\gamma\Leftrightarrow(\gamma(x/\tau)\wedge
(x:=\tau)true))~\wedge((q:=\gamma\prime)\gamma\Leftrightarrow\gamma(q/\gamma
\prime))$
\item[$Ax_{19}$] \textbf{begin} $K;M$ \textbf{end} $\alpha
\Leftrightarrow K(M\alpha)$
\item[$Ax_{20}$] \textbf{if} $\gamma$ \textbf{then} $K
$ \textbf{else} $M$ \textbf{fi} $\alpha\Leftrightarrow((\lnot\gamma\wedge M\alpha
)\vee(\gamma\wedge K\alpha))$
\item[$Ax_{21}$] \textbf{while}~
$\gamma$~\textbf{do}~$K$~\textbf{od}~$\alpha\Leftrightarrow((\lnot\gamma\wedge\alpha
)\vee(\gamma\wedge K($\textbf{while}~$\gamma$~\textbf{do}~$K$~\textbf{od}$%
(\lnot\gamma\wedge\alpha))))$
\item[$Ax_{22}$] $\displaystyle{\bigcap K\alpha\Leftrightarrow
(\alpha\wedge(K\bigcap K\alpha))}$
\item[$Ax_{23}$] $\displaystyle{\bigcup K\alpha
\equiv(\alpha\vee(K\bigcup K\alpha))}$ }
\end{trivlist}
\begin{large}\textbf{Inference rules}\end{large} \\
\begin{trivlist}
\color{black}{
\item[] \textit{propositional calculus}
\item[$R_1$]\qquad $\dfrac{\alpha ,(\alpha \Rightarrow \beta )}{\beta }$ } \qquad (also known as modus ponens)
\color{black} {
\item[] \textit{predicate calculus}
\item[$R_{6}$]\qquad $\dfrac{(\alpha (x)~\Rightarrow ~\beta )}{((\exists
x)\alpha (x)~\Rightarrow ~\beta )}$
\item[$R_{7}$]\qquad $\dfrac{(\beta ~\Rightarrow ~\alpha (x))}{(\beta
\Rightarrow (\forall x)\alpha (x))}$ }
\color{black} {
\item[]  \textit{calculus of programs AL}
\item[$R_{2}$]\qquad $\dfrac{
(\alpha \Rightarrow \beta )}{(K\alpha \Rightarrow K\beta )
}$
\item[$R_{3}$]\qquad  $\dfrac{\{s({\bf if}\ \gamma \ {\bf then}\ K\ {\bf fi})^{i}(\lnot
\gamma \wedge \alpha )\Rightarrow \beta \}_{i\in N}}{(s({\bf while}\
\gamma \ {\bf do}\ K\ {\bf od}\ \alpha )\Rightarrow \beta )}$
\item[$R_{4}$]\qquad  $\dfrac{\{(K^i\alpha \Rightarrow \beta )\}_{i\in N}}{(\bigcup
K\alpha \Rightarrow \beta )}$
\item[$R_{5}$]\qquad $\dfrac{\{(\alpha \Rightarrow K^i\beta )\}_{i\in N}}{(\alpha
\Rightarrow \bigcap K\beta )}$ }
\end{trivlist}

In rules $R_6$ and $R_7$, it is assumed that $x$ is a variable which is not free in $\beta $, i.e. $x \notin  FV(\beta )$. The rules are known as the rule for
introducing an existential quantifier into the antecedent of an implication
and the rule for introducing a universal quantifier into the suc\-ces\-sor
of an implication. The rules $R_4$ and $R_5$ are algorithmic counterparts of rules
$R_6$ and $R_7$. They are of a different character, however, since their sets of
premises are in\-finite. The rule $R_3$ for introducing a \textbf{while} into the
antecedent of an implication of a similar nature. These three rules are
called $\omega $-rules.
The rule $R_{1}$ is known as \textit{modus ponens}, or the \textit{cut}-rule.

In all the above schemes of axioms and inference rules, $\alpha $, $\beta $, 
$\delta $ are arbi\-trary for\-mulas, $\gamma $ and $\gamma ^{\prime }$ are
arbitrary open formulas, $\tau $ is an arbitrary term, $s$ is a finite
se\-quence of assignment instructions, and $K$ and $M$ are arbitrary
programs. \smallskip \\


\end{document}